\documentclass{article}
\usepackage[margin=1in]{geometry}

\usepackage{amssymb}
\usepackage{amsmath}
\usepackage{amsthm}
\usepackage{bbm}
\usepackage{mathtools}
\usepackage{enumitem}
\usepackage{todonotes}
\usepackage{algorithm}
\usepackage{algpseudocode}
\usepackage{comment}
\usepackage{titling}
\usepackage{multicol}
\usepackage{slashed}
\usepackage{tikz}
\usepackage{adjustbox} 
\usepackage{tikz-3dplot}
\usetikzlibrary{backgrounds}
\usetikzlibrary{plotmarks}
\usetikzlibrary{shapes, calc}
\usepackage{latexsym}

\usepackage{marginnote}
\usepackage{caption}
\usepackage{hyperref}
\usepackage{cleveref}
\usepackage{thmtools}

\newcommand{\randomwalks}{\textsc{RandomWalks}}
\newcommand{\approxdist}{\textsc{SpectralDistance}}
\newcommand{\SDP}{\textsc{SpectralDotProduct}}
\newcommand{\Color}{\textsc{Color}}
\newcommand{\prepro}{\textsc{FindRepresentatives}}

\newtheorem{thm}{Theorem}[section]
\newtheorem{definition}[thm]{Definition}
\newtheorem{lemma}[thm]{Lemma}
\newtheorem{claim}[thm]{Claim}

\newtheorem{remark}[thm]{Remark}
\newtheorem{cor}[thm]{Corollary}

\newtheorem{obs}[thm]{Observation}
\newtheorem*{term*}{Terminology}
\newtheorem*{model*}{Input model}
\newtheorem*{goal*}{Algorithmic goal (AG)}
\newtheorem*{sett*}{Parameter setting (PS)}
\newtheorem*{question*}{Question (Q)}
\newtheorem*{lemma*}{Lemma}

\newcounter{todocounter}

\DeclareMathOperator*{\argmin}{arg\,min}
\DeclareMathOperator*{\median}{median}

\newcommand{\R}{\ensuremath{\mathbb{R}}}
\DeclareMathOperator*{\E}{\ensuremath{\mathbb{E}}}
\DeclareMathOperator*{\Var}{\ensuremath{Var}}

\newcommand{\1}{\ensuremath{\mathbbm{1}}}
\newcommand{\U}{\ensuremath{U_{[k-1]}}}
\newcommand{\Ustar}{\ensuremath{U_*}}
\newcommand{\Astar}{\ensuremath{\bar A_{\star}}}

\newcommand{\diag}{\ensuremath{\text{diag}}}
\newcommand{\spn}{\ensuremath{\mathrm{span}}}
\newcommand{\supp}{\ensuremath{\mathrm{supp}}}

\newcommand{\nei}{\ensuremath{\mathrm{N}}}
\newcommand{\coeff}{\ensuremath{\mathrm{coeff}}}

\newcommand{\apx}{\ensuremath{\mathrm{apx}}}

\title{Recovering Planted Colorings in Sublinear Time}
\author{Weronika Wrzos-Kaminska \\ EPFL}
\date{}
\begin{document}
\begin{titlingpage}

\maketitle

\begin{abstract}
\noindent 
We give a sublinear algorithm for the planted $k$-coloring problem. Given an expander $G$ with a planted coloring, the goal is to efficiently determine the color class of a given vertex. 

We work in the adversarial planted coloring model of David and Feige [STOC 2016], where an adversary chooses a $d$-regular spectral $\lambda$-expander $G$ on $n$ vertices and plants a balanced $k$-coloring by partitioning the vertices into $k$ equal parts and deleting all edges within each part. This model generalizes the earlier random graph models studied by  Blum and Spencer [J. Algorithms 1995] and  Alon and Kahale [STOC 1994]. 

We give the first sublinear-time algorithm for recovering planted colorings in this model.
The algorithm has preprocessing time and space $\widetilde O\left(n^{1/2+O(1/\log(d/\lambda))}\right)$, and produces a data structure that answers color queries in time $\widetilde O\left(n^{1/2+O(1/\log(d/\lambda))}\right)$, such that the resulting labeling agrees with the planted coloring on all but an $O(\sqrt{\lambda /d})$ fraction of vertices, up to a permutation of the $k$ colors. 

The algorithm gives sublinear-time inner product access to the bottom eigenspace of the normalized adjacency matrix, which allows us to adapt the classical spectral approach of Alon and Kahale in sublinear time.
\end{abstract}

\end{titlingpage}
\setcounter{page}{0}
\setcounter{tocdepth}{2}
\tableofcontents

\newpage

\section{Introduction}
Graph coloring is a fundamental problem in theoretical computer science. In the \(k\)-coloring problem, the goal is to assign each vertex a color from \([k]\) so that adjacent vertices receive different colors.
Finding $k$-colorings in graphs for $k \geq 3$ is among the first computational problems shown to be NP-hard \cite{Karp72, NPcomplete}. This has motivated extensive work on approximation algorithms and on identifying assumptions under which the problem can be solved efficiently \cite{Wigderson83, BR90, Blum94,BlumSpencer, BK97, AK97, KMS98, KLS00, Khot01, AC06, DMR06, AG11,  DF16, KT17, Kumar17, KTY24, Buhai25, Hsieh26}. 

One influential line of work studies planted coloring models, where a hidden coloring is embedded into a graph, and the task is to recover (approximately) the planted solution. Blum and Spencer \cite{BlumSpencer} and Alon and Kahale \cite{AK97} considered random graph models with a planted $k$-coloring, and showed that polynomial-time spectral algorithms can recover the planted $k$-coloring with high probability. David and Feige \cite{DF16} extended this to a more challenging adversarial setting, in which an adversary chooses an expander and plants a balanced 
$k$-coloring by deleting all edges within each color class. They showed that, even in this adversarial setting, the spectral approach of Alon and Kahale recovers the planted coloring on a $0.99$ fraction of the vertices, while exact recovery is NP-hard.

While planted coloring models have been extensively studied in the polynomial-time regime, the question of \emph{sublinear-time} recovery remains largely unexplored. This motivates the central question of this work:
\vspace{0.1in}

\fbox{
\parbox{0.9\textwidth}{
\begin{center}
Is it possible to recover a planted $k$-coloring in sublinear time?
\end{center}
}
}

\vspace{0.1in}

Of course, any algorithm that explicitly outputs a coloring requires $\Omega(n)$ time simply to write down the labels. We therefore focus on constructing a \emph{coloring oracle}: a small-space randomized data structure that quickly answers queries of the form “what color does vertex $v$ have?”. The oracle must be \emph{consistent}, meaning that its answers agree with a single underlying assignment of colors (not necessarily a proper $k$-coloring), and \emph{approximately correct}, meaning that the oracle disagrees with the planted coloring on only a small fraction of vertices, up to a permutation of colors (see \Cref{def:oracle} for a formal definition of coloring oracle). 

We consider a variant of the adversarial planted model of $\cite{DF16}$. In this model, an adversary chooses a $d$-regular spectral $\lambda$-expander (i.e., eigenvalues of its adjacency matrix, other than the top eigenvalue \(d\), have absolute value at most \(\lambda\)). The adversary also chooses a balanced partitioning of the vertices into $k$ color classes, for a constant $k$.  The planted instance is obtained by deleting all the monochromatic edges, i.e., all edges that have both endpoints in the same color class. Additionally, we require that the partitioning is chosen so that the degrees in the planted instance are balanced, in the sense that all vertices have degree $\Omega(d)$ after deleting the monochromatic edges (see \Cref{def:model} for a formal definition of the input model).  

A $d$-regular graph has $\lambda_1 = d$ and that the second largest eigenvalue in absolute value satisfies $\lambda = \Omega(\sqrt{d})$ \cite{Nilli}.
Random \(d\)-regular graphs are essentially best possible spectral expanders, satisfying
\(\lambda = O(\sqrt{d})\) with high probability \cite{FO05, FK81, FKS89}. The adversarial planted coloring model therefore generalizes
random planted models, in which the host graph is drawn at random and/or the planted partitioning is chosen uniformly at random. 

\paragraph{Our contribution.}
We give the first sublinear-time algorithm for the planted coloring problem in this model. Our result works in the adversarial planted model, and therefore also applies to the corresponding random planted models. Specifically, in the adversarial planted $k$-coloring model, we construct a coloring oracle with $\widetilde O(n^{1/2+O(1/\log(d/\lambda)})$ preprocessing time and space, where $n$ is the number of vertices. The oracle answers each color query in $\widetilde O(n^{1/2+O(1/\log(d/\lambda))})$ time and, with high constant probability, agrees with the planted coloring on at least a $0.99$ fraction of the vertices.
\begin{samepage}
\begin{thm}\label{thm:main-result}
Let \(k \geq 3\) be a constant, let $c_k$ 
be a sufficiently  small constant (depending on $k$), and let $C_k$ be a sufficiently large constant (depending only on $k$). For every $C_k < d<n $ and every $\lambda \leq c_k d$, there exists a randomized algorithm for the adversarial planted \(k\)-coloring model
(Definition~\ref{def:model}) that constructs a coloring oracle using
\(\widetilde O\!\left(n^{1/2+O(1/\log(d/\lambda))}\right)\) preprocessing time and space, where 
$n$ denotes the number of vertices in the graph. For every query vertex $v$, the oracle outputs a color in $[k]$ in time \(\widetilde O\!\left(n^{1/2+O(1/\log(d/\lambda))}\right)\), and its answers are consistent across queries. With probability at least $0.9$, the oracle agrees with the planted coloring (up to a permutation of the colors) on all but at most $ O\!\left(\sqrt{\frac{\lambda }{d}}n\right)$ vertices. 
\end{thm}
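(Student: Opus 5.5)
The plan is to adapt the Alon--Kahale spectral approach: the bottom eigenvectors of the normalized adjacency matrix encode the planted coloring, and we gain sublinear-time access to them through random walks.

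\textbf{Step 1 (structure of the planted instance).} Let $A$ be the adjacency matrix of the planted graph $G$. Write $A = A_H - \sum_i A_H[V_i]$, where $A_H$ is the host expander and $V_1,\dots,V_k$ are the color classes. Since $A_H = \frac{d}{n}J + E$ with $\|E\|\le \lambda$, the expected structure is determined by the $k$ class indicators. Concretely, $M := \frac{d}{n}\bigl(J - \sum_i \1_{V_i}\1_{V_i}^\top\bigr)$ has eigenvalue $d(1-1/k)$ on $\1$ and eigenvalue $-d/k$ with multiplicity $k-1$, the corresponding eigenspace being spanned by the centered class indicators. We will show $\|A - M\| = O(\lambda)$ after normalization. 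For this we use the degree-balance assumption (all degrees are $\Theta(d)$) to pass to $D^{-1/2}AD^{-1/2}$, and we bound $\|A_H[V_i]-\frac{d}{n}\1_{V_i}\1_{V_i}^\top\|\le\lambda$ via the expander mixing lemma. By Davis--Kahan, the bottom $(k-1)$-dimensional eigenspace $\U$ of the normalized adjacency matrix is then $O(\lambda/d)$-close in Frobenius norm (squared) to the span of the centered indicators. Consequently, for all but an $O(\sqrt{\lambda/d})$ fraction of vertices, the row embedding $f_v = \U^\top \1_v$ lies within distance $\ll \sqrt{k/n}$ of its class center $\mu_{c(v)}$. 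The centers are pairwise separated by $\Omega(\sqrt{1/n})$. The fraction bound follows from Markov's inequality applied to $\sum_v \|f_v-\mu_{c(v)}\|^2 = O((\lambda/d)k)$.

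\textbf{Step 2 (sublinear dot-product access).} To estimate $\langle f_u, f_v\rangle$, we use random walks on the lazy walk matrix of the \emph{signed} spectrum. The bottom eigenvalues are $\approx -1/k$ (after normalization), and all eigenvalues other than the top one and the bottom $k-1$ have absolute value $O(\lambda/d)$. We therefore take a polynomial $p$ of degree $t=O(\log n/\log(d/\lambda))$, for instance $p(x) = \bigl((x+ \text{shift})\text{-type}\bigr)^t$ or simply $x^{t}$ with the top eigenvector removed. This amplifies the bottom eigenvalues relative to the bulk: $(\lambda/d)^t / (1/k)^t \le n^{-C}$ once $t\approx C\log n/\log(d/(k\lambda))$. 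We then estimate $\1_u^\top p(N)\1_v$ by colliding random walks of length $t$ from $u$ and $v$, as in the spectral-clustering oracles of Gluch et al.\ / Czumaj et al. The top eigenvector contribution is known exactly, since degrees are observable, and is subtracted. Because the entries are $\approx 1/n$, collision estimation requires $\sqrt{n}\cdot (k)^{O(t)} = n^{1/2+O(1/\log(d/\lambda))}$ walks. The per-walk cost is bounded by the same quantity.

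\textbf{Step 3 (oracle).} In preprocessing, we sample $O(k\log k)$ random vertices and compute their pairwise estimated dot products. We cluster these into $k$ groups of representatives: a random vertex is good with probability $1-O(\sqrt{\lambda/d})$, and good vertices in the same class have dot product $\approx \|\mu\|^2$ while those in different classes have dot product $\approx$ a noticeably smaller value. A query $v$ is then assigned the class whose representative maximizes the estimated $\langle f_v, f_r\rangle$, or equivalently minimizes the estimated distance. Consistency is ensured by fixing the random seeds used for all walks, so that answers are deterministic functions of $v$. Correctness holds on all good vertices, up to permutation, with probability $0.9$ by a union bound over the preprocessing failures.

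The main obstacle is Step 2. The bulk eigenvalues must be suppressed while preserving a usable contrast among the $k-1$ bottom directions. The variance of the collision estimator must also be controlled when the matrix entries are only bounded through $\ell_\infty$ properties of the walk distribution. My first attempt would be to bound $\|p(N)\1_v\|_2^2\le n^{-1}k^{O(t)}$ for every $v$ using regularity, and then to apply Chebyshev's inequality together with a median trick.
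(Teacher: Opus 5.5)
Your architecture matches the paper's: a structure lemma, collision estimates for a polynomial in $\Astar=\bar A-e_1e_1^\top$, representatives taken from an $O(k\log k)$ sample, and nearest-representative queries. However, Step~2 fails at exactly the point the paper's main technical result addresses, namely the choice of polynomial. You only check that $x^t$ suppresses the bulk relative to the bottom eigenvalues.

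The $k-1$ bottom eigenvalues are not all equal to $\alpha=-\frac{1}{k-1}$. (It is $-\frac{1}{k-1}$, not $-\frac1k$, since the walk normalizes by degrees $\approx\frac{(k-1)d}{k}$.) They are only known to lie in $[\alpha-\epsilon,\alpha+\epsilon]$ with $\epsilon$ of order $\sqrt{\lambda/d}$. Nothing forces them together: $A-M$ (with your $M$) is a perturbation of norm $\Theta(\lambda)$ that can split the $(k-1)$-fold eigenvalue.

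On this interval $(x/\alpha)^t$ ranges between $(1-\epsilon/|\alpha|)^t$ and $(1+\epsilon/|\alpha|)^t$, roughly $e^{\pm t\epsilon/|\alpha|}$. Meanwhile, pushing the $n-k$ bulk eigenvalues below the $1/n$ scale forces $t=\Omega(\log n/\log(1/\epsilon))$. Hence $p(\Astar)^2\approx\U\U^\top$ only when $\epsilon\log n/\log(1/\epsilon)\ll 1$, which is essentially Case~1 in the proof of \Cref{thm:polynomial}.

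Outside that regime, your estimator approximates $f_u^\top W^2 f_v$ with $W=\diag\bigl((\lambda_i/\alpha)^t\bigr)_{i\ge n-k+2}$, whose condition number can be as large as $n^{\Theta(\epsilon/\log(1/\epsilon))}$. This includes constant $d>C_k$, where $\lambda=\Omega(\sqrt d)$ makes $\lambda/d$ a constant. The guarantee of \Cref{lemma:poly_to_emnedding} is then lost, and the thresholds of order $k/n$ are meaningless. Since the bottom eigenbasis can be oriented arbitrarily relative to the simplex of class centers, two reweighted centers can come closer than the reweighted within-class spread. Tuning $t$ does not help: a smaller $t$ loses the bulk suppression, and a larger $t$ amplifies the distortion.

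What is missing is a polynomial with three properties: it is within $\delta$ of $1$ on all of $[\alpha-\epsilon,\alpha+\epsilon]$, at most $n^{-2}$ on $[-\epsilon,\epsilon]$, and has coefficients bounded by $n^{O(1/\log(d/\lambda))}$. The coefficient bound matters because the number of walks grows like $(\max|\coeff(p)|)^4$. The paper takes $p(x)=(x/\alpha)^t\,Q\bigl(\frac{1-x/\alpha}{\epsilon/|\alpha|}\bigr)$ with $t=\Theta(\log n/\log(1/\epsilon))$, where $Q$ is the degree-$O(\epsilon\log n)$ truncation of the Chebyshev expansion of $(1-\frac{\epsilon}{|\alpha|}z)^{-t}$ on $[-1,1]$. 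On the interval, $Q$ cancels the variation of $(x/\alpha)^t$. The factor $x^t$ still gives $|p|\le n^{-2}$ near $0$. \Cref{claim:coeff}, together with the rescaling factor $|\alpha|^{-\deg p}$, keeps the coefficients at $n^{O(1/\log(1/\epsilon))}$. Without such a construction, your argument at best covers the regime $\sqrt{\lambda/d}\lesssim\log(d/\lambda)/\log n$.

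There is also a secondary gap in Step~1. Degrees are only $\Theta(d)$, so $D$ is not close to a scalar matrix in operator norm. Thus $\|A-M\|=O(\lambda)$ does not transfer to $\bar A$ by ``normalizing'', and Davis--Kahan cannot be applied as stated. The paper handles this in three steps:
\begin{enumerate}
\item It bounds the bulk of the spectrum of $\bar A$ via Courant--Fischer, which needs only $\deg(v)=\Omega(d)$.
\item It locates the bottom eigenspace through the residual bound $\|(\bar A+\frac{1}{k-1}I)D^{1/2}x\|_2^2=O(\lambda/d)\|D^{1/2}x\|_2^2$. This uses that all but $O(\lambda n/d)$ vertices split their neighbors evenly among the other classes.
\item It then compares $D^{1/2}\1_{C_i}$ with $\1_{C_i}$.
\end{enumerate}
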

\end{samepage}

In particular, the running time is $\approx n^{1/2 + o(1)}$ when $\lambda/d = o(1)$. For dense graphs with $d = n^{\delta}$ and $\lambda = O(\sqrt{d})$ for a constant $\delta$, the running time becomes  $\widetilde O(n^{1/2})$. 

As in prior works on planted coloring, we treat the number of colors \(k\) as a constant. This assumption is used both in the spectral concentration bounds and in our sublinear implementation (see Remark~\ref{rem:growing-k} for more details). 

Our algorithm may be viewed as a sublinear-time implementation of the classical spectral algorithm of \cite{AK97}. 
The key technical ingredient is providing sublinear time dot-product access to the bottom eigenspace of the adjacency matrix, which we achieve by adapting recent techniques from spectral clustering  \cite{KKW}. 

\paragraph{Related work on planted coloring.} Blum and Spencer \cite{BlumSpencer} and Alon and Kahale \cite{AK97} studied the problem of planted $k$-coloring in random graphs, and showed that exact recovery is possible in polynomial time. \cite{DF16} studied the role of randomness in the planted model, and showed that exact recovery is possible if the coloring is planted at random in an adversarial spectral expander, and that approximate recovery is possible even when the coloring is adversarial. \cite{Kumar17} proved that approximate recovery is possible in low threshold rank graphs for pseudorandom colorings. Recently, \cite{BHK25, Buhai25, Hsieh26} further considered the setting in which the coloring is planted in a one-sided expander or a low threshold rank graph. 

\paragraph{Related work on sublinear spectral algorithms.}
Our techniques are closely related to work on sublinear spectral clustering \cite{CzumajPS15, CKK18, GKLMS21, SP23, FKKLMW26, KKW}. Most relevant to our work is \cite{KKW}, which uses a low-degree polynomial of the random walk matrix to provide access to the top eigenspace of the adjacency matrix under a clusterability assumption. Although this setting differs from ours, we adopt their general approach.
\cite{peng2023sublinear} and \cite{Maxcut24} applied spectral techniques to approximate max-cut in sublinear time on expanders and clusterable graphs. While these works rely on structural properties of the bottom eigenspace in their analysis, the algorithms themselves do not provide direct access to this eigenspace. The work of \cite{peng2023sublinear} also showed that testing whether an expander is 3-colorable versus far from $3$-colorable requires \(\Omega(n)\) time. We note that the hard instances in  \cite{peng2023sublinear} are constant degree expanders with expansion $\phi = \Omega(1/d)$, whereas our results require stronger spectral expansion. 

\section{Technical Overview}\label{sec:tech}
Our algorithm can be viewed as a sublinear implementation of the spectral approaches of Alon and Kahale \cite{AK97} and David and Feige \cite{DF16}. These works show that planted colorings can be approximately recovered from the bottom eigenspace of the adjacency matrix. The main difficulty is that computing this exactly requires reading the whole graph and computing the eigenvectors. 

Recent progress on sublinear spectral algorithms has focused on approximating inner products of spectral embeddings associated with the \emph{top} eigenspace of a graph, typically under strong clusterability assumptions
\cite{GKLMS21, KKW}. Our main technical contribution is to extend this line of work to the setting of graph coloring by providing sublinear-time access to the \emph{bottom} eigenspace of the adjacency matrix. While bottom eigenvectors have long played a central role in polynomial-time spectral algorithms, sublinear-time methods for accessing this part of the spectrum have remained largely undeveloped. 
\paragraph{The spectral approach of \cite{AK97} and \cite{DF16}.}
We start by reviewing the classical approach of \cite{AK97} and \cite{DF16}. These works showed that an approximate coloring can be recovered
from the bottom eigenspace of the adjacency matrix of the planted input graph. Specifically, the bottom $k-1$ eigenvalues of the normalized adjacency matrix have value $-\frac{1}{k-1}$ $ \pm O(\sqrt{\lambda/d})$, while all remaining
eigenvalues (except for the trivial top eigenvalue \(1\)) are bounded in absolute value by $O(\sqrt{\lambda/d})$ 
\footnote{\cite{DF16} analyze the unnormalized adjacency matrix. Since our algorithm is based on random walks, it is more convenient for us to work with the normalized adjacency matrix. An analogous statement for the normalized adjacency matrix \(\bar A\) follows by a similar argument (see Lemma~\ref{lemma:C4}).}. The coloring can be approximately recovered from the spectral embedding in this bottom eigenspace. 

\begin{definition}[Spectral embedding]\label{def:spec-emb} Let $\bar A = D^{-1/2}AD^{-1/2}$ denote the normalized adjacency matrix of the input graph $G'$, and let $\U \in \mathbb{R}^{n\times (k-1)}$ denote the matrix whose columns are the bottom $k-1$ eigenvectors of $\bar A$. For every vertex $v \in V$, the spectral embedding of $v$, denoted $f_v \in \mathbb{R}^{k-1}$, is the $v$-th row of $\U$, i.e.,  
$$f_v \coloneqq \U^\top   \1_v.$$
\end{definition}
\begin{remark}
The spectral embeddings $f_v$ are not uniquely defined. However, the dot products $\langle f_u, f_v \rangle $, and hence also the Euclidean distances $\|f_u - f_v\|_2$, are uniquely defined (see \Cref{rem:unique_embedding}). 
\end{remark}

As shown in~\cite{DF16}, spectral embeddings of vertices belonging to the same color class are close to each other, while spectral embeddings of vertices belonging to different color classes are far from each other. Specifically, for all but an \(O(\lambda/d)\) fraction of vertices, the following holds\footnote{\Cref{eq:tech_distanec} can be extracted from the analysis
of \cite{DF16} for embeddings of the \emph{unnormalized} adjacency
matrix, although it is not stated there explicitly in this form. We
prove an analogous guarantee for the normalized adjacency matrix
(see \Cref{lemma:B_bound}).}:
\begin{equation}\label{eq:tech_distanec}
  \|f_u - f_v\|_2^2 =
\begin{cases}
\le \dfrac{k}{70n}, & \text{if } \chi(u) = \chi(v), \\[6pt]
\ge \dfrac{k}{2n},  & \text{if } \chi(u) \neq \chi(v).
\end{cases}  
\end{equation}

By \Cref{eq:tech_distanec}, approximate recovery of the planted coloring can be achieved using the following two steps: 
\begin{enumerate}
    \item Identify a set of \emph{color representatives} \(c_1 \in C_1, \dots, c_k \in C_k\), i.e., one
    vertex from each color class.
    \item Given a query vertex \(u\), output the color of the closest color representative 
    \[
    \Color(u) = \arg\min_{i \in [k]} \|f_u - f_{c_i}\|_2^2 .
    \]
\end{enumerate}
The main challenge is to perform this in sublinear time, without explicitly computing the
full spectral embedding. 

\paragraph{Spectral embeddings via polynomials.} 
In order to approximate inner products $\langle f_u, f_v\rangle$ between spectral embeddings (and hence also the Euclidean distances), we will approximate the relevant projection matrix by  a low-degree polynomial in the random walk matrix. The inner products can then be approximated by collision counting. This idea was used in~\cite{KKW} in the context of sublinear-time spectral clustering, and we adapt their general approach. 
Let \(\bar A = U \Sigma U^\top\) denote the eigendecomposition of the normalized adjacency matrix \(\bar A\). 
We are interested  in approximating
\[
\langle f_u,f_v\rangle
= \1_u^\top U_{[k-1]} U_{[k-1]}^\top \1_v ,
\]
where  $U_{[k-1]}U_{[k-1]}^\top$ is the projection onto the eigenspace corresponding to eigenvalues with value $\approx -1/(k-1)$. 

To this end, we seek a polynomial
\[
p(x) = \sum_{t \ge 0} c_t x^t
\]
\begin{samepage}
of the form \(p(x) = x^t q(x)\), where $ t = \Omega\!\left(\frac{\log n}{\log(d/\lambda)}\right)$, 
such that the coefficients \(c_t\) are small and the following properties hold:
\begin{itemize}
    \item \(p(x) \approx 1\) when \(\left|x + \frac{1}{k-1}\right|
    \le O\!\left(\sqrt{\frac{\lambda}{d}}\right)\), and
    \item \(p(x) \approx 0\) when \(|x| \le O\!\left(\sqrt{\frac{\lambda}{d}}\right)\) or \(x=1\).
\end{itemize}
\end{samepage}
With such a polynomial, the matrix
\[
p(\bar A) = U p(\Sigma) U^\top
\]
acts as an approximate projector onto the bottom $k-1$ eigenvectors. As a result, 
\begin{align*}\label{eqn:tech_poly}
    \langle f_u, f_v\rangle  &= \1_u^\top U_{[k-1]}U^\top _{[k-1]}\1_v  \\
    & \approx \1_u^\top  U p(\Sigma)^2U^\top  \1_v  = \left \langle p(\bar A )\1_u, p(\bar A)\1_v\right  \rangle = \left \langle  \sum_t c_t \bar A ^t \1_u ,  \sum_t c_t  
\bar A^t \1_v\right \rangle \\
&= \sum_{t,t'}c_t c_{t'} \langle  \bar A ^t \1_u , \bar A ^{t'} \1_v\rangle   . 
\end{align*}

Each inner product \(\langle \bar A^t \1_u, \bar A^{t'} \1_v\rangle\) can be estimated
efficiently using random walks. Indeed, let \(M = AD^{-1}\) denote the random walk transition matrix. For $t = \Omega\!\left(\frac{\log n}{\log(d/\lambda)}\right)$, inner products of the form
\(\langle M^t \1_u, M^{t'} \1_v\rangle\) can be approximated using
$\approx \sqrt{n}$ random walks via standard collision-counting 
arguments. Using the relation \(\bar A = D^{-1/2} M D^{1/2}\), inner products involving powers of \(\bar A\) can be approximated similarly. 

It is crucial that the coefficients $c_t$ of the polynomial~$p$ are small:  each term
\(\langle \bar A^t \1_u, \bar A^{t'} \1_v\rangle\) must be estimated to precision inverse polynomial in \(|c_t c_{t'}|\), and hence the overall running time depends polynomially on the size of coefficients. 

\cite{KKW} used a similar idea to cluster graphs. In their setting, the input graph is assumed to be clusterable, meaning that the top $k$ eigenvalues lie in the interval $[1-\epsilon,1]$ for some error parameter $\epsilon$, while all remaining eigenvalues are bounded away from $1$. The goal is to approximate embeddings in this \emph{top} eigenspace of the graph, which they achieve by constructing a polynomial that approximates the indicator on $[1-\epsilon, 1]$.

A key challenge with adapting their construction to our setting is that the coefficients of their polynomial grow superpolynomially in their error parameter $\epsilon$, and can be as large as $\approx (1/\epsilon)^{O(\log(1/\epsilon))} n^{O(\epsilon)}$. Since the running time depends polynomially on the coefficient size, the algorithm is sublinear only for relatively large values of $\epsilon$.

In our setting, the relevant analogue of the error parameter is $\sqrt{\lambda/d}$, and a superpolynomial dependence on $\sqrt{d/\lambda}$ would lead to too large running times. For example, for random \(d\)-regular graphs with \(d = n^{\Omega(1)}\), this would lead to running times that are superpolynomial in $n$. 

To overcome this issue, we modify the polynomial construction of \cite{KKW} and give a more careful analysis of the resulting polynomial. This gives a polynomial that approximates an indicator function around $-1/(k-1)$, while having coefficients with absolute value at most $n^{O\left(1/\log(d/\lambda)\right)}$. In particular, the coefficient size, and hence the running time, decreases as $\lambda/d$ decreases.
\begin{thm}\label{cor:polynomial}
Suppose that $\lambda/d$ is less than a sufficiently small constant depending on $k$. Let $\delta, C> 0$ be constants. For every $\frac{10C \log n}{\log(d/\lambda)} \leq t \leq O\left( \frac{\log n}{\log(d/\lambda) }\right)$, there exists a polynomial~$p$ of the form $p(x) = x^t q(x)$, where $q$ is a polynomial of degree $O(\sqrt{\lambda/d} \log n)$, such that the coefficients of~$p$ are bounded in absolute value by $ n^{O\left(\frac{1}{\log(d/\lambda)}\right)}$, and
\begin{enumerate}[label=(\textbf{\arabic*})]
    \item $\left |p(x)-1\right| \leq \delta $ for $x \in \left[-\frac{1}{k-1} - C\sqrt{\frac{\lambda}{d}}, -\frac{1}{k-1} + C\sqrt{\frac{\lambda}{d}}\right]$ \label{con:poly1}
    \item $|p(x)|\leq  n^{-2}$  for   $|x|\leq C \sqrt{\frac{\lambda}{d}}$.  \label{con:poly0special}
\end{enumerate}
\end{thm}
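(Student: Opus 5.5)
Write $a=-\frac{1}{k-1}$ and $\varepsilon=\sqrt{\lambda/d}$. The plan is to build $p$ explicitly as $p(x)=(x/a)^t\, s(x)$, where $s$ is a low-degree correction making $p\approx 1$ near $a$. So $q(x)=a^{-t}s(x)$.

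The factor $(x/a)^t$ already does most of the work.
\begin{itemize}
\item It equals $1$ at $x=a$.
\item For $|x|\le C\varepsilon$ it is at most $\big((k-1)C\varepsilon\big)^t=\exp\!\big(-t\log\tfrac{1}{(k-1)C\varepsilon}\big)$. Since $\log(1/\varepsilon)=\tfrac12\log(d/\lambda)$ and $t\ge 10C\log n/\log(d/\lambda)$, this is $n^{-\Omega(C)}$ once $\lambda/d$ is small enough in terms of $k$ and $C$.
\item Its drawback is that on the window $|x-a|\le C\varepsilon$ it behaves like $(1+y)^t$ with $y=(x-a)/a$ and $|y|\le (k-1)C\varepsilon$. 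This can be as large as $e^{\Theta(\varepsilon t)}$, and $\varepsilon t\approx \varepsilon\log n/\log(1/\varepsilon)$ need not be small.
\end{itemize}

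So I take $s$ to be the degree-$m$ Taylor truncation of $(1+y)^{-t}$ around $y=0$:
\[
s(x)=\sum_{j=0}^{m}\binom{t+j-1}{j}(-y)^j,\qquad y=\frac{x-a}{a}.
\]
Then
\[
p(x)-1=(1+y)^t\Big(s(x)-(1+y)^{-t}\Big)=-(1+y)^t\sum_{j>m}\binom{t+j-1}{j}(-y)^j .
\]

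\textbf{Condition \ref{con:poly1}.} Here $|y|\le (k-1)C\varepsilon$. The tail is geometrically dominated as soon as $m\ge e\,t|y|+\log(1/\delta)+O(\varepsilon t)$. That choice also absorbs the prefactor $(1+|y|)^t\le e^{t|y|}$. Hence $m=O(\varepsilon t+\log(1/\delta))$, which is $O(\sqrt{\lambda/d}\,\log n)$ up to the additive constant, as required for $\deg q$.

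\textbf{Condition \ref{con:poly0special}.} For $|x|\le C\varepsilon$ we have $|y|\le 1+O(\varepsilon)\le 2$. Crudely,
\[
|s(x)|\le \sum_{j\le m}\binom{t+j-1}{j}2^j\le 4^{t+m}.
\]
Now $4^{t}=n^{O(1/\log(d/\lambda))}$. Also $4^m=n^{O(\varepsilon)}$, and $\varepsilon\le O(1/\log(1/\varepsilon))$, so $4^m$ is likewise $n^{O(1/\log(d/\lambda))}$. Multiplying by the factor $n^{-\Omega(C)}$ from $(x/a)^t$ gives $|p(x)|\le n^{-2}$. This holds because the decay exponent from $(x/a)^t$ carries the factor $10C$ and $\log(1/((k-1)C\varepsilon))$ is close to $\tfrac12\log(d/\lambda)$, whereas the growth exponent from $4^{t+m}$ has no comparable advantage.

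\textbf{Coefficient bound.} Expand $p(x)=a^{-t}x^t s(x)$ with $y=(x-a)/a=-(k-1)x-1$.
\begin{itemize}
\item The coefficients of $s$ in the monomials $x^i$ are bounded by $\sum_{j\le m}\binom{t+j-1}{j}k^j\le (2k)^{t+m}$.
\item The factor $|a|^{-t}=(k-1)^t$ contributes at most that much again.
\end{itemize}
Both $t$ and $m$ are $O(\log n/\log(d/\lambda))$ (using $\varepsilon\lesssim 1/\log(1/\varepsilon)$ for $m$), and $k$ is constant. So every coefficient is $n^{O(1/\log(d/\lambda))}$.

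\textbf{Main obstacle.} The delicate step is the interplay in \ref{con:poly0special}: the exponential blow-up $4^{t+m}$ of the correction $s$ away from $a$ must be beaten by the decay $((k-1)C\varepsilon)^t$. This is exactly why $t$ is required to be at least $10C\log n/\log(d/\lambda)$ and why $\lambda/d$ must be small in terms of $k$. If the crude $4^{t+m}$ bound is too lossy, I would first try bounding $|s|$ more carefully using $|y|\le 1+(k-1)C\varepsilon$, which gives roughly $\binom{t+m}{m}(1+O(\varepsilon))^m$.
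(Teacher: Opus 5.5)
Your proposal is correct. It shares the paper's skeleton $p=(x/\alpha)^t\cdot(\text{approximation of }(x/\alpha)^{-t})$ but replaces the key approximation lemma.

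\textbf{The paper's route.} It splits into two regimes. When $\epsilon\lesssim|\alpha|/\log n$ it takes $p(x)=(x/\alpha)^t$ alone. Otherwise it uses the degree-$d$ Chebyshev truncation $Q$ of $(1-\frac{\epsilon}{|\alpha|}z)^{-t}$ from \cite{KKW} (Theorem~\ref{thm:Chebyshevapprox}), evaluated at $z=(1-x/\alpha)/(\epsilon/|\alpha|)$ with $d=10\log(1/\epsilon)\epsilon t/|\alpha|$. The coefficients are then bounded via Claim~\ref{claim:coeff}. Because of the $1/\epsilon$ rescaling, this costs $(|\alpha|/\epsilon)^{O(d)}=n^{O(\epsilon\log(1/\epsilon))}$, which is absorbed only because $\epsilon\log^2(1/\epsilon)\lesssim 1$.

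\textbf{Your route.} You take the Taylor truncation of the binomial series $(1+y)^{-t}$ in the unscaled variable $y=x/a-1$, and this is self-contained.
\begin{itemize}
\item The approximation is needed only for $|y|=O(\varepsilon)$. There the term ratio $\frac{t+j}{j+1}|y|$ drops below $1/e+|y|$ once $j\ge e\,t|y|$, so the tail is geometric.
\item The coefficient bound is immediate. Since $(-y)^j=((k-1)x+1)^j$ has nonnegative coefficients, you get $(k-1)^t(2k)^{t+m}=2^{O(t+m)}$ with no powers of $1/\varepsilon$.
\item Condition (2) reduces to bounding $(4^{1+O(\varepsilon)}(k-1)C\varepsilon)^t$, up to a factor depending only on $\delta$. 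This is decreasing in $t$, so only the lower bound on $t$ is used there.
\end{itemize}
What the paper's route buys is essentially the reuse of off-the-shelf estimates from \cite{KKW}.

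\textbf{One patch needed.} Your $m=O(\varepsilon t+\log(1/\delta))$ does not literally give $\deg q=O(\sqrt{\lambda/d}\log n)$ when $\sqrt{\lambda/d}\log n=o(1)$. Handle this exactly as in the paper's Case 1:
\begin{itemize}
\item Take $m=0$ whenever $t(k-1)C\varepsilon\le\delta/2$. Then $|(1+y)^t-1|\le e^{t|y|}-1\le\delta$.
\item Otherwise $\log(1/\delta)=O_{\delta,C,k}(t\varepsilon)$, so $m=O(\varepsilon\log n)$.
\end{itemize}

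\textbf{Implicit assumptions.} As in the paper, condition (2) needs $C$ bounded below: you only get $n^{-5C(1-o(1))}$ from $(x/a)^t$. It also needs $\lambda/d$ small in terms of $\delta$ and $C$, not only $k$. The paper's proof relies on the same assumptions without stating them.
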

In fact, our proof establishes a more general result: for any constant $\alpha \in [-1,1]$ with $\alpha \neq 0$ and any $0 < \epsilon\ll |\alpha|$, we construct a polynomial approximating an indicator on $[\alpha-\epsilon,\alpha+\epsilon]$ with similar coefficient bounds (see \Cref{thm:polynomial}). The proof of \Cref{cor:polynomial} is presented in \Cref{sec:polynomial}. 

While this polynomial successfully maps the eigenvalues near $-1/(k-1)$ to values close to $1$ and eigenvalues of size at most $\sqrt{\lambda/d}$ to values close to $0$, it does not control the value of $p(1)$, which corresponds to the top eigenvalue. We handle this issue by explicitly removing the contribution of the top eigenvector. Since the top eigenvector of \(\bar A\) is
\(e_1 = D^{1/2}\1/\sqrt{2m}\), its contribution can be computed exactly, and we can therefore apply the polynomial to the matrix \(\bar A - e_1 e_1^\top\), in which the top
eigenvalue has been shifted to zero.

The algorithms for approximating $\langle f_u, f_v\rangle $ (\Cref{alg:SDP}) and $\|f_u - f_v\|^2_2$ (\Cref{alg:approxdist}) are presented in  \Cref{sec:alg}. Their guarantees (\Cref{lemma:SDP} and \Cref{lemma:approxdist}) are proved in \Cref{sec:SDP} and \Cref{sec:apprxdst}. 

\paragraph{Finding color representatives.}
Given access to pairwise distances between spectral embeddings, it remains to pick one representative vertex from each planted color class. The polynomial-time approach of~\cite{DF16} finds such representatives by repeatedly sampling $k$ random vertices until exactly one comes from each class. However, their method for verifying whether a given sample is valid requires $\Omega(n)$ time, which is too costly in our setting. We instead sample $\Theta(k\log k)$ vertices and build a small ``similarity graph'' on the sample, connecting two sampled vertices when their embeddings are similar. This is similar to approaches used for clustering~\cite{CzumajPS15, SP23}. 

The similarity graph decomposes into exactly $k$ connected components (one per color class) with  high constant probability, so choosing one vertex from each component yields the required representatives. We present the algorithm for obtaining color representatives in \Cref{sec:prepro}.

\section{Preliminaries and Notation}\label{sec:prelim}
We start by formally defining the graph access model and the input model. 
\paragraph{Graph access model.} We work in the adjacency-list query model. In this model, an algorithm can perform neighbor queries, i.e., for the $i$-th neighbor of any vertex, and degree queries, i.e., for the degree of any vertex. We additionally assume that the number of edges $m$ is given as part of the input to the algorithm. 

\begin{definition}[$\lambda$-expander]
Let $\lambda_1(A_G) \geq \lambda_2(A_G)  \geq \dots \geq \lambda_n(A_G)$ denote the eigenvalues of the unnormalized adjacency matrix $A_G$ of a graph $G$. Say that $G$ is a $\lambda$-expander if $\max\{|\lambda_2(A_G)|,|\lambda_n(A_G)|\}\le \lambda$.
\end{definition}

\begin{definition}[Adversarial planted \(k\)-coloring]\label{def:model}
Let $k \geq 3$ be a constant.
An adversary chooses an \(n\)-vertex \(d\)-regular \(\lambda\)-spectral expander \(G=(V,E)\) together with a balanced partition
\(\mathcal C = (C_1,\dots,C_k)\) of \(V\), such that $|C_i| = n/k$ for all $i \in [k]$ and such that every vertex has at least $\Omega(d)$ neighbors outside its color class. The partition \(\mathcal C\) defines a \emph{planted \(k\)-coloring} \(\chi : V \to [k]\) by setting \(\chi(v)=i\) if \(v \in C_i\).

The planted instance is obtained by deleting all monochromatic edges from \(G\). That is, let
\[
E' \coloneqq \{(u,v) \in E : \chi(u) \neq \chi(v)\},
\]
and let \(G' = (V,E')\) denote the resulting \(k\)-colorable graph.
Write \(G' = P_{\mathcal C}(G)\) when $G'$ is obtained this way from the graph $G$ and the partition $\mathcal C$. 
\end{definition}

\begin{remark}[Model assumptions]
The model above is almost identical to the planted adversarial model of \cite{DF16}, except that we impose the additional condition that every vertex has at least a small constant fraction of its neighbors outside of its own color class. This ensures that after deleting monochromatic edges, every vertex in $G'$ has degree $\Omega(d)$. Since the typical degree in $G'$ is
$(1-\frac{1}{k})d$, this is a mild restriction that only rules out vertices that are almost isolated by the adversary. In particular, if the color classes are selected \emph{at random} and $d = \Omega(\log n)$, then this holds with high probability. The purpose of this assumption is
to ensure that the normalized adjacency matrix of $G'$ has the required
spectral properties. Our sublinear-time algorithm is based on random walks and therefore relies on spectral properties of the normalized adjacency matrix. In contrast, prior polynomial-time spectral algorithms for planted $k$-coloring work directly with the \emph{unnormalized} adjacency matrix, and therefore do not need this additional balanced degree condition.

For simplicity of presentation, Definition~\ref{def:model} assumes that the planted coloring is exactly balanced, i.e.,
$|C_i| = n/k$ for all $i \in [k]$.
Our results extend to approximately balanced planted colorings satisfying $|C_i|/|C_j| = O(1)$ for all $i,j \in [k]$. The exact balance assumption is consistent with the standard formulation in prior work (e.g., \cite{BlumSpencer, AK97, DF16}).

Similar to prior work, we assume that $k=O(1)$, and suppress the dependence on $k$ in the $O$-notation.  
\end{remark}
\begin{remark}[Dependence on the number of colors]\label{rem:growing-k} The assumption that \(k\) is constant is used in two crucial ways. First, it is used to show that the relevant eigenspace carries useful information about the planted coloring. In the planted \(k\)-coloring model, the relevant eigenspace corresponds to eigenvalues near $-\frac{1}{k-1}$, while the other nontrivial eigenvalues are controlled up to \(O(\sqrt{\lambda/d})\) in the normalized adjacency matrix. Thus, with the currently known bounds, separating the relevant eigenspace from the rest of the spectrum requires a separation of the form $\sqrt{\lambda/d} \ll 1/k$. 

Moreover, for nonconstant \(k\), existing spectral bounds (such as  \Cref{lemma:Bspect}, or  Lemma B.16 in \cite{DF16}) incur additional dependence on \(k\). In particular, they would give a bound of 
  $  O\left(k\sqrt{\lambda/d}n\right)$ on the number of vertices whose spectral embedding significantly deviates from the rest of their color class. These issues also apply to previous polynomial-time spectral algorithms for planted coloring \cite{AK97,DF16}.

Second, the constant $k$ assumption is used in our sublinear implementation, since our polynomial has to separate the eigenvalues near $-\frac{1}{k-1}$ from those near $0$. As $k$ grows, this is increasingly hard, which is reflected in the coefficient size, and hence the sampling complexity. 

Thus, extending the result to nonconstant \(k\) would require stronger spectral concentration bounds, as well as a polynomial construction whose coefficients remain controlled for nonconstant \(k\).
\end{remark}

The goal is to construct a coloring oracle, as defined below. 
\begin{definition}[Coloring oracle]\label{def:oracle}
A coloring oracle is a randomized data structure \(\mathcal O\) that, given query access to a graph \(G'\) from the adversarial planted \(k\)-coloring model, provides consistent query access to a function 
$ \Color: V \rightarrow [k]$.
 
Say that \(\mathcal O\) defines an \(\varepsilon\)-approximate coloring if, with probability at least
\(0.9\) over the random bits of~\(\mathcal O\), there exists a permutation \(\pi : [k] \to [k]\) such that
\[
\left|\left\{ v \in V : \Color(v) \neq \pi(\chi(v) \right)\}\right|
\le \varepsilon n ,
\]
where $\chi$ is the planted coloring. 
\end{definition}

\paragraph{Notation.} Given the input graph $G' = (V,E')$, let $m$ denote the number of edges $|E'|$. Write $A$ for the unnormalized adjacency matrix of $G'$ and $D$ for the diagonal degree matrix. Write $\bar A = D^{-1/2}AD^{-1/2}$ for its normalized adjacency matrix, and write $M = AD^{-1}$ for the random walk matrix. Observe that
$$ M = D^{1/2}\bar A D^{-1/2}.$$

Let $\lambda_1(\bar A) \ge \lambda_2(\bar A) \ge \cdots \ge \lambda_n(\bar A)$ denote the eigenvalues of $\bar A$ , and let $e_1(\bar A), \ldots, e_n(\bar A)$ denote a corresponding orthonormal basis of eigenvectors. When it is clear that we are talking about the matrix $\bar A$, we omit the dependence on $\bar A$ from the notation.

For a vertex $v \in V$, write $\1_v \in \mathbb{R}^n$ for the indicator of $v$, i.e., the vector with entry $1$ at index $v$ and $0$ everywhere else. Write $\1 \in \R^n$ for the vector with all entries equal to $1$. 
\begin{obs}\label{obs:e1}
     $\bar A D^{1/2}\1 = D^{1/2}\1. $ 
    In particular, 
    $$ e_1(\bar A) =  \frac{D^{1/2}\1}{\| D^{1/2}\1\|_2} = \frac{D^{1/2}\1}{\sqrt{2m}}\qquad \text{and} \qquad \lambda_1(\bar A)  =1.$$
\end{obs}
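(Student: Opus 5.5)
The first claim is the identity $\bar A D^{1/2}\1 = D^{1/2}\1$. I will verify it by direct computation from $\bar A = D^{-1/2}AD^{-1/2}$:
\[
\bar A D^{1/2}\1 = D^{-1/2} A D^{-1/2} D^{1/2}\1 = D^{-1/2} A\1 = D^{-1/2} D\1 = D^{1/2}\1 .
\]
Here I use that $A\1$ is the vector of degrees, which is $D\1$. Every vertex of $G'$ has degree $\Omega(d)>0$ by \Cref{def:model}, so $D^{-1/2}$ is well defined. Hence $D^{1/2}\1$ is an eigenvector of $\bar A$ with eigenvalue $1$.

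Next I will show that $1$ is the largest eigenvalue. Write $\bar A = D^{-1/2}(AD^{-1})D^{1/2} = D^{-1/2}MD^{1/2}$, so $\bar A$ is similar to the random walk matrix $M$. The matrix $M$ is column-stochastic with nonnegative entries. Its induced $\ell_1$ operator norm is therefore $1$, so its spectral radius is at most $1$. Since $\bar A$ is symmetric and similar to $M$, all eigenvalues of $\bar A$ are real and lie in $[-1,1]$. Together with the first step, this gives $\lambda_1(\bar A)=1$.

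For the normalization, $\|D^{1/2}\1\|_2^2 = \sum_v \deg(v) = 2m$. This gives $e_1(\bar A) = D^{1/2}\1/\sqrt{2m}$.

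The only subtle point is that calling this vector ``the'' top eigenvector requires the eigenvalue $1$ to be simple, which holds if and only if $G'$ is connected. I will take it as the intended choice of $e_1$ in the orthonormal eigenbasis; it can always be chosen this way, since it is a unit eigenvector for eigenvalue $1$. Simplicity itself will follow later from the spectral bounds on $G'$, where the nontrivial eigenvalues are at most $O(\sqrt{\lambda/d})<1$.
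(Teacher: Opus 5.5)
Your proposal is correct and is essentially the argument the paper intends: the paper states this as an observation without proof, and the justification is exactly your computation $\bar A D^{1/2}\1 = D^{-1/2}A\1 = D^{1/2}\1$ together with $\|D^{1/2}\1\|_2^2 = \sum_v \deg(v) = 2m$. Your bound on the spectrum via similarity to the column-stochastic $M$ fills in a step the paper leaves implicit. So does your remark on simplicity, which \Cref{lemma:C4}\ref{item:lambdamid} later supplies via Courant--Fischer without any circularity.
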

For a polynomial $q$, write $\deg(q)$ to denote the degree of $q$ and $\max|\coeff(q)|$ to denote the absolute value of the largest (in absolute value) coefficient of $q.$  

We use $\widetilde O$ notation to suppress \(\operatorname{polylog} n\) factors. 

\section{The Algorithm}\label{sec:alg}
In this section, we present the algorithm. The algorithm \SDP{} (\Cref{alg:SDP}) approximates inner products $\langle f_u, f_v\rangle$ of spectral embeddings. It runs $\approx \sqrt{n}$ random walks of various lengths starting from $u$ and $v$, and then combines the resulting collision counts with the coefficients of the polynomial $p$ from \Cref{cor:polynomial}. 

To change from the random walk matrix $M = AD^{-1}$ to the normalized adjacency matrix $\bar A =  D^{-1/2} M D^{1/2}$, we appropriately reweigh collisions according to the matrix $D^{-1/2}$. Finally, we subtract the term $\frac{1}{2m}$ in order to cancel out the contribution of the top eigenvalue.

\begin{algorithm}[H]
    \caption{$\randomwalks(r, t, u)$}\label{alg:randwalks}
    \begin{algorithmic}
        \State Run $r$ random walks of length $t$ starting from $u$
        \State Let $\hat{p}_{u}(v)$  be the fraction of random walks that end at $v$  \Comment{vector $\hat{p}_u$ has support at most $r$ 
        \State \Return{ $\hat{p}_{u}$}}
    \end{algorithmic}
\end{algorithm}

 \begin{algorithm}[H]
\caption{\SDP$(u,v)$}
\label{alg:SDP}
\begin{algorithmic}[1]
\State $t_{\min} \gets\ O \left( \frac{\log n}{\log(d/\lambda)}\right)$  \Comment{shortest random walk length}
\State $t_{\Delta} \gets O(\sqrt{\lambda/d} \log n)$  \Comment{number of different walk lengths}
\State $r \gets \widetilde O\left( n^{1/2+ O\left(\frac{1}{\log(d/\lambda)}\right)}\right)$\Comment{number of random walks}

 \For{$t = t_{\min}, \dots t_{\min} + t_{\Delta}$}
\State $\hat{p}^t_u \gets$ \randomwalks$(r, t, u)$ 
 \Comment{vector $\hat{p}^t_u$  has support at most $r$}\label{line:sdp_rwu}
 \State $\hat{p}^t_v \gets$ \randomwalks$(r, t, v)$ 
\Comment{vector $\hat{p}^t_v$  has support at most $r$\label{line:sdp_rwv}}
 \EndFor
 \State  \Return{$\sqrt{\deg(u)}\sqrt{\deg(v)}\sum_{t,t'} c_tc_{t'} \left(\left \langle D^{-1/2}  \hat{p}^t_u , D^{-1/2} \hat{p}^{t'}_v\right \rangle - \frac{1}{2m}\right)$}\label{line:sdp_return}
  \State \Comment{ $c_t$ is the coefficient of $x^t$ in the polynomial $p(x)$ from \Cref{cor:polynomial}} 
\end{algorithmic}
\end{algorithm}

\SDP{}  has the following guarantee: outside a small deterministic set \(B\) of ``bad" vertices, $\SDP(u,v)$ approximates $\langle f_u, f_v\rangle $ up to an additive error of $O(1/n)$. The set \(B\) is defined formally in \Cref{def:goodbad} and consists of vertices whose degree or spectral embedding is atypical relative to their planted color class. We bound the size of $B$ in  \Cref{lemma:B_bound}.

\begin{lemma}[Correctness of \SDP]\label{lemma:SDP}
The procedure \SDP{} (Algorithm~\ref{alg:SDP}) runs in time $\widetilde O\left(n^{1/2 + O\left(\frac{1}{\log(d/\lambda)}\right)}\right)$. For every $ u,v \in V \setminus B$, 
$$ \Pr\left[\left |\SDP(u,v) - \langle f_u, f_v\rangle \right| \leq \frac{1}{80}\frac{k}{n} \right]  \geq 0.9,$$
where the probability is over the internal randomness of $\SDP$. 
\end{lemma}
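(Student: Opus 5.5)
The plan is to split the error of \SDP{} into a deterministic approximation error, coming from replacing the projector $\U\U^\top$ by $p(\bar A - e_1e_1^\top)^2$, and a sampling error, coming from replacing exact walk distributions by empirical ones. Throughout, write $\bar A_* = \bar A - e_1 e_1^\top$, which by \Cref{obs:e1} has the same eigenvectors as $\bar A$ but with the top eigenvalue moved to $0$.

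\emph{Step 1 (the deterministic identity).} We have $\bar A^t = D^{-1/2}M^tD^{1/2}$, and $\bar A_*^t = \bar A^t - e_1e_1^\top$ for $t\ge 1$. Hence
\[
\1_u^\top \bar A_*^{t}\bar A_*^{t'}\1_v = \sqrt{\deg(u)\deg(v)}\left(\left\langle D^{-1/2}M^t\1_u, D^{-1/2}M^{t'}\1_v\right\rangle - \tfrac{1}{2m}\right).
\]
Note that the matrix $D^{1/2}$ acting on $\1_u$ gives $\sqrt{\deg(u)}\1_u$, and $e_1(u)e_1(v)=\sqrt{\deg u\deg v}/2m$. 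Since $\hat p^t_u$ is an unbiased estimator of $M^t\1_u$, the returned quantity is an estimator of $\1_u^\top p(\bar A_*)^2\1_v$, where $p$ is the polynomial from \Cref{cor:polynomial} with $t_{\min}=10C\log n/\log(d/\lambda)$. The one caveat is that $p$ is used inside the sum with $c_t$ indexing its coefficients.

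\emph{Step 2 (the approximation error).} Expand
\[
\1_u^\top p(\bar A_*)^2\1_v=\sum_i p(\lambda_i^*)^2 e_i(u)e_i(v).
\]
Here I would use the spectral lemma referenced in the overview (\Cref{lemma:C4}): the bottom $k-1$ eigenvalues lie within $C\sqrt{\lambda/d}$ of $-1/(k-1)$, and all other eigenvalues of $\bar A_*$ (including the shifted top one, now $0$) have absolute value at most $C\sqrt{\lambda/d}$. By \ref{con:poly0special}, the contribution of the non-bottom eigenvalues is at most $n\cdot n^{-4}$ in absolute value. By \ref{con:poly1}, the bottom part differs from $\langle f_u,f_v\rangle$ by at most $3\delta\|f_u\|\|f_v\|$. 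For $u,v\notin B$ I expect $\|f_u\|^2=O(k/n)$ from the definition of good vertices (\Cref{def:goodbad}), so choosing $\delta$ a small enough constant makes this error at most $\frac{k}{160n}$.

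\emph{Step 3 (the sampling error), the main obstacle.} We need
\[
\sqrt{\deg u\deg v}\,\Big|\sum_{t,t'}c_tc_{t'}\big(\langle D^{-1/2}\hat p^t_u,D^{-1/2}\hat p^{t'}_v\rangle-\langle D^{-1/2}M^t\1_u,D^{-1/2}M^{t'}\1_v\rangle\big)\Big|\le \frac{k}{160n}.
\]
There are $(t_\Delta+1)^2=\widetilde O(1)$ terms, each with coefficient product $n^{O(1/\log(d/\lambda))}$ and degree factor $O(d)$. So it suffices to estimate each inner product to additive accuracy $\epsilon=1/(d\,n^{1+O(1/\log(d/\lambda))}\mathrm{polylog}\,n)$ with failure probability $1/\mathrm{polylog}$, and then take a union bound. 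Since $\hat p_u$ and $\hat p_v$ are independent, the estimator is unbiased and its variance is bounded by
\[
\frac{1}{r^2}\sum_w \frac{p_u(w)p_v(w)}{d_w^2}+\frac{1}{r}\left(\sum_w \frac{p_u(w)^2p_v(w)}{d_w^2}+\frac{p_u(w)p_v(w)^2}{d_w^2}\right).
\]
This reduces the problem to showing $\|M^t\1_u\|_\infty=O(1/n)$ for $t\ge t_{\min}$, with degrees $\Theta(d)$. To get this, I would write $M^t\1_u=D^{1/2}\bar A^tD^{-1/2}\1_u$. All eigenvalues of $\bar A$ other than $1$ have absolute value at most roughly $\max(1/(k-1)+C\sqrt{\lambda/d},\ \cdot)$. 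This bound is the crux: the bottom eigenvalues near $-1/(k-1)$ do not decay fast. Each bottom entry therefore contributes about $(k-1)^{-t}\sqrt{\deg w/\deg u}\,|\langle f_u,f_w\rangle|=O(1/n)$ for good $u$, while the remaining spectrum contributes at most $(C\sqrt{\lambda/d})^t\le n^{-5}$ by the choice of $t_{\min}$, and the top eigenvector contributes $\deg w/2m=O(1/n)$.

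Here is where I expect difficulty: the pointwise bound $|f_w|$ for arbitrary $w$ (possibly bad) is needed. I would instead bound the sums $\sum_w p_u(w)^2$ by $\|\bar A^t \1_u\|^2$-type expressions, which only involve $\|f_u\|$, and apply Chebyshev with $r=\widetilde O(\sqrt n\,n^{O(1/\log(d/\lambda))})$. Then $1/r^2\cdot O(1/(nd^2))\le\epsilon^2$ and $1/r\cdot O(1/(n^2d^2))\le \epsilon^2$. Running time is $r\cdot t\cdot t_\Delta$ plus sparse inner products, which is $\widetilde O(r)$.
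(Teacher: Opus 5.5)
Your proposal follows the paper's proof essentially step for step: the identity of \Cref{lemma:expected_val}, the spectral split of \Cref{lemma:poly_to_emnedding} (your expected bound $\|f_u\|_2^2=O(k/n)$ does hold, e.g.\ from \Cref{def:goodbad} and $\|\mu_i\|_2\le 1/\sqrt{|C_i|}$), and weighted collision counting driven by $\|M^t\1_u\|_2^2=\Theta(1/n)$ for good $u$ (\Cref{lemma:Mnorm}), with a union bound over the $(t_\Delta+1)^2$ pairs. The only correction concerns the cubic variance term: your claimed $O(n^{-2}d^{-2})$ does not follow from $\ell_2$ information alone, but no $\ell_\infty$ bound is needed, since $\sum_w p_u(w)^2p_v(w)/\deg(w)^2\le O(d^{-2})\,\|p_u\|_2^2\|p_v\|_2=O(n^{-3/2}d^{-2})$ (the paper uses $\|p+q\|_3^3\le\|p+q\|_2^3$), and the choice $r=\Theta(\sqrt n\,(\max|\coeff(p)|)^4(t_\Delta+1)^6)$ absorbs the extra $\sqrt n$, exactly as in the paper.
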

\noindent 
The correctness guarantee of \SDP{} (\Cref{lemma:SDP}) is proved in \Cref{sec:SDP}. 

To compute distances $\|f_u - f_v\|^2_2$, we use the algorithm \approxdist{} (\Cref{alg:approxdist}). The algorithm boosts the success probability of \SDP{} and approximates  $\|f_u - f_v\|^2_2  = \|f_u\|^2_2 + \|f_v\|^2_2 - 2\langle f_u, f_v\rangle$ by approximating each of the terms. 

 \begin{algorithm}[H]
\caption{\approxdist$(u,v)$}
\label{alg:approxdist}
    \begin{algorithmic}[1]
 \For{$i = 1, \dots O(\log n)$}
   \State  ${\|f_u\|_{\apx,i}^2}\gets \SDP(u,u)$ \label{line:dotu} \Comment{Approximate $\|f_u\|^2_2$}
    \State $\|f_v\|_{\apx,i}^{2} \gets \SDP(v,v)$\label{line:dotv}\Comment{Approximate $\|f_v\|^2_2$}
    \State $\langle f_u, f_v\rangle_{\apx,i}  \gets \SDP(u,v)$\label{line:dotuv} \Comment{Approximate $\langle f_u, f_v\rangle $}
    \State $\|f_u - f_v\|_{\apx,i}^{2}\gets \|f_u\|_{\apx,i}^{2}  + \|f_v\|_{\apx,i}^{2} - 2\langle f_u, f_v\rangle_{\apx,i}  $ \label{line:disti}
 \EndFor
\State \Return $\median_i (\|f_u - f_v\|_{\apx,i}^{2})$ \label{line:distreturn}
\end{algorithmic}
\end{algorithm}
\begin{lemma}[Correctness of \approxdist]\label{lemma:approxdist}
The procedure \approxdist{} (Algorithm~\ref{alg:approxdist}) runs in time $\widetilde O\left(n^{1/2 + O\left(\frac{1}{\log(d/\lambda)}\right)}\right)$. With probability at least $1-n^{-100}$, 
$$  \left |\approxdist(u,v) - \| f_u- f_v\|_2^2 \right| \leq \frac{k}{20n} $$
for all $ u,v \in V \setminus B$,  where the probability is over the internal randomness of $\approxdist$. 
\end{lemma}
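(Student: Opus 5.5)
The plan is to derive \Cref{lemma:approxdist} from \Cref{lemma:SDP} by a triangle inequality followed by a standard median-amplification argument, with a union bound over all pairs. The running time is immediate: \approxdist{} makes $O(\log n)$ iterations, each calling \SDP{} three times. By \Cref{lemma:SDP}, each call costs $\widetilde O\bigl(n^{1/2+O(1/\log(d/\lambda))}\bigr)$, and the final median over $O(\log n)$ numbers is negligible. The $O(\log n)$ factor is absorbed into the $\widetilde O$.

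For correctness, fix $u,v \in V\setminus B$ and an iteration $i$. I will call iteration $i$ \emph{good} if all three estimates in \Cref{line:dotu,line:dotv,line:dotuv} are within $\frac{1}{80}\frac{k}{n}$ of $\|f_u\|_2^2$, $\|f_v\|_2^2$ and $\langle f_u,f_v\rangle$ respectively. This uses \Cref{lemma:SDP} applied to the pairs $(u,u)$, $(v,v)$ and $(u,v)$, all of which are allowed since $u,v\notin B$. On a good iteration, the identity $\|f_u-f_v\|_2^2=\|f_u\|_2^2+\|f_v\|_2^2-2\langle f_u,f_v\rangle$ and the triangle inequality give
\[
\left|\|f_u-f_v\|_{\apx,i}^2-\|f_u-f_v\|_2^2\right|\le (1+1+2)\cdot\frac{1}{80}\frac{k}{n}=\frac{k}{20n}.
\]

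The main obstacle is the success probability of a single iteration. A union bound over the three calls only yields $1-3\cdot 0.1=0.7$, which still exceeds $1/2$, and that is all we need. The three calls use independent randomness, so the iterations are independent Bernoulli trials with success probability at least $0.7$. By a Chernoff bound, with $C\log n$ iterations for a large enough constant $C$, more than half of the iterations are good except with probability $e^{-\Omega(C\log n)}\le n^{-102}$. If more than half the iterations land in the interval $\bigl[\|f_u-f_v\|_2^2-\frac{k}{20n},\ \|f_u-f_v\|_2^2+\frac{k}{20n}\bigr]$, then the median returned in \Cref{line:distreturn} also lies in that interval. Should a constant margin ever be needed instead, one could rerun \SDP{} a constant number of times inside each iteration and take medians, but $0.7>1/2$ suffices here.

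Finally, there are at most $n^2$ pairs $u,v\in V\setminus B$. A union bound over these pairs gives a total failure probability of at most $n^2\cdot n^{-102}=n^{-100}$. This establishes the guarantee simultaneously for all pairs, interpreting the statement with the fresh randomness of each invocation, or equivalently with the random walks fixed per pair.
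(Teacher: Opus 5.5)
Your proposal is correct and follows essentially the same argument as the paper. The paper also union-bounds the three \SDP{} calls to get per-iteration success probability $0.7$, and uses the triangle inequality to get error $\frac{k}{20n}$. It then applies a Chernoff bound to the median over $\Theta(\log n)$ iterations to reach failure probability $n^{-102}$ per pair, union-bounds over the at most $n^2$ pairs, and bounds the running time by $O(\log n)$ calls to \SDP{}.
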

As a corollary, $\approxdist{(u,v)}$ can be used to determine whether $u$ and $v$ belong to the same color class, for all $u,v \in V \setminus B$. 
\begin{samepage}
\begin{cor}\label{cor:approxdist_decision}
With probability at least $1-n^{-100}$, the following holds for all $u,v \in V \setminus B$: 
\begin{itemize}
    \item If $\chi(u) =  \chi(v)$, then $\approxdist(u,v) \leq  \frac{k}{4n}$. 
   \item If $\chi(u) \neq  \chi(v)$, then $\approxdist(u,v) \geq \frac{k}{2n}$. 
\end{itemize}
\end{cor}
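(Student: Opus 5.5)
This corollary follows by combining \Cref{lemma:approxdist} with the separation property \eqref{eq:tech_distanec} of the spectral embeddings outside $B$. We condition on the event of \Cref{lemma:approxdist}, which has probability at least $1-n^{-100}$. On this event, for all $u,v\in V\setminus B$ simultaneously,
\[
\left|\approxdist(u,v)-\|f_u-f_v\|_2^2\right|\le \frac{k}{20n}.
\]
Everything after this conditioning is deterministic, so no further union bound over pairs is needed; the lemma is already stated uniformly over all pairs.

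Next we invoke the separation guarantee for good vertices. By the definition of $B$ (\Cref{def:goodbad}) and \Cref{lemma:B_bound}, every pair $u,v\notin B$ satisfies \eqref{eq:tech_distanec}:
\begin{itemize}
\item if $\chi(u)=\chi(v)$, then $\|f_u-f_v\|_2^2\le \frac{k}{70n}$;
\item if $\chi(u)\neq\chi(v)$, then $\|f_u-f_v\|_2^2\ge \frac{k}{2n}$.
\end{itemize}
I expect this to be the only substantive point. We must confirm that the set $B$ used in \Cref{lemma:SDP} and \Cref{lemma:approxdist} is exactly the set outside of which \eqref{eq:tech_distanec} holds for every pair, and not merely for most pairs. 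If $B$ is instead defined by closeness of $f_v$ to a class center $\mu_{\chi(v)}$, the pairwise bounds follow from the triangle inequality together with a lower bound on the separation between centers of different classes.

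For the first bullet, the triangle inequality gives
\[
\approxdist(u,v)\le \frac{k}{70n}+\frac{k}{20n}<\frac{k}{4n}.
\]
For the second bullet, the same argument gives
\[
\approxdist(u,v)\ge \frac{k}{2n}-\frac{k}{20n}=\frac{9k}{20n}.
\]
This is below the claimed threshold of $\frac{k}{2n}$, so the constants need care. I would fix this in one of two ways. The first option is to use the slack in the constants of \eqref{eq:tech_distanec}: the embedding lemma for good vertices likely gives a lower bound of $\frac{11k}{20n}$ or larger, since $\|f_u-f_v\|^2\approx \frac{2k}{n}\cdot\frac{k-1}{k}$ for well-separated classes. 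The second option, if the constants are tight, is to tighten the additive error in \Cref{lemma:approxdist}. This can be done by increasing $r$ by a constant factor, so that the error in \Cref{lemma:SDP} shrinks enough that each median estimate is within $\frac{k}{40n}$. Either way, the first bullet still holds comfortably, which completes the proof.
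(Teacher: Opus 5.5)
\textbf{There is a genuine gap.} You import the pairwise bounds of \eqref{eq:tech_distanec} as if they held for every pair outside $B$, and the paper never proves that. That display is an informal overview statement. The set $B$ is defined in \Cref{def:goodbad} by three per-vertex conditions: the degree condition, $\|f_v-\mu_{\chi(v)}\|_2^2\le 0.05k/n$, and $\|f_v-\mu_j\|_2^2\ge k/n$ for all $j\neq\chi(v)$.

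From these conditions, the same-class bound you can actually derive is $\|f_u-f_v\|_2\le 2\sqrt{0.05k/n}$, i.e.\ $\|f_u-f_v\|_2^2\le 0.2k/n$, not $k/(70n)$. So your first bullet is not comfortable but exactly tight: $0.2k/n+k/(20n)=k/(4n)$.

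For the second bullet your computation gives only $9k/(20n)$, and neither repair is carried out.
\begin{itemize}
\item Your second option cannot work even in principle. If the true separation were only $k/(2n)$, subtracting any positive estimation error leaves you strictly below $k/(2n)$, however large $r$ is. Moreover, increasing $r$ only shrinks the sampling error; it does not affect the polynomial-approximation error bounded in \Cref{lemma:poly_to_emnedding}.
\item Your first option is the right one, but ``likely gives $11k/(20n)$'' is a guess, not an argument.
\end{itemize}
Your hedge about class centers points in the right direction, but you never execute it or check the constants.

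\textbf{The missing step.} It is short and uses property (3) of \Cref{def:goodbad} directly, so no separate lower bound on the separation between centers is needed. Let $u\in C_i$ and $v\in C_j$ with $j\neq i$, both good. Then
$\|f_u-f_v\|_2\ge\|f_v-\mu_i\|_2-\|f_u-\mu_i\|_2\ge\sqrt{k/n}-\sqrt{0.05k/n}$,
so $\|f_u-f_v\|_2^2\ge(1-\sqrt{0.05})^2k/n>0.56k/n$. On the event of \Cref{lemma:approxdist} this gives $\approxdist(u,v)\ge 0.56k/n-0.05k/n=0.51k/n>k/(2n)$.

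Combined with the same-class computation above, this proves both bullets with the algorithm and constants exactly as stated. Your handling of the probability is correct and matches the paper: you condition once on the uniform event of \Cref{lemma:approxdist}, and no further union bound is needed.
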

\end{samepage}
\Cref{lemma:approxdist} and \Cref{cor:approxdist_decision} are proved in \Cref{sec:apprxdst}. 
In particular, \Cref{alg:approxdist} can be used as a coloring oracle: given a set of color representatives $c_1,\dots,c_k$, where each $c_i$ belongs to a different color class and none of them lies in the set of bad vertices $B$, one can answer each query $u$ by assigning the color of its nearest neighbor among the $c_i$. \Cref{alg:color_main}  does this. 

\begin{algorithm}[H]
\caption{\Color$\left(u, \{c_i\}_{i\in [k]}\right)$
\newline\textbf{Input:} vertex $u$ and set of color representatives $\{c_i \}_{i \in [k]}$ (see \Cref{def:rep})
\newline \textbf{Output:} Index $i \in [k]$ such that  $\chi(u) =\chi(c_i)$ 
}
\label{alg:color_main}
    \begin{algorithmic}[1]
\State \Return $\argmin_{i \in [k]} \approxdist(u,c_i)$
\end{algorithmic}
\end{algorithm}

\subsection{Obtaining color representatives} \label{sec:prepro}
It remains to find a valid set of color representatives. 
The procedure \prepro{} (\Cref{alg:app_centers}) below achieves this. The algorithm samples \(\Theta(k \log k)\) vertices uniformly at random, which ensures that, with high probability, the sample contains at least one vertex from each color class. The algorithm then constructs a ``similarity graph" on the sampled vertices, connecting two sampled vertices if their distance is sufficiently small. By the clustering properties of the spectral embeddings (see \Cref{lemma:B_bound}), with a good probability, vertices from the same color class form a clique in this similarity graph, while vertices from different color classes are disconnected. Finally, the algorithm selects one arbitrary vertex from each connected component of the similarity graph and discards the rest. We show that, with probability at least \(0.95\), this procedure returns exactly
one representative from each color class, yielding a valid set of representatives for use in the coloring oracle.
\Cref{alg:app_centers} is very similar to the preprocessing routines of \cite{SP23} and \cite{FKKLMW26}. 

\begin{algorithm}[H]
\caption{\prepro{}}\label{alg:app_centers}
\begin{algorithmic}[1]
\State $S \gets$  Multiset of $10k\log k$ vertices sampled uniformly at random from $V$
\State $H \gets (S, \emptyset)$\label{line:init_H} \Comment{$H$ will be the same-color graph on $S$ (see \Cref{def:same-color-graph})}
\ForAll{$u, v \in S$}
    \If{$\approxdist(u,v) \leq \frac{k}{3n}$}
        \State Add edge $\{u, v\}$ to $H$ \label{line:add_to_H}
    \EndIf
\EndFor\label{line:end_init_H}
\State $i \gets 1$
\While{$V(H) \neq \emptyset$}\label{line:begin_centers}
    \State Select arbitrary vertex $u$ from $V(H)$\label{line:select_center} \Comment{Select a representative of a color class}
    \State $c_i \gets u$
    \State $T_i \gets \{u\} \cup N_H(u)$ \Comment{The sampled vertices from the same color class as $u$}
  \State $H \gets H[V(H)\setminus T_i]$
\Comment{Remove all vertices from the color class of $u$}
    \State $i \gets i + 1$
\EndWhile\label{line:end_centers}
\State \Return $c_1, c_2, \dots  $
\end{algorithmic}
\end{algorithm}

\begin{samepage}
\begin{definition}[Color Representatives]\label{def:rep}
A set of vertices $\{c_i\}_{i \in [k]}$ is called a \emph{set of color representatives} if the following conditions hold:
\begin{enumerate}
    \item Every $c_i$ is a good vertex (see \Cref{def:goodbad}).
    \item The vertices have distinct colors, i.e., $\chi(c_i) \neq \chi(c_j)$ for all $i \neq j$.
\end{enumerate}
\end{definition}
\end{samepage}

\begin{restatable}[Correctness of \prepro{}]{lemma}{preprolemma}\label{lemma:prepro_main} 
        The procedure \prepro{}  (\Cref{alg:app_centers}) runs in time $\widetilde O(n^{1/2 + O(1/\log(d/\lambda))})$. With probability at least $0.95$, it returns a set of $k$ color representatives (as per \Cref{def:rep}). 
\end{restatable}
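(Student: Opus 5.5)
The plan is to combine three facts, each holding with good probability, and then union bound: (i) the sample $S$ avoids the bad set $B$; (ii) $S$ contains at least one vertex from every color class; (iii) the guarantees of \Cref{cor:approxdist_decision} hold simultaneously for all pairs in $S$. Given these three facts, the similarity graph $H$ is a disjoint union of exactly $k$ cliques, one per color class, and the greedy loop returns one good vertex from each class.

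For the running time, $|S| = 10k\log k = O(1)$ since $k$ is constant. Building $H$ therefore makes $O(k^2\log^2 k) = O(1)$ calls to \approxdist{}, and each call costs $\widetilde O(n^{1/2+O(1/\log(d/\lambda))})$ by \Cref{lemma:approxdist}. The while loop only touches $H$, which has constant size, so the total time is as claimed.

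For correctness, take the three facts in turn.
\begin{enumerate}
\item By \Cref{lemma:B_bound}, $|B| \le O(\sqrt{\lambda/d})\,n$. A union bound over the $10k\log k$ samples gives $\Pr[S\cap B\neq\emptyset] \le 10k\log k\cdot O(\sqrt{\lambda/d})$. This is at most $0.01$ because $\lambda \le c_k d$ with $c_k$ small enough in terms of $k$; this is exactly where the $k$-dependent constant $c_k$ is used.
\item Each class has size $n/k$, so a fixed class is missed with probability $(1-1/k)^{10k\log k} \le e^{-10\log k} = k^{-10}$. A union bound over the $k$ classes gives failure probability at most $k^{-9} \le 0.01$ for $k\ge 3$.
\item By \Cref{cor:approxdist_decision}, with probability $1-n^{-100}$ every pair $u,v\in V\setminus B$ satisfies the dichotomy $\approxdist(u,v)\le \frac{k}{4n}$ versus $\approxdist(u,v)\ge\frac{k}{2n}$. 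There is one subtlety: \approxdist{} is randomized and is invoked on random inputs. I will handle this by conditioning on $S$ (so that $S\subseteq V\setminus B$) and then applying the high-probability bound from \Cref{lemma:approxdist} to each of the $O(1)$ calls, using the independence of the algorithm's internal randomness from $S$.
\end{enumerate}
Since $\frac{k}{4n} < \frac{k}{3n} < \frac{k}{2n}$, the threshold $\frac{k}{3n}$ in \prepro{} separates the two cases. Hence same-color pairs in $S$ are adjacent in $H$ and different-color pairs are not, so $H$ is a disjoint union of cliques $S\cap C_j$, all of them nonempty.

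Under this event, each iteration of the while loop picks some $u$. The set $T_i = \{u\}\cup N_H(u)$ is exactly the clique $S\cap C_{\chi(u)}$, and removing it leaves the remaining cliques intact. So the loop runs exactly $k$ times, producing $c_1,\dots,c_k$ with distinct colors, all in $V\setminus B$, i.e.\ good vertices. This is a set of color representatives per \Cref{def:rep}. The total failure probability is at most $0.01+0.01+o(1)\le 0.05$.

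The main obstacle I expect is step (i). It needs $|B|/n$ small relative to $1/(k\log k)$, so the argument depends on the exact form of the bound in \Cref{lemma:B_bound} together with the constant $c_k$ being chosen small enough. If that bound only gives a constant fraction, I would instead apply Markov's inequality to $|S\cap B|$ and rely on the freedom to shrink $c_k$.
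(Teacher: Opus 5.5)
Your proposal is correct and follows essentially the same route as the paper. The paper likewise bounds $\Pr[S\cap B\neq\emptyset]$ by Markov's inequality (equivalent to your union bound over samples), using $|B|\le n/(10^3k\log k)$ for small enough $\lambda/d$, and bounds the probability of missing a class by $k^{-9}$. It then applies \Cref{cor:approxdist_decision} to show that $H$ is the same-color graph, so the loop outputs exactly $k$ good representatives; your remark about the independence of \approxdist{}'s internal randomness from $S$ is a small extra point of care that the paper leaves implicit.
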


The proof of \Cref{lemma:prepro_main}  follows the analysis of the preprocessing procedure in $\cite{FKKLMW26}$, and is presented in Appendix \ref{sec:prepro_analysis}. 

\section{Analysis}\label{sec:analysis}
In this section, we analyze the algorithms presented in \Cref{sec:alg} and prove the main result, \Cref{thm:main-result}. The section is organized as follows. In \Cref{sec:properties}, we formally define the set $B$ of bad vertices and give a bound on its size. In \Cref{sec:SDP}, we prove  \Cref{lemma:SDP}, establishing the correctness of \SDP{}. Next, in \Cref{sec:apprxdst}, we prove \Cref{lemma:approxdist} and \Cref{cor:approxdist_decision}, thereby establishing the correctness of \approxdist{}. In \Cref{sec:polynomial} we construct the polynomial and prove \Cref{cor:polynomial}. Finally, in \Cref{sec:proof_main}  we combine everything to prove \Cref{thm:main-result}. 

\subsection{Properties of the input instance}\label{sec:properties}
We start by stating some of the key properties of the input instance $G'$. The proofs of these properties are deferred to Appendix \ref{sec:properties_pfs}. 

 \begin{obs}\label{cor:m}
By the model assumption (\Cref{def:model}), $\deg(v) = \Theta(d)$ for all $v \in G'$ and  $\Omega(nd) \leq m \leq nd$. 
 \end{obs}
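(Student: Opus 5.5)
The plan is to read both claims directly off the model assumptions in \Cref{def:model}, then sum degrees to bound $m$.

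\emph{Upper bounds.} The graph $G'$ is obtained from the $d$-regular graph $G$ by deleting edges only. Hence every vertex satisfies $\deg_{G'}(v) \le \deg_G(v) = d$. Summing over all vertices gives
\[
m = \tfrac12 \sum_v \deg_{G'}(v) \le \tfrac12 nd \le nd .
\]

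\emph{Lower bounds.} \Cref{def:model} requires that every vertex $v$ has at least $\Omega(d)$ neighbors in $G$ outside its color class $C_{\chi(v)}$. Every such edge $(v,u)$ has $\chi(u) \neq \chi(v)$, so it is not monochromatic and survives in $E'$. Therefore $\deg_{G'}(v) \ge \Omega(d)$ for every $v$, and together with the upper bound this gives $\deg(v) = \Theta(d)$. Summing over all $n$ vertices yields
\[
m = \tfrac12 \sum_v \deg_{G'}(v) \ge \tfrac12\, n \cdot \Omega(d) = \Omega(nd).
\]

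There is no real obstacle here. The only point needing care is that the constant in $\Omega(d)$ is the fixed model constant. It is uniform over vertices, so it passes unchanged to the bound on $m$.
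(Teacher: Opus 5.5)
Your proposal is correct and matches the paper's reasoning. The paper states this as an immediate consequence of the model assumption, and the accompanying remark makes the same point you do: deleting only monochromatic edges from the $d$-regular $G$ gives $\deg(v)\le d$, and the $\Omega(d)$ bichromatic edges guaranteed by \Cref{def:model} survive. The bounds on $m$ then follow by summing degrees, exactly as you do.
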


\begin{restatable}[Normalized version of Lemma C.4 in \cite{DF16}]{lemma}{spectralprops}\label{lemma:C4}
      Let $G$ be a $d$-regular $\lambda$-expander, where $\lambda \leq c_k d$ (for some constant $c_k$), and let $G' = P_{\mathcal C}(G)$ (as per \Cref{def:model}). Then
    \begin{enumerate}[label=(\textbf{\arabic*})]
    \item The eigenvalues of the normalized adjacency matrix $\bar A$ of $G'$ have the following spectrum: \label{item:C4_spectrum}
    \begin{enumerate}
    \item $\lambda_1(\bar A) = 1$ \label{item:lambda1}
        \item $\left| \lambda_{i}(\bar A) + \frac{1}{k-1}\right| \leq O\left(  \sqrt{\frac{\lambda}{d}}\right)$ for $n-k+2 \leq i \leq n$. \label{item:lambdatop}
        \item $|\lambda_i(\bar A)| \leq O\left( \sqrt{\frac \lambda d}\right) $ for $2 \leq i \leq n-k+1$ \label{item:lambdamid}
    \end{enumerate}
    \item \label{item:projproperty}
    For every $x \in \spn(\{ \1_{C_i}\}_{i \in [k]})$ such that $\langle x, \1 \rangle = 0$, there exists a vector $x^* \in \spn(\{ e_i(G')\}_{i \in [n-k+2, n]})$ such that
     $  \|{D^{1/2}x}- x^*\|_2^2=  O\left ( \sqrt{\frac{\lambda}{d}} \right)\|D^{1/2}x\|^2_2$. Here $C_1, \dots C_k$ denotes the planted partition into color classes. \label{item:C_4_uv} 
    \end{enumerate}
\end{restatable}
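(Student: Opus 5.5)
Our plan is to reduce to the unnormalized statement (Lemma C.4 of \cite{DF16}) by a perturbation argument. The key observation is that $G'$ is close, in a spectral sense, to an ``idealized'' graph. Write $A = A_G - \sum_i A_G[C_i]$ and compare $A$ with the matrix $J' = \frac{d}{n}(J - \sum_i \1_{C_i}\1_{C_i}^\top)$, the expected adjacency matrix of a complete $k$-partite graph scaled to degree $\approx (1-1/k)d$. Since $G$ is a $\lambda$-expander, $A_G = \frac{d}{n}J + E_G$ with $\|E_G\|\le\lambda$. For the deleted blocks we use the expander mixing lemma on each $C_i$: the restriction satisfies $A_G[C_i] = \frac{d}{n}\1_{C_i}\1_{C_i}^\top + E_i$, where $E_i$ acts like $E_G$ compressed to $C_i$.

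The difficulty is that the individual blocks $A_G[C_i]$ need not be near-regular. Their degrees can vary, which is exactly why the paper imposes the $\Omega(d)$ degree condition. So we will not bound $\|A - J'\|$ in operator norm. Instead we control it only in the quadratic-form sense relevant to $\bar A$.

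\textbf{Step 1 (spectrum).} The unnormalized result of \cite{DF16} already gives the following. $A$ has $k-1$ eigenvalues at $-\frac{d}{k}\pm O(\sqrt{\lambda d})$ and top eigenvalue $\approx(1-1/k)d$, with all others $O(\sqrt{\lambda d})$; equivalently, $\|A - J'\|$ is small on most of space. To pass to $\bar A = D^{-1/2}AD^{-1/2}$, we write $D = \frac{(k-1)d}{k}(I + \Delta)$, where $\Delta$ is diagonal with $\|\Delta\|_{\infty}=O(1)$ but $\|\Delta\|_F^2 = O(\lambda/d)\,n$ on average. The latter follows because the degree into one's own class, $\1_v^\top A_G\1_{C_v}$, concentrates by expander mixing:
\[
\sum_v\Bigl(\deg_{C_v}(v)-\tfrac dk\Bigr)^2 \le \lambda^2\cdot n/k \cdot O(1).
\]
Then $\bar A = \frac{k}{(k-1)d}A$ plus an error whose effect on eigenvalues we bound via Courant--Fischer. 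On the $(k-1)$-dimensional test space $D^{1/2}\{x\in\spn(\1_{C_i}):\langle x,\1\rangle=0\}$, the Rayleigh quotient of $\bar A$ is $x^\top Ax/x^\top Dx$. This evaluates to $-\frac{1}{k-1}\pm O(\sqrt{\lambda/d})$ using the identity $x^\top Ax = x^\top A_Gx - \sum_i x^\top A_G[C_i]x$ together with mixing-lemma bounds, which gives the lower eigenvalues. For the middle eigenvalues, we take a vector $y=D^{1/2}z$ orthogonal to $e_1$ and to the test space. Its quadratic form $z^\top Az$ differs from $z^\top A_Gz$ by $\sum_i z_{C_i}^\top A_G z_{C_i}$, which is $O(\lambda)\|z\|^2$ plus rank-$k$ terms that are essentially killed by orthogonality. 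The degree condition $\deg\ge\Omega(d)$ then converts this to $O(\lambda/d)$ relative to $\|y\|^2$. Our loss from not being exactly orthogonal is at most $\sqrt{\lambda/d}$, and this is where the square root arises.

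\textbf{Step 2 (projection property).} For $x$ in the balanced span, let $y=D^{1/2}x/\|D^{1/2}x\|$. Step 1 shows $y^\top\bar A y = -\frac1{k-1}\pm O(\sqrt{\lambda/d})$ and $\langle y,e_1\rangle = \langle x, D\1\rangle/\ldots$, which is small because $\langle x,\1\rangle=0$ and the degrees are nearly uniform per class. We decompose $y = y^* + y^\perp$ along the bottom eigenspace. Since eigenvalues outside it are at least $-O(\sqrt{\lambda/d})$ (or are $1$, but only on a tiny component) and those inside it are $\ge -\frac1{k-1}-O(\sqrt{\lambda/d})$, a Markov-type averaging argument gives $\|y^\perp\|^2 \cdot \frac{1}{k-1}\lesssim O(\sqrt{\lambda/d})$. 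This is the claim. We expect the main obstacle to be the middle-eigenvalue bound in Step 1, namely controlling $\sum_i z_{C_i}^\top A_G[C_i] z_{C_i}$ under the nonuniform degree normalization. We would first try to bound it by $\lambda\|z\|^2+\frac dn\sum_i\langle z,\1_{C_i}\rangle^2$, and then use the near-orthogonality of $y$ to the test space to make the second term $O(\sqrt{\lambda/d})\|y\|^2$.
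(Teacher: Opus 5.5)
There is a genuine gap in your treatment of the bottom eigenvalues, and it carries over to (2). The Rayleigh quotient on the $(k-1)$-dimensional test space $T=D^{1/2}\{x\in\spn(\{\1_{C_i}\}_{i\in[k]}):\langle x,\1\rangle=0\}$ yields, via Courant--Fischer, only $\lambda_{n-k+2}(\bar A)\le-\frac{1}{k-1}+O(\sqrt{\lambda/d})$. That is an upper bound on the bottom $k-1$ eigenvalues. It says nothing about how negative $\lambda_n(\bar A)$ can be, because $\lambda_n$ is a minimum over all of $\R^n$, not over $T$.

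You also cannot import the lower bound from the unnormalized statement $\lambda_n(A)\ge-\frac dk-O(\sqrt{\lambda d})$. For $y=D^{1/2}z$ this gives only $y^\top\bar Ay\ge\lambda_n(A)\|z\|_2^2/(z^\top Dz)$. Since degrees are only guaranteed to be $\Omega(d)$, this is $-O(1)$ rather than $-\frac{1}{k-1}-o(1)$. Your averaging argument in Step 2 uses exactly the missing fact that every eigenvalue in the bottom eigenspace is at least $-\frac{1}{k-1}-O(\sqrt{\lambda/d})$, so (2) is not established either.

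A smaller problem of the same kind affects (1c). The vectors orthogonal to both $e_1$ and $T$ form an $(n-k)$-dimensional space. A Rayleigh-quotient bound there controls $\lambda_{k+1}$ from above, and after adjoining $e_1$, $\lambda_{n-k+1}$ from below. It leaves $\lambda_2,\dots,\lambda_k$ unbounded.

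\textbf{How the paper handles it.} It proves a \emph{residual} bound rather than a Rayleigh-quotient bound: $\|(\bar A+\frac{1}{k-1}I)D^{1/2}x\|_2^2\le O(\lambda/d)\|D^{1/2}x\|_2^2$ for balanced $x$. It combines this with the approximate-eigenvector lemma (\Cref{lemma:B3}) and a dimension count. This shows at least $k-1$ eigenvalues lie within $O(\sqrt{\lambda/d})$ of $-\frac{1}{k-1}$, and by (a) and (c) they must be $\lambda_{n-k+2},\dots,\lambda_n$. So both sides of (1b) come at once. Property (2) follows from the same lemma with $\delta=(\lambda/d)^{1/4}$. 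Property (1c) comes from Courant--Fischer on $D^{1/2}\spn\{v_2,\dots,v_n\}$ (resp.\ $D^{1/2}\spn\{v_1,\dots,v_{n-k+1}\}$) together with the unnormalized bound of \Cref{lemma:small_eval}.

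\textbf{How to close the gap in your framework.} For every $z$, $z^\top Az\ge-\frac dn\sum_i\langle z,\1_{C_i}\rangle^2-2\lambda\|z\|_2^2$. Apply the weighted Cauchy--Schwarz inequality $\langle z,\1_{C_i}\rangle^2\le\bigl(\sum_{v\in C_i}\deg(v)^{-1}\bigr)\sum_{v\in C_i}\deg(v)z_v^2$. Your own degree concentration gives $\sum_{v\in C_i}\deg(v)^{-1}\le\frac{n}{(k-1)d}(1+O(\lambda/d))$. Together these yield $z^\top Az\ge-\bigl(\frac{1}{k-1}+O(\lambda/d)\bigr)z^\top Dz$ globally.

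Similarly, you can bound $\lambda_2$ on all of $e_1^\perp$. There $z^\top Az\le\frac dn\langle z,\1\rangle^2-\frac dn\sum_i\langle z,\1_{C_i}\rangle^2+2\lambda\|z\|_2^2$. The term $\langle z,\1\rangle$ is small because $\langle z,D\1\rangle=0$ and the degrees concentrate. With these additions, your averaging proof of (2) is a valid and more elementary alternative to the paper's use of \Cref{lemma:B3}.
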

  
\cite{DF16} proves analogous guarantees to those in \Cref{lemma:C4} for the eigenvalues and eigenvectors of the \emph{unnormalized} adjacency matrix (see Lemma C.4 in \cite{DF16}). Our proof of \Cref{lemma:C4} is similar and is included in Appendix \ref{sec:C4} for completeness. 
\begin{remark}\label{rem:unique_embedding}
It follows from guarantee \ref{item:C4_spectrum} in \Cref{lemma:C4} that the space spanned by the bottom $k-1$ eigenvectors of $\bar A$ is uniquely defined, i.e., the choice of $\U$ is unique up to multiplication by an orthogonal matrix $R\in \mathbb{R}^{(k-1) \times (k-1)}$ on the right. 

While the choice of $f_v$ for $v \in V$ is not unique, the dot product 
	between the spectral embeddings of $u\in V$ and $v\in V$ is well defined, since for every orthogonal 
	$R\in \mathbb{R}^{(k-1)\times (k-1)}$ one has  
	\[\langle R^\top f_u, R^\top f_v\rangle=(R^\top f_u)^\top (R^\top f_v)=f_u^\top  R R^\top f_v=\langle f_u, f_v \rangle\text{.}\]
\end{remark}
A useful property of the spectral embeddings is that they concentrate around their color class center. 
\begin{definition}[Color class center]\label{def:mu}
For each color class $C_i \in \mathcal C$, let $\mu_i$ denote the \emph{color class center}
\begin{equation*}
\mu_i \coloneqq \frac{1}{|C_i|} \sum_{v \in C_i} f_v = \frac{1}{|C_i|} \U^\top \1_{C_i}. 
\end{equation*}
\end{definition}
We define a set of \emph{good} vertices consisting of those vertices that are close to their own color class center, far from other color class centers, and whose degree is close to the ``typical" value $\frac{d(k-1)}{k}$.
\begin{definition}[Good vertex]\label{def:goodbad}
    Say that a vertex $v \in C_i$ is $\emph{good}$ if 
    \begin{enumerate}[label=(\textbf{\arabic*})]
    \item $\frac{d(k-1)}{k}\left( 1 - \sqrt{\frac{\lambda}{d}}\right) \leq \deg(v) \leq \frac{d(k-1)}{k}\left( 1 + \sqrt{\frac{\lambda}{d}}\right)$, and \label{prop_goodbad_degree}
    \item $\| f_v - \mu_i\|^2_2 \leq \ \frac{0.05k}{n} $, and \label{prop_goodbad_own}
    \item for all $j \neq i$, $\|f_v - \mu_j\|^2_2 \geq \frac{k}{n} $.\label{prop_goodbad_other}
    \end{enumerate}
    Say that the vertex is bad otherwise. Let  $B \subseteq V$ denote the set of bad vertices. 
\end{definition}

We will show that the oracle is correct on all good vertices. 
Observe that the set \(B\) of bad vertices is deterministic once the input graph and planted coloring are fixed. It consists of vertices whose degree or spectral embedding is atypical relative to their planted color class. 

\begin{restatable}{lemma}{Bbound}\label{lemma:B_bound}
Let $B$ be the set of bad vertices (as per \Cref{def:goodbad}). Then 
$$|B| \leq O\left( \sqrt{\frac{\lambda }{d}}\cdot n \right).$$
\end{restatable}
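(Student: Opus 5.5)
We bound separately the vertices violating each of the three conditions in \Cref{def:goodbad}; a union bound then gives the claim.

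\textbf{Degree condition \ref{prop_goodbad_degree}.} A vertex $v\in C_i$ has $\deg_{G'}(v)=d-|N_G(v)\cap C_i|$. Since $\1_{C_i}-\frac1k\1$ is orthogonal to the top eigenvector of $A_G$, the expander mixing lemma gives
\[
\sum_{v\in C_i}\Bigl(|N_G(v)\cap C_i|-\tfrac dk\Bigr)^2\le \lambda^2\|\1_{C_i}-\tfrac1k\1\|^2=O(\lambda^2 n).
\]
Hence the number of vertices with $\bigl||N_G(v)\cap C_i|-d/k\bigr|>\Omega(\sqrt{\lambda d})$ is at most $O(\lambda^2 n/(\lambda d))=O(\lambda n/d)$, which is $O(\sqrt{\lambda/d}\,n)$. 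This is exactly the window $\frac{d(k-1)}{k}\sqrt{\lambda/d}=\Theta(\sqrt{\lambda d})$ in \ref{prop_goodbad_degree}, up to adjusting constants.

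\textbf{Embedding conditions \ref{prop_goodbad_own} and \ref{prop_goodbad_other}.} Let $P=\U\U^\top$ be the projection onto the bottom $k-1$ eigenvectors, so that $f_v=\U^\top\1_v$. The plan is to show that $P$ is close in Frobenius norm to an ``ideal'' projection $Q$ onto $W=\spn\{D^{1/2}(\1_{C_i}-\frac1k\1)\}$ (the $D^{1/2}$-weighted centered class indicators, orthonormalized). Applying \Cref{lemma:C4}\ref{item:C_4_uv} to an orthonormal basis $x_1,\dots,x_{k-1}$ of $W$ gives
\[
\|(I-P)x_j\|^2\le \|x_j-x_j^*\|^2=O(\sqrt{\lambda/d}),
\]
so $\|(I-P)Q\|_F^2=O(k\sqrt{\lambda/d})$. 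Since both $P$ and $Q$ have rank $k-1$, this yields $\|P-Q\|_F^2=O(\sqrt{\lambda/d})$ for constant $k$. Consequently
\[
\sum_v\|P\1_v-Q\1_v\|^2=\|P-Q\|_F^2=O(\sqrt{\lambda/d}).
\]
For vertices satisfying the degree condition, $Q\1_v$ depends essentially only on $\chi(v)$ up to $O(\sqrt{\lambda/d})$-relative perturbations, since $D^{1/2}\approx\sqrt{d(k-1)/k}\,I$. Call this value $g_i$. The vectors $g_i$ have squared norm $\Theta(k/n)$ and pairwise squared distances about $\frac{2k}{n}\cdot\frac{k-1}{k}\cdot(1+o(1))\ge \frac{1.9k}{n}$; this can be computed from the centered indicator Gram matrix.

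Markov's inequality then shows that the number of $v$ with $\|P\1_v-g_{\chi(v)}\|^2\ge \epsilon k/n$ is at most $O(\sqrt{\lambda/d}\,n/(\epsilon k))$. Write $f_v$ in the coordinates of $\U$, so that $\|f_v-f_w\|=\|P\1_v-P\1_w\|$. Averaging over $C_i$ shows that $\mu_i$ is within $O((\lambda/d)^{1/4}\sqrt{k/n})$ of the image of $g_i$, by Cauchy--Schwarz on the same Frobenius sum. With a small constant $\epsilon$, each remaining vertex is within $0.05k/n$ of $\mu_i$ in squared distance. By the triangle inequality it is at squared distance at least $\bigl(\sqrt{1.9}-2\sqrt{0.05}-o(1)\bigr)^2 k/n\ge k/n$ from every other $\mu_j$.

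\textbf{Main obstacle.} The main difficulty is the normalization: $W$ is defined with $D^{1/2}$, which is not exactly constant. I would handle this by replacing $D^{1/2}$ by its typical value and charging the discrepancy to the degree-bad vertices, together with $\sum_v(\deg(v)-\bar d)^2=O(\lambda^2 n)$, while tracking constants carefully enough to meet the thresholds $0.05$ and $1$.
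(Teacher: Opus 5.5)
Your argument is sound, modulo two arithmetic slips noted below, but it takes a genuinely different route from the paper.

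\textbf{Comparison with the paper.} The paper never compares $\U\U^\top$ with an ideal projection. Instead it passes to the $k$-dimensional embedding $f^*_v$ (bottom $k-1$ eigenvectors plus $e_1$, which is needed because $\1_{C_i}/|C_i|$ is not centered). It proves the exact variance identity $\sum_v\|f^*_v-\mu^*_{\chi(v)}\|_2^2=\sum_i|C_i|\left(\frac{1}{|C_i|}-\|\mu^*_i\|_2^2\right)$. It then bounds the right-hand side by showing, via \Cref{lemma:C4} and \Cref{lemma:mu_to_Dmu}, that $\1_{C_i}/|C_i|$ lies almost entirely in the span of $\Ustar$. This yields both the Markov bound on $B_{\mathrm{spec}}$ and the near-orthogonality of the $\mu^*_i$ (\Cref{lemma:mu_inner_prods}, reused for \Cref{lemma:goodnorm}). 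A separate claim then strips off the $e_1$-coordinate for degree-good vertices, and $B_{\deg}$ is handled by \Cref{lemma:degrees}.

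Your $\sin\Theta$-style bound $\|P-Q\|_F^2=2\|(I-P)Q\|_F^2=O(\sqrt{\lambda/d})$ replaces all of this at once. Because $Q$ is built from $D^{1/2}$-weighted centered indicators, you have the exact identity $Q\1_v=\sqrt{\deg(v)}\,h_{\chi(v)}$. So the degree normalization costs one line and no top-eigenvector bookkeeping is needed. Your bound $\|A_G(\1_{C_i}-\frac1k\1)\|_2^2\le\lambda^2\|\1_{C_i}-\frac1k\1\|_2^2$ is also a slicker substitute for \Cref{lemma:degrees}. The price is that you must compute the Gram matrix of the $g_i$ for the actual $D$, not the idealized centered-indicator one. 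This requires $\mathrm{vol}(C_i)=\frac{d(k-1)}{k}|C_i|\left(1\pm O(\lambda/d)\right)$, which follows from your bound $\sum_v(\deg(v)-\frac{d(k-1)}{k})^2=O(\lambda^2 n)$.

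\textbf{Slips to fix.}
First, the ideal separation is $\|g_i-g_j\|_2^2\approx\frac{2k}{n}$, coming from $\langle g_i,g_j\rangle\approx\frac{k}{n}\delta_{ij}-\frac{1}{n}$. Your expression $\frac{2k}{n}\cdot\frac{k-1}{k}$ drops the cross term $-2\langle g_i,g_j\rangle\approx\frac{2}{n}$, and would fall below $\frac{1.9k}{n}$ for $k<20$.

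Second, $(\sqrt{1.9}-2\sqrt{0.05})^2\approx 0.87<1$, so the final triangle inequality as written does not close. Subtract the $\sqrt{0.05}$ term only once, e.g. $\|f_v-\mu_j\|_2\ge\|\mu_i-\mu_j\|_2-\|f_v-\mu_i\|_2\ge(\sqrt{1.9}-o(1)-\sqrt{0.05})\sqrt{k/n}$, whose square is about $1.33\,k/n$.

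Finally, when averaging over $C_i$ to locate $\mu_i$, remember the term $\frac{1}{|C_i|}\sum_{v\in C_i}\left(\sqrt{\deg(v)}-\sqrt{d(k-1)/k}\right)h_i$. It runs over all of $C_i$, degree-bad vertices included, and is $O(\lambda/d)\|g_i\|_2$ by the same variance bound.
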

A similar bound for vertices with bad spectral properties was proved in \cite{DF16}. While their bound is similar, the definition of “bad” vertices in \cite{DF16} differs slightly from ours. The proof of \Cref{lemma:B_bound} is included in Appendix~\ref{sec:good-vertices}.

Additionally, we will use the fact that the $\ell_2$-norm of the embeddings of good vertices is not too large. 
\begin{restatable}{lemma}{goodnorm}\label{lemma:goodnorm}
        If $v$ is good (as per \Cref{def:goodbad}), then 
    $$\|f_v\|^2_2  =  O  \left( \frac{1}{n}\right).$$
\end{restatable}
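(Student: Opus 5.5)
By the triangle inequality and property \ref{prop_goodbad_own} of \Cref{def:goodbad}, for good $v\in C_i$ we have
$$\|f_v\|_2 \le \|f_v-\mu_i\|_2 + \|\mu_i\|_2 \le \sqrt{0.05k/n} + \|\mu_i\|_2 .$$
Since $k$ is a constant, it suffices to show $\|\mu_i\|_2^2 = O(1/n)$ for every color class $C_i$.

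To bound the center, I will use \Cref{def:mu}: $\mu_i = \frac{1}{|C_i|}\U^\top \1_{C_i} = \frac{k}{n}\U^\top\1_{C_i}$. The columns of $\U$ are orthonormal, so $\U\U^\top$ is an orthogonal projection and $\U^\top$ does not increase Euclidean norm. Hence
$$\|\mu_i\|_2 \le \frac{k}{n}\|\1_{C_i}\|_2 = \frac{k}{n}\sqrt{n/k} = \sqrt{k/n}.$$
This gives $\|\mu_i\|_2^2 \le k/n$.

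Combining the two bounds,
$$\|f_v\|_2^2 \le \left(\sqrt{0.05k/n}+\sqrt{k/n}\right)^2 \le 3k/n = O(1/n).$$

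There is no real obstacle here; the argument uses only the contraction property of the projection and the balance $|C_i|=n/k$ from \Cref{def:model}. The one point to check is that $\mu_i$ is defined via the unweighted indicator $\1_{C_i}$, with no degree reweighting, and \Cref{def:mu} confirms that it is.
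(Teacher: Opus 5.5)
Your proof is correct. It has the same skeleton as the paper's, namely the triangle inequality through the center $\mu_{\chi(v)}$ together with property~\ref{prop_goodbad_own} of \Cref{def:goodbad}, but you bound the norm of the center differently.

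The paper notes $\|\mu_{\chi(v)}\|_2 \le \|\mu^*_{\chi(v)}\|_2$ and then invokes \Cref{lemma:mu_inner_prods} to get $\|\mu^*_{\chi(v)}\|_2^2 \le \frac{1}{|C_{\chi(v)}|}\bigl(1+O(\sqrt{\lambda/d})\bigr)$. That lemma's proof rests on the spectral structure from \Cref{lemma:C4} and the degree bounds of \Cref{lemma:degrees}, and the paper finishes with the assumption $\lambda/d = O(1)$.

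You instead observe that $\U^\top$ is a contraction. This gives $\|\mu_i\|_2^2 \le 1/|C_i| = k/n$ immediately, with no spectral input and no condition on $\lambda/d$. The same argument applies verbatim to $\mu_i^*$, since $\Ustar$ also has orthonormal columns.

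For this statement only an upper bound on the center's norm is needed, and your projection argument makes that direction trivial. The two-sided estimate and the near-orthogonality in \Cref{lemma:mu_inner_prods} are what the paper genuinely needs elsewhere, in the proofs of \Cref{lemma:B_in_Bstar} and \Cref{lemma:Bspect}. So your route is the more economical one here.
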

The proof of \Cref{lemma:goodnorm} is presented in \Cref{sec:goodnorm}.  

\subsection{Analysis of \SDP{} (Proof of \Cref{lemma:SDP})}\label{sec:SDP}
In this section, we prove the correctness guarantee of \SDP{}, namely \Cref{lemma:SDP}. 
 Recall that we wish to apply the polynomial from \Cref{cor:polynomial} to the normalized adjacency matrix $\bar A$ in order to approximate the projection $\U\U^\top$. However, since \Cref{cor:polynomial} does not provide any guarantees for $p(1)$, we need to shift the top eigenvalue. Therefore, it will be convenient for the analysis to consider the matrix $\bar A - e_1e_1^\top$.
 \begin{definition}\label{def:M}
Let 
$$\Astar = \bar A - e_1e_1^\top. $$
\end{definition}
We start by showing that $\1_u^\top p(\Astar)^2\1_v$ approximates the inner product $\langle f_u, f_v\rangle.$ Later, in \Cref{claim:rw_to_Astar}, we will show that the quantity $\sqrt{\deg(u)}\sqrt{\deg(v)}\sum_{t,t'} c_tc_{t'} \left(\left \langle D^{-1/2}  \hat{p}^t_u , D^{-1/2} \hat{p}^{t'}_v\right \rangle - \frac{1}{2m}\right)$ computed by \Cref{alg:SDP} approximates $\1_u^\top p(\Astar)^2\1_v $. 
\begin{lemma}\label{lemma:poly_to_emnedding} Let $\delta>0$ be a sufficiently small constant, let $p$ be the polynomial from  $\Cref{cor:polynomial}$ (applied with the parameter $\delta$), and let $B$ be the set of bad vertices from \Cref{def:goodbad}. For every $u,v \in V \setminus B$, 
    $$ \left| \1_u^\top p(\Astar)^2\1_v  - \langle f_u, f_v\rangle \right| \leq \frac{1}{160}\frac{k}{n}.$$
\end{lemma}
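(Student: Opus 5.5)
We work in the eigenbasis of $\Astar$. Since $e_1$ is an eigenvector of $\bar A$, the matrix $\Astar$ has the same eigenvectors $e_1,\dots,e_n$. Its eigenvalues are $0$ on $e_1$ and $\lambda_i(\bar A)$ on $e_i$ for $i\ge 2$. Hence
\[
\1_u^\top p(\Astar)^2\1_v=\sum_{i=1}^n p(\mu_i)^2\,\langle e_i,\1_u\rangle\langle e_i,\1_v\rangle ,
\]
where $\mu_1=0$ and $\mu_i=\lambda_i(\bar A)$ for $i \geq 2$. Also $\langle f_u,f_v\rangle=\sum_{i=n-k+2}^n\langle e_i,\1_u\rangle\langle e_i,\1_v\rangle$. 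We apply \Cref{cor:polynomial} with the constant $C$ equal to the hidden constant in \Cref{lemma:C4}\ref{item:C4_spectrum}. This puts the bottom $k-1$ eigenvalues in the interval where $|p-1|\le\delta$, and all $\mu_i$ with $1\le i\le n-k+1$ (including $\mu_1=0$) in the region where $|p|\le n^{-2}$.

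We split the difference into two parts. The first part is the contribution of the middle and trivial eigenvalues:
\[
\Bigl|\sum_{i\le n-k+1} p(\mu_i)^2\langle e_i,\1_u\rangle\langle e_i,\1_v\rangle\Bigr| \le n^{-4}\sum_i|\langle e_i,\1_u\rangle||\langle e_i,\1_v\rangle|\le n^{-4}\|\1_u\|_2\|\1_v\|_2=n^{-4},
\]
using Cauchy--Schwarz and orthonormality. This is negligible compared with $k/(160n)$.

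The second part is the bottom eigenspace error. It equals $\sum_{i\ge n-k+2}(p(\mu_i)^2-1)\langle e_i,\1_u\rangle\langle e_i,\1_v\rangle$. Each factor satisfies $|p(\mu_i)^2-1|\le 2\delta+\delta^2\le 3\delta$. By Cauchy--Schwarz over the $k-1$ indices, this part is bounded by
\[
3\delta\Bigl(\sum_{i\ge n-k+2}\langle e_i,\1_u\rangle^2\Bigr)^{1/2}\Bigl(\sum_{i\ge n-k+2}\langle e_i,\1_v\rangle^2\Bigr)^{1/2}=3\delta\|f_u\|_2\|f_v\|_2 .
\]
Since $u,v$ are good, \Cref{lemma:goodnorm} gives $\|f_u\|_2\|f_v\|_2=O(1/n)$. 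So this part is at most $3\delta\cdot O(1/n)$, which is at most $\frac{1}{200}\frac{k}{n}$ once $\delta$ is a sufficiently small constant (depending on $k$ and the constant in \Cref{lemma:goodnorm}). Adding the two parts gives the claimed bound $\frac{1}{160}\frac{k}{n}$ for large $n$.

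The main care point is the second part: it needs goodness of $u,v$, through \Cref{lemma:goodnorm}, to make the error relative rather than absolute. The multiplicative $\delta$-error must be converted into an $O(\delta/n)$ additive error. The other point to check is that $\mu_1=0$ after the shift lands in the region where $|p|\le n^{-2}$. It does, since $|0|\le C\sqrt{\lambda/d}$, and this is exactly why $\Astar$ is used instead of $\bar A$.
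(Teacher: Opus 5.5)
Your proposal is correct and follows the paper's proof: the same eigenbasis decomposition of $\Astar$ into the bottom $(k-1)$-dimensional part, bounded by $3\delta\|f_u\|_2\|f_v\|_2$ via \Cref{lemma:goodnorm}, and the remaining part, which is negligible by \Cref{cor:polynomial}. The only cosmetic difference is in the remaining part. The paper bounds it using positive semidefiniteness, $|x^\top X y|\le x^\top X x+y^\top X y$, to get $2n^{-2}$, whereas your entrywise Cauchy--Schwarz gives the slightly sharper $n^{-4}$.
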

\begin{proof}
The high-level idea is to decompose the matrix $\Astar$ according to its spectrum. We separate it into two parts: one corresponding to the bottom $k-1$ eigenvalues,
and one corresponding to the remaining $n-k+1$ eigenvalues.
By \Cref{lemma:C4}, the bottom $k-1$ eigenvalues are all close to $-1/(k-1)$.
Applying the polynomial~$p$ to this part therefore ``boosts'' these eigenvalues
up to (approximately) $1$, so this part closely approximates the spectral
projection $U_{[k-1]}U_{[k-1]}^\top$.
On the other hand, the remaining $n-k+1$ eigenvalues have  absolute value at most
$O(\sqrt{\lambda/d})$ (again by \Cref{lemma:C4}).
The polynomial~$p$ maps these values close to zero, so the error introduced by this part is small.

We now do this more formally. Let $\Astar = U \Sigma^{\star}U^\top$ denote the eigendecomposition of $\Astar$, where
\begin{equation*}\label{eq:sigma}
   \Sigma^{\star}= \diag\!\left(0, \lambda_2, \lambda_3, \ldots, \lambda_n\right).
\end{equation*}
Observe that $\bar A$ and $\Astar$ share the same eigenbasis, so the matrix $U_{[k-1]}$ from \Cref{def:spec-emb} is the matrix consisting of the last $k-1$ columns of $U$. 

Write $\Sigma^{\star}_{[k-1]}$ for the $(k-1)\times(k-1)$ diagonal matrix consisting of the bottom $k-1$ diagonal entries of $\Sigma^{\star}$, and $\Sigma^{\star}_{\mathrm{top}}$ for the $(n-k+1)\times(n-k+1)$ diagonal matrix consisting of the remaining (top) entries. Let $U_{\mathrm{top}}$ denote the matrix formed by the first $n-k+1$ columns of $U$.

With this notation, we can decompose
\begin{equation}\label{id:Sigma}
p(\Astar)^2
= U p(\Sigma^{\star})^2 U^\top
= U_{[k-1]} p(\Sigma^{\star}_{[k-1]})^2 U_{[k-1]}^\top
\;+\;
U_{\mathrm{top}} p(\Sigma^{\star}_{\mathrm{top}})^2 U_{\mathrm{top}}^\top.
\end{equation}
Recalling from \Cref{def:spec-emb} that 
$ \langle f_u, f_v \rangle = \1_u^\top  \U \U^\top \1_v$, we can bound the error by decomposing $p(\Astar)^2$ into the two terms in \Cref{id:Sigma}:  
\begin{equation}\label{eq:rw_to_embedding_polynomial}
 \begin{aligned}
     \left| \1_u^\top p(\Astar)^2\1_v  - \langle f_u, f_v\rangle \right|  & = \left|\1_u^\top Up(\Sigma^{\star})^2U^\top \1_v - \1^\top _u\U \U^\top \1_v\right| \\
     & = \left| \1_u^\top U_{[k-1]} p(\Sigma^{\star}_{[k-1]})^2 U_{[k-1]}^\top \1_v
+
\1_u^\top U_{\mathrm{top}} p(\Sigma^{\star}_{\mathrm{top}})^2 U_{\mathrm{top}}^\top \1_v- \1^\top _u\U \U^\top \1_v \right|\\
    & =  \left|\1_u^\top\U\left(p(\Sigma^{\star}_{[k-1]})^2 - I_{k-1}\right)\U^\top\1_v + \1_u^\top  U_{\text{top}}p(\Sigma^{\star}_{\text{top}})^2 U_{\text{top}}^\top  \1_v\right|  \\
& \leq \left|f_u^\top \left(p(\Sigma^{\star}_{[k-1]})^2 - I_{k-1}\right)\ f_v\right| + \left|\1^\top _u U_{\text{top}}p\left(\Sigma^{\star}_{\text{top}}\right)^2 U_{\text{top}}^\top \1_v\right|. 
\end{aligned}
\end{equation}

Here the second equality follows by \Cref{id:Sigma}, and the last inequality follows by the triangle inequality and the identity $f_u = U_{[k-1]}^\top \1_u$. 

We now bound the first summand in \Cref{eq:rw_to_embedding_polynomial}. By \Cref{lemma:C4} and the definition of $\Sigma^{\star}_{[k-1]}$, the matrix  $\Sigma^{\star}_{[k-1]}$ is a diagonal matrix with entries in $\left[ -\frac{1}{k-1}- O\left( \sqrt{\frac\lambda d}\right), -\frac{1}{k-1}+O\left( \sqrt{\frac\lambda d}\right)\right]$. Therefore, by \Cref{cor:polynomial}, $p(\Sigma^{\star}_{[k-1]})^2$ is a diagonal matrix with entries in $[ 1- 3\delta, 1+3\delta]$. Hence, 
$$\left \| p(\Sigma^{\star}_{[k-1]})^2 - I_{k-1}\right\|_2 \leq 3\delta, $$ so

\begin{equation}\label{eq:term1}
     \left|f_u^\top (p(\Sigma^{\star}_{[k-1]})^2 - I_{k-1})\ f_v\right| \leq \left \| f_u \right \|_2 \left \|\left( p(\Sigma^{\star}_{[k-1]})^2 - I_{k-1}\right) f_v\right \|_2 \leq 3\delta \|f_u\|_2 \|f_v\|_2 \leq \frac{1}{320n}, 
\end{equation}
where the last inequality follows by \Cref{lemma:goodnorm} and the assumption that $\delta$ is a sufficiently small constant. 

We now bound the second summand in \Cref{eq:rw_to_embedding_polynomial}.
The matrix $U_{\text{top}}p(\Sigma^{\star}_{\text{top}})^2U_{\text{top}}^\top$ is positive semidefinite, and therefore  
 \begin{align*}
 \left|\1^\top _uU_{\text{top}}p(\Sigma^{\star}_{\text{top}})^2U_{\text{top}}^\top   \ \1_v\right|   & \leq \1^\top _uU_{\text{top}}p(\Sigma^{\star}_{\text{top}})^2U_{\text{top}}^\top  \1_u + \1_v^\top U_{\text{top}}p(\Sigma^{\star}_{\text{top}})^2U_{\text{top}}^\top    \1_v \\
 & \leq \left\| p\left( \Sigma^{\star}_{\mathrm{top}} \right)^2 \right\|_2 \left \|U_{\text{top}}^\top  \1_u\right\|^2_2  + \left\| p\left( \Sigma^{\star}_{\mathrm{top}} \right)^2 \right\|_2 \left \|U_{\text{top}}^\top  \1_v\right\|^2_2  \\
 & \leq 2 \left\| p\left( \Sigma^{\star}_{\mathrm{top}} \right)^2 \right\|_2, 
 \end{align*}
where the last inequality uses that $\left\| U_{\text{top}}^\top  \1_w\right\|^2_2 \leq \|U^\top \1_w\|^2_2 = \|\1_w\|^2_2 = 1$  for all $w \in V$.

By \Cref{lemma:C4} and the definition of $\Sigma^{\star}_{\text{top}}$, the matrix  $\Sigma^{\star}_{\text{top}}$ is a diagonal matrix with entries bounded in absolute value by $O(\sqrt{\lambda/d})$.   Therefore, by \Cref{cor:polynomial}, 
$$ \left\|p\left( \Sigma^{\star}_{\text{top}}\right)^2 \right\|_2 \leq n^{-2}.$$
So the second summand in \Cref{eq:rw_to_embedding_polynomial} can be bounded as 
 \begin{equation}\label{eq:term2}
 \begin{aligned}
  \left|\1^\top _uU_{\text{top}}p(\Sigma^{\star}_{\text{top}})^2U_{\text{top}}^\top   \ \1_v\right|   \leq 2   \left\| p\left( \Sigma^{\star}_{\mathrm{top}} \right)^2 \right\|_2\leq 2n^{-2} \leq \frac{1}{320n}. 
\end{aligned}
 \end{equation} 
 Combining \Cref{eq:rw_to_embedding_polynomial}, \Cref{eq:term1} and \Cref{eq:term2} gives the claim. 
 \end{proof}

The following lemma shows that the quantity $\sqrt{\deg(u)}\sqrt{\deg(v)}\sum_{t,t'} c_tc_{t'} \left(\left \langle D^{-1/2}  \hat{p}^t_u , D^{-1/2} \hat{p}^{t'}_v\right \rangle - \frac{1}{2m}\right)$  computed by \Cref{alg:SDP} has expected value $\1_u^\top p(\Astar)^2\1_v = \sum_{t,t'}c_tc_{t'} \langle \Astar^t\1_u, \Astar ^{t'}\1_v\rangle$.

\begin{lemma}\label{lemma:expected_val}
For every $u,v \in V$ and every $t,t'\geq 1$, 
$$\sqrt{\deg(u)} \sqrt{\deg(v)}\left( \langle D^{-1/2} M^t \1_u,  D^{-1/2}M^{t'}\1_v \rangle  - \frac{1}{2m}\right) =\langle \Astar ^t  \1_u, \Astar^{t'} \1_v \rangle.$$
\end{lemma}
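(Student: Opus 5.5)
The plan is a direct linear-algebra computation using $M = D^{1/2}\bar A D^{-1/2}$ and the spectral structure of $\bar A$ around the top eigenvector $e_1 = D^{1/2}\1/\sqrt{2m}$ (\Cref{obs:e1}).

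\textbf{Step 1: rewrite powers of $M$.} From $M = D^{1/2}\bar A D^{-1/2}$ we get $M^t = D^{1/2}\bar A^t D^{-1/2}$. Hence
\[
D^{-1/2}M^t\1_u = \bar A^t D^{-1/2}\1_u = \deg(u)^{-1/2}\,\bar A^t\1_u .
\]
Consequently,
\[
\sqrt{\deg(u)}\sqrt{\deg(v)}\,\langle D^{-1/2}M^t\1_u, D^{-1/2}M^{t'}\1_v\rangle = \langle \bar A^t\1_u, \bar A^{t'}\1_v\rangle .
\]

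\textbf{Step 2: split off the top eigenvector.} Since $e_1$ is a unit eigenvector of the symmetric matrix $\bar A$ with eigenvalue $1$, the matrices $e_1e_1^\top$ and $\Astar$ commute and satisfy $e_1e_1^\top\Astar = 0$. Therefore, for every $t\ge 1$,
\[
\bar A^t = (\Astar + e_1e_1^\top)^t = \Astar^t + e_1e_1^\top .
\]
This is the one place where the hypothesis $t \ge 1$ is needed: for $t=0$ the identity fails, because $\Astar^0 = I$.

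\textbf{Step 3: expand the inner product.} Substituting Step 2 gives
\[
\langle \bar A^t\1_u, \bar A^{t'}\1_v\rangle = \1_u^\top(\Astar^t + e_1e_1^\top)(\Astar^{t'} + e_1e_1^\top)\1_v .
\]
Both cross terms vanish by the orthogonality in Step 2, and $(e_1e_1^\top)^2 = e_1e_1^\top$. This leaves
\[
\langle \Astar^t\1_u, \Astar^{t'}\1_v\rangle + (e_1)_u (e_1)_v .
\]
Finally, $(e_1)_u(e_1)_v = \sqrt{\deg(u)\deg(v)}/(2m)$, which is exactly $\sqrt{\deg(u)}\sqrt{\deg(v)}$ times the subtracted term $\frac{1}{2m}$. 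Cancelling it yields the claim.

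None of these steps is a real obstacle. The only care needed is the bookkeeping of the $D^{\pm1/2}$ factors in Step 1 and remembering that the requirement $t,t'\ge1$ is what justifies Step 2.
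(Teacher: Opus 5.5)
Your proposal is correct and follows essentially the same route as the paper: it first uses $M^t = D^{1/2}\bar A^t D^{-1/2}$ to reduce the left-hand side to $\langle \bar A^t \1_u, \bar A^{t'}\1_v\rangle$, then uses $\Astar^t = \bar A^t - e_1e_1^\top$ for $t \ge 1$ to peel off $\langle e_1,\1_u\rangle\langle e_1,\1_v\rangle = \sqrt{\deg(u)\deg(v)}/(2m)$. Your explicit check that the cross terms vanish because $e_1e_1^\top\Astar = \Astar e_1e_1^\top = 0$ fills in a step the paper leaves implicit.
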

\begin{proof}
We start by relating the left-hand side to $\langle \bar A^t  \1_u, \bar A^{t'} \1_v \rangle$. From the definition of $M = AD^{-1}$ and $\bar A = D^{-1/2}AD^{-1/2}$, for every $t$, it holds that $M^t = D^{1/2}\bar A^t D^{-1/2}$. Therefore, 
\begin{equation}\label{eq:M_to_A}
    \begin{aligned}
     \langle D^{-1/2}M^t \1_u,  D^{-1/2}M^{t'}\1_v \rangle & = \1_u^\top \left(M^t\right)^{\top}D^{-1}M^{t'}\1_v \\
     & = \1_u^\top  D^{-1/2}  (\bar A^t)^\top D^{1/2}D^{-1}D^{1/2}A^{t'}D^{-1/2}\1_v \\
     & = \left( D^{-1/2}\1_u\right)^{\top}   (\bar A^t)^\top A^{t'}D^{-1/2}\1_v \\
     & = \frac{\langle \bar A^t  \1_u, \bar A^{t'} \1_v \rangle }{\sqrt{\deg(u)}\sqrt{\deg(v)} },
\end{aligned}
\end{equation}
\noindent
Next, we relate this to $\langle \Astar ^t  \1_u, \Astar^{t'} \1_v \rangle$. Recall from \Cref{obs:e1} that $e_1 =  \frac{D^{1/2}\1}{\sqrt{2m}}$. Since $\bar A e_1 = e_1$, we get
$$\Astar^t = (\bar A-e_1e_1^\top )^t = \bar A ^t -e_1e_1^\top \qquad \text{for every $t\geq 1$}.$$ 
Therefore, 
\begin{equation}\label{eq:A_to_Astar}
\begin{aligned}
    \left \langle \Astar^{t} \1_u, \Astar^{t'} \1_v \right \rangle & =   \left\langle \bar A^{t} \1_u, \bar A^{t'} \1_v \right \rangle - \langle e_1, \1_u\rangle \langle e_1, \1_v\rangle  \\
   & =  \left \langle \bar A ^{t} \1_u, \bar A^{t'} \1_v\right \rangle  - \frac{\langle \1_u ,  D^{1/2}\1\rangle \langle \1_v ,  D^{1/2}\1\rangle }{2m}\\ 
    & = \left \langle \bar A ^{t} \1_u, \bar A^{t'} \1_v\right \rangle - \frac{\sqrt{\deg(u)} \sqrt{\deg(v)}}{2m}.
\end{aligned}
\end{equation}
Combining \Cref{eq:M_to_A} and \Cref{eq:A_to_Astar} gives the lemma. 
\end{proof}

To prove \Cref{lemma:SDP}, we need to argue that we can estimate the quantity \newline 
$\sqrt{\deg(u)}\sqrt{\deg(v)}\sum_{t,t'} c_tc_{t'} \left(\left \langle D^{-1/2}M^t \1_u , D^{-1/2} M^{t'}\1_v \right \rangle - \frac{1}{2m}\right)$ to a sufficiently high precision. We will argue this on a term-by-term basis, using the collision counting result stated in \Cref{lemma:col-count}.

\begin{restatable}[Weighted Collision Counting]{lemma}{colcount}\label{lemma:col-count}
Let $p,q$ be two distributions $p,q \in \mathbb{R}^n$,  let $\xi \in (0,1)$ be the desired precision and let $\zeta \in (0,1)$ be the desired failure probability. Let $w_1, \dots w_n \in (0,w_{\max}]$ and let $W = \diag(w_1, \dots w_n)$. One can estimate $p^\top Wq$ up to precision $\xi \cdot w_{\max}(\|p\|^2_2 +\|q\|^2_2)$ with success probability $1-\zeta$ using $r =  5\xi^{-2}\zeta^{-1}(\|p\|_2^2  +\|q\|_2^2 )^{-1/2}$ samples from each of $p$ and $q$. That is, it is possible to compute $\hat{z} \in \mathbb{R}$ such that 
    $$ \Pr\left[ | p^\top W q - \hat{z}| > \xi\cdot w_{\max} \left(\|p\|^2_2 + \|q\|^2_2\right)\right]  \leq \zeta.$$
\end{restatable}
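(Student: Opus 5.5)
We use the standard unbiased collision estimator and bound its variance with Chebyshev. Draw independent samples $X_1,\dots,X_r \sim p$ and $Y_1,\dots,Y_r\sim q$, and set
\[
\hat z \coloneqq \frac{1}{r^2}\sum_{a=1}^r\sum_{b=1}^r w_{X_a}\,\1[X_a = Y_b].
\]
Independence of the $X$- and $Y$-samples gives $\E[w_{X_a}\1[X_a=Y_b]] = \sum_i p_i q_i w_i = p^\top W q$, so $\hat z$ is unbiased. It then suffices to show $\Var(\hat z)\le \zeta\,\xi^2 w_{\max}^2(\|p\|_2^2+\|q\|_2^2)^2$, after which Chebyshev's inequality yields the claimed tail bound.

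To compute the variance, write $Z_{ab} = w_{X_a}\1[X_a=Y_b]$, so that
\[
\Var(\hat z) = r^{-4}\sum_{(a,b),(a',b')}\mathrm{Cov}(Z_{ab},Z_{a'b'}).
\]
Pairs with $a\ne a'$ and $b\ne b'$ are independent and contribute nothing. The diagonal terms $a=a',b=b'$ contribute at most $r^2\E[Z_{ab}^2]\le r^2 w_{\max}^2\sum_i p_iq_i \le r^2w_{\max}^2\|p\|_2\|q\|_2$. Pairs with $a=a'$ and $b\ne b'$ contribute at most $r^3\,\E[Z_{ab}Z_{ab'}] \le r^3 w_{\max}^2\sum_i p_i q_i^2$. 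Pairs with $a\neq a'$ and $b=b'$ similarly contribute at most $r^3w_{\max}^2\sum_i p_i^2 q_i$. We bound $\sum_i p_iq_i^2 \le \|p\|_2\|q\|_4^2 \le \|p\|_2\|q\|_2^2$, using $\|q\|_4\le\|q\|_2$. Writing $S=\|p\|_2^2+\|q\|_2^2$, this gives
\[
\Var(\hat z)\le w_{\max}^2\left(\frac{S}{r^2} + \frac{2S^{3/2}}{r}\right).
\]

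The step requiring the most care is matching these terms to the stated sample size $r = 5\xi^{-2}\zeta^{-1}S^{-1/2}$. The linear term gives $2w_{\max}^2S^{3/2}/r = \tfrac{2}{5}\xi^2\zeta w_{\max}^2 S^2$, which is acceptable. The quadratic term gives $w_{\max}^2 S/r^2 = w_{\max}^2\xi^4\zeta^2S^2/25$, which is also at most $\tfrac15\xi^2\zeta w_{\max}^2S^2$ because $\xi,\zeta<1$. Summing, $\Var(\hat z)\le \tfrac35\,\zeta\,\xi^2w_{\max}^2S^2$, and Chebyshev gives
\[
\Pr\left[|\hat z - p^\top Wq|>\xi w_{\max}S\right]\le \tfrac35\zeta\le\zeta.
\]
If $r$ is not an integer we round it up, which only decreases the variance.

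Finally, the estimator is efficient. Store the $Y$-samples in a hash map of counts; then $\hat z$ is computed in $O(r)$ time, given oracle access to the weights $w_i$ at the sampled indices.
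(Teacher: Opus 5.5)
Your proposal is correct and follows essentially the same route as the paper. It uses the same unbiased estimator $\frac{1}{r^2}\sum_{a,b} w_{X_a}\1[X_a=Y_b]$, a second-moment computation in which only the diagonal and shared-index pairs survive, and Chebyshev's inequality with the stated $r$. The only difference is bookkeeping: you organize the computation via covariances and bound the shared-index terms by Cauchy--Schwarz ($\sum_i p_iq_i^2\le\|p\|_2\|q\|_2^2$, which gives the constant $2$). The paper instead expands $E[Z^2]$ directly and uses $\|p+q\|_3^3\le\|p+q\|_2^3\le 4(\|p\|_2^2+\|q\|_2^2)^{3/2}$.
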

 The proof of \Cref{lemma:col-count} is a standard collision counting argument, and is included in Appendix \ref{sec:col-count}. 

We will apply \Cref{lemma:col-count} to the distributions $p = M^t \1_u$ and $q = M^{t'}\1_v$, and the weight matrix $W = D^{-1}$. Since both the error and the sample complexity in \Cref{lemma:col-count} depend on $\| p\|_2 = \|M^t \1_u \|_2$,  we need to bound the norm $\|M^t \1_u\|_2$ for $u \in V \setminus B$. This is achieved by \Cref{lemma:Mnorm} below. 

 \begin{lemma}\label{lemma:Mnorm}
  Let $C$ be a sufficiently large constant, and let  
 $t \geq \frac{C \log n}{\log(d/\lambda) }$. Then for every $u \in V \setminus B$, 
   $$\| \bar A ^t \1_u\|^2_2 = \Theta\left( \frac{1}{n}\right) \qquad \text{and} \qquad  \| M^t \1_u\|^2_2 = \Theta\left( \frac{1}{n}\right). $$
\end{lemma}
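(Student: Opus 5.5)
We expand $\1_u$ in the eigenbasis of $\bar A$ and split the spectrum into three parts: the top eigenvector, the bottom $k-1$ eigenvectors, and the middle part. Write
\[
\bar A^t\1_u = \lambda_1^t\langle e_1,\1_u\rangle e_1 + \U\Sigma_{[k-1]}^t f_u + U_{\mathrm{top}}'(\Sigma'_{\mathrm{mid}})^t U_{\mathrm{mid}}^\top \1_u ,
\]
where the middle part collects the eigenvalues $\lambda_2,\dots,\lambda_{n-k+1}$. The three pieces are orthogonal, so $\|\bar A^t\1_u\|_2^2$ is the sum of their squared norms.

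\textbf{Top part.} Since $\lambda_1=1$ and $e_1=D^{1/2}\1/\sqrt{2m}$ (Observation~\ref{obs:e1}), this part has squared norm $\deg(u)/(2m)$. By Observation~\ref{cor:m} this is $\Theta(d)/\Theta(nd)=\Theta(1/n)$.

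\textbf{Middle part.} All middle eigenvalues satisfy $|\lambda_i|\le O(\sqrt{\lambda/d})$ by Lemma~\ref{lemma:C4}, and $\|\1_u\|_2=1$. Hence the squared norm is at most $(O(\sqrt{\lambda/d}))^{2t}$. For $t\ge C\log n/\log(d/\lambda)$ with $C$ large (and $\lambda/d\le c_k$ small, so $O(\sqrt{\lambda/d})\le (\lambda/d)^{1/4}$), this is at most $n^{-C/2}\le n^{-2}$.

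\textbf{Bottom part.} Its squared norm is at most $\|f_u\|_2^2\le O(1/n)$, using that the eigenvalues near $-1/(k-1)$ have absolute value at most $1$ and that $u$ is good (Lemma~\ref{lemma:goodnorm}). It is also nonnegative.

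Adding the three parts gives $\|\bar A^t\1_u\|_2^2=\Theta(1/n)$. The lower bound comes from the top part alone and the upper bound from summing the three pieces.

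\textbf{Transfer to $M$.} We use $M^t=D^{1/2}\bar A^tD^{-1/2}$, so $M^t\1_u=\deg(u)^{-1/2}D^{1/2}\bar A^t\1_u$. Every degree is $\Theta(d)$ (Observation~\ref{cor:m}), so $D^{1/2}$ scales norms by a factor in $[\Theta(\sqrt d),\Theta(\sqrt d)]$. The prefactor $\deg(u)^{-1/2}$ is $\Theta(1/\sqrt d)$. Hence $\|M^t\1_u\|_2^2=\Theta(\|\bar A^t\1_u\|_2^2)=\Theta(1/n)$.

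\textbf{Main obstacle.} The only delicate step is ensuring the middle-spectrum decay beats the trivial bound $\|\1_u\|_2^2=1$. This needs the hidden constant in $O(\sqrt{\lambda/d})$ to be absorbed, which is why we need $\lambda/d$ small and $C$ large. Everything else follows from Lemma~\ref{lemma:C4}, Lemma~\ref{lemma:goodnorm} and the degree bounds.
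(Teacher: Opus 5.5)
Your proposal is correct and follows essentially the same argument as the paper: expand $\1_u$ in the eigenbasis of $\bar A$, take the lower bound from the top term $\deg(u)/(2m)=\Theta(1/n)$, kill the middle eigenvalues via $(O(\sqrt{\lambda/d}))^{2t}\le n^{-2}$ for large $C$ and small $\lambda/d$, bound the bottom part by $\|f_u\|_2^2=O(1/n)$ using that $u$ is good, and transfer to $M$ via $M^t\1_u=\deg(u)^{-1/2}D^{1/2}\bar A^t\1_u$ with all degrees $\Theta(d)$. The only blemish is notational: the middle piece of your decomposition should read $U_{\mathrm{mid}}\Sigma_{\mathrm{mid}}^t U_{\mathrm{mid}}^\top\1_u$.
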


\begin{proof}
	Recall that  $1 = \lambda_1 \geq  \dots  \geq \lambda_n$ denotes the eigenvalues of $\bar A$, and $e_1, \dots e_n$ denotes the corresponding eigenvectors.
	We have 
		\begin{equation}\label{eq:l2_expression}
		\| \bar A ^t \1_u\|^2_2   = \sum_{i=1}^{n}\lambda_i^{2t} \langle \1_u, e_i \rangle ^2  =  \frac{\langle \1_u, D^{1/2}\1 \rangle ^2}{2m} + \sum_{i=2}^{n}\lambda_i^{2t}\langle \1_u, e_i \rangle ^2 =  \frac{\deg(u)}{2m} + \sum_{i =2}^{n}\lambda_i^{2t} \langle \1_u, e_i \rangle ^2 
	\end{equation}
	where the second transition uses $\lambda_1 = 1$ and $e_1  = \frac{D^{1/2}\1}{\sqrt{2m}} $, and the last transition uses $ \langle\1_u,  D^{1/2}\1\rangle ^2 = \deg(u)$. 
	By \Cref{cor:m}, $m = O(nd)$ and by the model assumption (\Cref{def:model}), $\deg(u) = \Omega(d)$. 
	Since every term in \Cref{eq:l2_expression} is non-negative, this gives $  \|\bar A ^t \1_u\|^2_2 \geq \frac{\deg(u)}{2m}  = \Omega\left(\frac{1}{n}\right)$. 
	
	We now prove that $  \|\bar A ^t \1_u\|^2_2  = O(1/n).$  By \Cref{lemma:C4}, for every $i \in [2,n-k+1]$, 
	$$|\lambda_i|\leq C'\sqrt{\frac{\lambda}{d}},$$
for some constant $C'$. 
By the assumption that $C$ is sufficiently large, we can assume that $t \geq \frac{C \log n}{\log(d/\lambda)} \geq \frac{\log n}{\log(\sqrt{d/\lambda}/C')}$. This gives
    $$ \lambda_i^{2t} \leq  \left(C' \sqrt{\frac{\lambda}{ d}}\right)^{2t} \leq   n^{-2}.$$
Substituting this into \Cref{eq:l2_expression} gives
	\begin{align*}
		\| \bar A ^t \1_u\|^2_2 & = \frac{\deg(u)}{2m} + \sum_{i =2}^{n-k+1}\lambda_i^{2t} \langle \1_u, e_i \rangle ^2 +\sum_{i =n-k+2}^{n}\lambda_i^{2t} \langle \1_u, e_i \rangle ^2 \\
		& \leq \frac{\deg(u)}{2m} + \frac{1}{n^2} \sum_{i = 2}^{n-2}\langle \1_u, e_i \rangle ^2 + \sum_{i =n-k+2}^{n}\lambda_i^{2t} \langle \1_u, e_i \rangle ^2\\
		& \leq \frac{\deg(u)}{2m} + \frac{1}{n^2} \|\1_u\|^2_2 + \|f_u\|^2_2 \\
		&= O\left(  \frac{1}{n}\right).
	\end{align*}
        Here, the last transition uses that $\|f_u\|_2^2 = O(1/n)$ by \Cref{lemma:goodnorm}, since $u \in V \setminus B$, and that $\deg(u) \leq d$ and $m = \Omega(nd)$, by \Cref{cor:m}.

       The guarantee on $\|M^t \1_u\|_2^2$ follows from $M^t \1_u = D^{1/2}\bar A ^t D^{-1/2}\1_u$, together with the assumption that $\deg(v) = \Theta(d)$ for all $v \in V$ (see \Cref{cor:m}). 
\end{proof}
We are now ready to prove the correctness of \SDP{}  (\Cref{lemma:SDP}). 
\begin{proof}[Proof of \Cref{lemma:SDP}] 
Let $C$ be a sufficiently large constant and let $\delta > 0$ be a sufficiently small constant. Let $$t_{\min} \coloneqq \frac{10C\log n}{\log(d/\lambda)}, $$
and let
$$ p(x) = x^{t_{\min}}q(x)$$
be the polynomial from \Cref{cor:polynomial} applied with parameters $\delta, C$ and $t = t_{\min}$. Let
 $$t_{\Delta} =  \deg q = O\left(\sqrt{\lambda/d} \log n\right).$$ 
 For a sufficiently large constant $\kappa$, let 
 \begin{equation}\label{eq:r}
     r= \kappa\sqrt{n}\cdot (\max |\coeff(p)| )^4(t_{\Delta}+1)^6. 
 \end{equation}
\noindent 
By \Cref{cor:polynomial}, $\max |\coeff(p)| =n^{O\left( \frac{1}{\log(d/\lambda)}\right)} $, so 

$$r =  \widetilde O\left( \sqrt{n} \cdot n^{O\left(\frac{1}{\log(d/\lambda)}\right)}\right).$$

We now show that with probability at least $0.9$, the quantity computed by the algorithm approximates $\sum_{t,t'}c_t, c_{t'}\langle \Astar \1_u , \Astar \1_v\rangle = \langle p(\Astar)\1_u , p(\Astar)\1_v\rangle $ to a high precision. 

\begin{claim}\label{claim:rw_to_Astar}
    $$ \Pr\left[\left| \sqrt{\deg(u)}\sqrt{\deg(v)}\sum_{t,t'} c_tc_{t'} \left(\left \langle D^{-1/2}\hat{p}^t_u , D^{-1/2} \hat{p}^{t'}_v\right \rangle - \frac{1}{2m}\right) - \langle p(\Astar)\1_u , p(\Astar)\1_v\rangle \right| \leq \frac{1}{160}\frac{k}{n}\right] \geq 0.9.$$
\end{claim}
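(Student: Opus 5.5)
The plan is a term-by-term union bound over the $(t_\Delta+1)^2$ pairs $(t,t')$, using \Cref{lemma:col-count} for each pair and \Cref{lemma:expected_val} to identify the target. By \Cref{lemma:expected_val}, for each pair we have
\[
\sqrt{\deg(u)}\sqrt{\deg(v)}\left(\langle D^{-1/2}M^t\1_u, D^{-1/2}M^{t'}\1_v\rangle - \tfrac{1}{2m}\right) = \langle \Astar^t\1_u,\Astar^{t'}\1_v\rangle .
\]
Summing this identity with weights $c_tc_{t'}$ gives exactly $\langle p(\Astar)\1_u, p(\Astar)\1_v\rangle$. The $-\frac{1}{2m}$ correction is deterministic and cancels between the estimate and the target. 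So the error of the algorithm's output is
\[
\sqrt{\deg(u)\deg(v)}\sum_{t,t'} c_tc_{t'}\,\bigl(Z_{t,t'} - \langle M^t\1_u, D^{-1}M^{t'}\1_v\rangle\bigr),
\]
where $Z_{t,t'} = \langle \hat p^t_u, D^{-1}\hat p^{t'}_v\rangle$ is the collision estimate. It therefore suffices to show that, with probability $0.9$, each $|Z_{t,t'} - \langle M^t\1_u, D^{-1}M^{t'}\1_v\rangle|$ is at most
\[
\eta \coloneqq \frac{k}{160\, n\,\sqrt{\deg(u)\deg(v)}\,(\max|\coeff(p)|)^2 (t_\Delta+1)^2}.
\]
Since $\deg = \Theta(d)$, this is $\eta = \Theta\bigl(1/(nd\,(\max|\coeff(p)|)^2(t_\Delta+1)^2)\bigr)$.

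For each pair I apply \Cref{lemma:col-count} with $p=M^t\1_u$, $q=M^{t'}\1_v$, $W=D^{-1}$ and $w_{\max}=O(1/d)$ (by \Cref{cor:m}). I take failure probability $\zeta = 0.1/(t_\Delta+1)^2$, so a union bound over all pairs gives total failure at most $0.1$. Every $t\ge t_{\min}=10C\log n/\log(d/\lambda)$ satisfies the hypothesis of \Cref{lemma:Mnorm}, and $u,v\notin B$, so $\|p\|_2^2+\|q\|_2^2 = \Theta(1/n)$. The additive error guaranteed by \Cref{lemma:col-count} is $\xi\, w_{\max}(\|p\|^2+\|q\|^2) = O(\xi/(nd))$. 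Choosing $\xi = \Theta\bigl(1/((\max|\coeff(p)|)^2(t_\Delta+1)^2)\bigr)$ with a small enough constant makes this at most $\eta$. The required sample count is
\[
5\xi^{-2}\zeta^{-1}(\|p\|^2+\|q\|^2)^{-1/2} = O\bigl(\sqrt n\,(\max|\coeff(p)|)^4 (t_\Delta+1)^{4}\cdot (t_\Delta+1)^2\bigr),
\]
which matches the choice of $r$ in \Cref{eq:r} for $\kappa$ large. The triangle inequality over at most $(t_\Delta+1)^2$ terms, each contributing at most $|c_tc_{t'}|\sqrt{\deg(u)\deg(v)}\,\eta \le \frac{k}{160n(t_\Delta+1)^2}$, then gives the claim.

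The main obstacle is reusing samples. The same walks $\hat p^t_u$ feed every pair $(t,t')$, so the events are dependent; but a union bound needs no independence, so this is harmless. A second subtlety is that walks of length $t$ must be sampled from $M^t\1_u$, which \randomwalks{} provides exactly. I would also double-check that \Cref{lemma:col-count} applies to sampled empirical vectors with the same $r$ for both sides, and that the case $u=v$ (used by \approxdist) is covered. It is covered, since independent walks are run for the two arguments in lines \ref{line:sdp_rwu} and \ref{line:sdp_rwv}.
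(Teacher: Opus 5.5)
Your proposal is correct and follows essentially the same route as the paper. For each pair $(t,t')$ the paper applies \Cref{lemma:col-count} with $p=M^t\1_u$, $q=M^{t'}\1_v$, $W=D^{-1}$, $w_{\max}=O(1/d)$, $\zeta=0.1/(t_\Delta+1)^2$, with the norms controlled by \Cref{lemma:Mnorm}; it then takes a union bound over the $(t_\Delta+1)^2$ pairs and uses \Cref{lemma:expected_val} to identify the target. The only cosmetic difference is that the paper uses a pair-dependent precision $\xi=\gamma/(|c_t||c_{t'}|(t_\Delta+1)^2)$ and bounds the unscaled sum by $k/(160dn)$ before multiplying by $\sqrt{\deg(u)\deg(v)}\le d$, whereas your uniform choice $\xi=\Theta\bigl(1/((\max|\coeff(p)|)^2(t_\Delta+1)^2)\bigr)$ leads to the same $r$ as in \Cref{eq:r}.
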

\begin{proof} Fix $t, t' \in [t_{\min}, t_{\min} + t_{\Delta}]$. Set  $p = M^t \1_u$, $q = M^{t'}\1_v$ and $W =  D^{-1}$ in \Cref{lemma:col-count}.
Then 
$$p^\top Wq =\langle  D^{-1/2} M^t \1_u, D^{-1/2}M^{t'}\1_v \rangle $$ 
and 
    $$\| p\|^2_2 + \| q\|^2_2  = \|  M^t \1_u\|^2_2 + \|   M^{t'} \1_v\|^2_2 = \Theta\left( \frac{1}{n}\right),$$
where the last equality follows by \Cref{lemma:Mnorm}. 
Therefore, applying \Cref{lemma:col-count} with $w_{\max} = O(1/d)$ , $\zeta = \frac{0.1}{(t_{\Delta}+1)^2}$, $\xi = \frac{\gamma}{|c_t||c_{t'}|(t_{\Delta}+1)^2}$ for a sufficiently small constant $\gamma$, and $r$ as in \Cref{eq:r}, gives
\begin{equation*}\label{eq:rw_to_M}
    \Pr\left[\left| \left\langle D^{-1/2}\hat{p}_u^{t}, D^{-1/2}\hat{p}_v^{t'}\right \rangle  - \langle  D^{-1/2} M^t \1_u, D^{-1/2}M^{t'}\1_v \rangle \right| \leq \frac{1}{160 |c_{t}| |c_{t'}|(t_{\Delta}+1)^2} \cdot \frac{k}{d n} \right] \geq 1- \frac{0.1}{(t_{\Delta}+1)^2}.
\end{equation*}
\noindent 
Taking a union bound over the $(t_{\Delta} + 1)^2$ possible pairs of $t, t' \in [t_{\min}, t_{\min} + t_{\Delta}]$ gives
 \begin{equation}\label{eq:rw_to_embedding_term1}
 \begin{aligned}
     \Pr \left[ \left| \sum_{t,t'} c_t c_{t'}\left( \left \langle D^{-1/2}\hat{p}^t_u,    D^{-1/2}\hat{p}^{t'}_v \right \rangle - \langle  D^{-1/2} M^t \1_u, D^{-1/2}M^{t'}\1_v \rangle \right)\right| \leq \frac{ k}{160dn} \right] \geq 0.9. 
\end{aligned}
 \end{equation}
 From \Cref{lemma:expected_val}, we have 
\begin{align*}
& \sum_{t,t'} c_t c_{t'}\left( \left \langle D^{-1/2}\hat{p}^t_u,    D^{-1/2}\hat{p}^{t'}_v \right \rangle - \langle  D^{-1/2} M^t \1_u, D^{-1/2}M^{t'}\1_v \rangle \right) \\
 =& \sum_{t,t'} c_t c_{t'}\left( \left \langle D^{-1/2}\hat{p}^t_u,    D^{-1/2}\hat{p}^{t'}_v \right \rangle - \frac{1}{2m} \right) - \sum_{t,t'}c_tc_{t'}\left(\langle  D^{-1/2} M^t \1_u, D^{-1/2}M^{t'}\1_v \rangle- \frac{1}{2m} \right) \\
  =& \sum_{t,t'} c_t c_{t'}\left( \left \langle D^{-1/2}\hat{p}^t_u,    D^{-1/2}\hat{p}^{t'}_v \right \rangle - \frac{1}{2m} \right) - \frac{1}{\sqrt{\deg(u)}\sqrt{\deg(v)}}\sum_{t,t'}c_t c_{t'} \langle  \Astar^t \1_u, \Astar^{t'}\1_v\rangle \\
  =&  \sum_{t,t'} c_t c_{t'}\left( \left \langle D^{-1/2}\hat{p}^t_u,    D^{-1/2}\hat{p}^{t'}_v \right \rangle - \frac{1}{2m} \right) - \frac{1}{\sqrt{\deg(u)}\sqrt{\deg(v)}}\langle p(\Astar)\1_u, p(\Astar)\1_v\rangle.
\end{align*}
Combining with \Cref{eq:rw_to_embedding_term1} gives the claim. 
\end{proof}

Finally, by \Cref{lemma:poly_to_emnedding}, since $u,v \in V \setminus B$, we have 
\begin{equation}\label{eq:SDP1}\left| \langle p(\Astar)\1_u , p(\Astar)\1_v\rangle - \langle f_u, f_v\rangle  \right| \leq \frac{1}{160}\frac{k}{n}.
\end{equation}
Combining  \Cref{claim:rw_to_Astar} and \Cref{eq:SDP1} gives 
$$ \Pr\left[\left |\SDP(u,v) - \langle f_u, f_v\rangle \right| \leq \frac{1}{80}\frac{k}{n} \right]  \geq 0.9. $$

\paragraph{Running time.} 
The running time of $\SDP$ is dominated by the $2(t_{\Delta}+1) = \widetilde O(1)$ calls to $\randomwalks$ (\Cref{alg:randwalks}) in lines \ref{line:sdp_rwu}-\ref{line:sdp_rwv}, and the $(t_{\Delta}+1)^2 = \widetilde O(1)$ inner product computations in line \ref{line:sdp_return}.  Each call to $\randomwalks(r,t,\cdot)$ takes time $O(t \cdot r) = \widetilde O(r)$ and returns a vector $\hat{p}^t_u$ of support at most $r$. Then, each of the inner product computations $\langle  D^{-1/2} \hat{p}^t_u, D^{-1/2} \hat{p}^{t'}_v\rangle $ in line \ref{line:sdp_return} can be performed in time 
$$O\left (\min \left \{ |\supp(D^{-1/2}\hat{p}^{t}_u)|, |\supp(D^{-1/2}\hat{p}^{t'}_v)|\right \}\right) = O(r)$$  by computing $\sum_{w\in \supp(\hat{p}^{t}_u)  \cap \supp(\hat{p}^{t'}_v)} \hat{p}^{t}_u(w)\hat{p}^{t'}_v(w)/\deg(w)$.
So the overall running time is 
$\widetilde O(r) =\widetilde O\left(n^{1/2 + O\left(\frac{1}{\log(d/\lambda)}\right)}\right).$
\end{proof}
\subsection{Analysis of \approxdist{} (Proof of \Cref{lemma:approxdist} and \Cref{cor:approxdist_decision})}\label{sec:apprxdst}
In this section, we prove \Cref{lemma:approxdist} and \Cref{cor:approxdist_decision}. 
We start by proving \Cref{lemma:approxdist}. 

\begin{proof}[Proof of \Cref{lemma:approxdist}]
Let $i_{\max} = 5 \cdot 10^6 \log n$ denote the number of iterations in \approxdist{} (\Cref{alg:approxdist}). Let $u,v \in V \setminus B$.  For every $i \in [1, i_{\max}]$, each of the calls to $\SDP{}$ in lines~\ref{line:dotu},~\ref{line:dotv},~\ref{line:dotuv} succeeds with probability at least $0.9$. Therefore, by a union bound, with probability at least $0.7$, it holds that
\begin{align*}
    \left| \| f_u -  f_v\|_{\apx,i}^{2}- \| f_u -  f_v\|_2^2 \right| & =    \left| \left( \|f_u\|_{\apx,i}^{2} + \|f_v\|_{\apx,i}^{2}- 2 \langle f_u, f_v\rangle_{\apx,i}\right) - \left(  \|f_u\|_2^2 + \|f_v\|_2^2- 2 \langle f_u, f_v\rangle\right) \right|  \\
    & \leq \left| \|f_u\|_{\apx,i}^{2} - \|f_u\|^2_2\right|+\left| \|f_v\|_{\apx,i}^{2} - \|f_v\|^2_2\right|+2\left| \langle f_u, f_v\rangle _{\apx,i} - \langle f_u, f_v \rangle \right| \\
    & \leq \frac{k}{80n} + \frac{k}{80n} +  \frac{2k}{80n} \\
    & =  \frac{k}{20n},
\end{align*}
where the third transition holds by \Cref{lemma:SDP}. 
Therefore, by a Chernoff bound, for each pair $u,v \in V \setminus B$, 
$$ \Pr\left[ \left| \median_{i \in [1,i_{\max}]} (\|f_u - f_v\|_{\apx,i}^{2}) - \|f_u - f_v\|^2_2 \right| \geq  \frac{k}{20n} \right] \leq \exp( -0.00025 i_{\max}) \leq n^{-102}, $$
where the last inequality follows from the setting $i_{\max} =  5 \cdot 10^6\log n$. 
Taking a union bound over all possible pairs of $u,v \in V \setminus B$ (of which there are at most $n^2$) gives the required guarantee. 

\paragraph{Running time.} By \Cref{lemma:SDP}, each call to \SDP{} takes time  $\widetilde O\left(n^{1/2 + O\left(\frac{1}{\log(d/\lambda)}\right)}\right)$. Since the total number of calls to  \SDP{} is $O(\log n)  = \widetilde O(1)$, we get the claimed running time. 
\end{proof}

Next, we prove \Cref{cor:approxdist_decision}. 
\begin{proof}[Proof of \Cref{cor:approxdist_decision}]
Condition on the event that for all $u,v \in V \setminus B,$ 
\begin{equation}\label{eq:approx_guarantee}
    \left |\approxdist(u,v) - \| f_u- f_v\|_2^2 \right| \leq \frac{k}{20n} . 
\end{equation}
By \Cref{lemma:approxdist}, this happens with probability at least $1-n^{-100}$. 
Suppose that $\chi(u) = \chi(v) = i$. Then
\begin{equation}\label{eq:uv_in_same}
    \begin{aligned}
\|f_u - f_v\|_2 & \leq       \|f_u -\mu_i \|_2 +    \|f_v - \mu_i \|_2 && \text{by the triangle inequality} \\
& \leq 2 \cdot \sqrt{\frac{0.05k}{n}} && \text{by \Cref{def:goodbad}} \\
& = \sqrt{\frac{0.2k}{n}},
\end{aligned}
\end{equation}
which gives
\begin{align*}
    \approxdist(u,v) & \leq \left| \approxdist(u,v) - \|f_u - f_v\|^2_2 \right| + \|f_u - f_v\|_2^2 && \text{by the triangle inequality} \\
    & \leq \frac{0.05k}{n} +\frac{0.2k}{n} && \text{by \eqref{eq:approx_guarantee} and \eqref{eq:uv_in_same}} \\
    & = \frac{k}{4n}. 
\end{align*}
Suppose instead that $\chi(u)= i \neq \chi(v)$. 
Then 
\begin{equation}\label{eq:uv_in_diff}
\begin{aligned}
    \|f_u - f_v \|_2 & \geq   \|f_v - \mu_i\|_2 -    \|f_{u} - \mu_i\|_2 && \text{by the triangle inequality} \\
& \geq  \sqrt{\frac{k}{n}} - \sqrt{\frac{0.05k}{n}}&& \text{by \Cref{def:goodbad}} \\
& \geq \sqrt{\frac{0.56k}{n}}, 
\end{aligned}
\end{equation}
which gives 
\begin{align*}
    & \approxdist(u,v) \\
 \geq & \|f_u - f_v\|_2^2 - \left| \approxdist(u,v) - \|f_u - f_v\|^2_2 \right|   && \text{by the triangle inequality} \\
     \geq &  \frac{0.56k}{n} - \frac{0.05k}{n}  && \text{by \eqref{eq:approx_guarantee} and \eqref{eq:uv_in_diff}} \\
     >&   \frac{k}{2n}. 
\end{align*}
\end{proof}

\subsection{Construction of the polynomial (Proof of \Cref{cor:polynomial})}\label{sec:polynomial}
In this section, we prove \Cref{cor:polynomial}. More generally, we prove: 
\begin{thm}\label{thm:polynomial}
 Let $\alpha \in [-1,1]$ be a non-zero constant (independent of $n$),  and let $\delta >0$. For every  $\epsilon>0 $ smaller than a sufficiently small constant (depending on $|\alpha|$ and $\delta$), and every $\frac  {10 \log n}{\log(1/\epsilon)} \leq  t \leq O\left( \frac{\log n}{\log(1/\epsilon)}\right)$, there exists a polynomial~$p$ of the form $$p(x) = x^t q(x)$$ 
where $q$ is a polynomial of degree $O(\epsilon \log n)$ such that the coefficients of $p$ are bounded in absolute value by $ n^{O_{\alpha,\delta}\left(\frac{1}{\log(1/\epsilon)}\right)}$ and 
\begin{enumerate}[label=(\textbf{\arabic*})]
    \item $\left |p(x)-1\right| \leq \delta$ for $x \in [\alpha - \epsilon, \alpha +\epsilon]$ \label{con:poly1general}
    \item $|p(x)|\leq  n^{-2}$  for   $|x|\leq \epsilon$.  \label{con:poly0}
\end{enumerate}
\end{thm}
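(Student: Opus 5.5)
We build $p$ as the normalized monomial $(x/\alpha)^t$ multiplied by a low-degree polynomial $q$ that cancels its variation on $[\alpha-\epsilon,\alpha+\epsilon]$. Concretely, take
\[
p(x)=\Big(\frac{x}{\alpha}\Big)^{t}\,\tilde q\Big(\frac{x-\alpha}{\alpha}\Big),
\qquad
\tilde q(y)=\sum_{j=0}^{J}\binom{-t}{j}y^{j}=\sum_{j=0}^{J}(-1)^j\binom{t+j-1}{j}y^{j}.
\]
Here $\tilde q$ is the degree-$J$ Taylor truncation of $(1+y)^{-t}$ at $y=0$. So $p(x)=x^t q(x)$ with $q(x)=\alpha^{-t}\tilde q((x-\alpha)/\alpha)$ of degree $J$. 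We choose $J=\lceil C_0(t\epsilon/|\alpha|+\log(1/\delta)+1)\rceil$ for a large constant $C_0$. Since $t=O(\log n/\log(1/\epsilon))$, this gives $J=O(\epsilon\log n+1)$. (For $\epsilon\log n\ll 1$, $q$ is simply of constant degree $O(\log(1/\delta))$, which is what $O(\epsilon\log n)$ must be read as in that range.)

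\textbf{Condition (1).} Write $x=\alpha(1+y)$ with $|y|\le \epsilon/|\alpha|=:\eta$. Then
\[
p(x)-1=(1+y)^t\big(\tilde q(y)-(1+y)^{-t}\big).
\]
The first factor is at most $e^{t\eta}$ in absolute value. The second is the Taylor tail $\sum_{j>J}\binom{t+j-1}{j}|y|^j$. The ratio of consecutive tail terms is $\frac{(t+j)|y|}{j+1}\le \frac{t\eta}{j}+\eta$, which is at most $\tfrac12$ once $j\ge J\ge 4t\eta$ and $\eta\le 1/4$. Hence the tail is at most twice its first term, which is bounded by $(e(t+J)\eta/J)^{J}\le (e t\eta/J+e\eta)^J$. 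With $J\ge C_0 t\eta$ this is at most $2^{-J}$. Multiplying by $e^{t\eta}\le 2^{2J/C_0}$ still leaves $2^{-J/2}\le\delta$, since $J\ge C_0\log(1/\delta)$. Getting this balancing right (the relative factor $(1+y)^t$ can be huge when $\epsilon\log n\gg1$, which is exactly why $q$ needs degree $\Theta(\epsilon\log n)$) is the step I expect to be the main technical point.

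\textbf{Coefficients.} We first note that $J\le t$ when $\epsilon$ is small: $t\eta\le t/4$ holds, and $\log(1/\delta)\ll t$ since $t\ge 10\log n/\log(1/\epsilon)$ is large (we may assume $\log(1/\epsilon)\le \log n$). Hence $\binom{t+j-1}{j}\le 4^{t}$. Expanding $((x-\alpha)/\alpha)^j$ gives coefficients at most $(2/|\alpha|)^j$. So every coefficient of $q$ is at most
\[
|\alpha|^{-t}(J+1)\,4^t(2/|\alpha|)^{J}\le (J+1)(8/|\alpha|^2)^{t}=n^{O_{\alpha}(1/\log(1/\epsilon))},
\]
using $t=O(\log n/\log(1/\epsilon))$; the coefficients of $p$ are the same numbers. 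The dependence on $\delta$ enters only through $J$ and the $\log(1/\delta)$ term, absorbed into $O_{\alpha,\delta}$.

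\textbf{Condition (2).} For $|x|\le\epsilon$ we bound $|q(x)|\le (J+1)\cdot\max|\mathrm{coeff}(q)|\le n^{c_\alpha/\log(1/\epsilon)}$ for some constant $c_\alpha$. We also have $|x|^t\le\epsilon^t\le n^{-10}$ because $t\ge 10\log n/\log(1/\epsilon)$. Then $|p(x)|\le n^{-10+c_\alpha/\log(1/\epsilon)}\le n^{-2}$ once $\epsilon$ is below a constant depending on $\alpha$. Here it matters that the upper bound $t=O(\log n/\log(1/\epsilon))$ makes the coefficient growth $n^{O(1/\log(1/\epsilon))}$, which the factor $\epsilon^t$ beats.
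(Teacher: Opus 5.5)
Your proof is correct in substance but takes a different route from the paper. The paper splits into two cases. For $\epsilon \le \kappa|\alpha|/\log n$ it uses the bare monomial $(x/\alpha)^t$. For $\epsilon \ge \kappa|\alpha|/\log n$ it multiplies the monomial by a Chebyshev truncation of $(1-\frac{\epsilon}{|\alpha|}z)^{-t}$, taking both the approximation error and the coefficient bound as black boxes from \cite{KKW} (Theorem~A.2 and Claim~A.3). The degree there is $10\log(1/\epsilon)\,\epsilon t/|\alpha|$, chosen so that the error becomes $Ce^{-4\tau\epsilon\log n/|\alpha|}\le Ce^{-4\kappa\tau}=\delta$ precisely in that case.

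You instead truncate the Taylor series of $(1+y)^{-t}$ at $y=0$. This works because you only need $|y|\le \eta=\epsilon/|\alpha|$, far inside the radius of convergence. The tail bound is then an elementary ratio test, the coefficients are explicit binomials (so the bound $(J+1)(8/\alpha^2)^t$ is immediate), and the $\delta$-dependence is handled by adding $\log(1/\delta)$ to the degree rather than through the threshold $\kappa$. What your route buys is a self-contained argument with no imported lemma and a slightly smaller degree: $O(t\epsilon/|\alpha|+\log(1/\delta))$ versus the paper's $O(\log(1/\epsilon)\,t\epsilon/|\alpha|)$. What the paper's route buys is brevity given \cite{KKW}, and the literal degree bound for free, since Case~1 has $\deg q=0$.

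Two small points need a patch.

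First, since $J\ge C_0$ always, your $q$ does not have degree $O(\epsilon\log n)$ when $\epsilon\log n\to 0$. You do not need to reinterpret the statement: take $J=0$ whenever $t\eta\le\delta/4$. Then $|p(x)-1|=|(1+y)^t-1|\le e^{t\eta}-1\le 2t\eta\le\delta$, which is exactly the paper's Case~1.

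Second, your justification of $J\le t$ is incomplete. The bound $t\ge 10\log n/\log(1/\epsilon)$ with $\log(1/\epsilon)\le\log n$ only gives $t\ge 10$; for example, $\epsilon=n^{-1/2}$ allows $t=20$. So $C_0(\log(1/\delta)+2)\le t/2$ need not hold. The same patch fixes this. Outside the $J=0$ regime you have $t>\delta|\alpha|/(4\epsilon)$, which exceeds $2C_0(\log(1/\delta)+2)$ once $\epsilon$ is below a constant depending on $\alpha$ and $\delta$. Hence $J\le t$ there, and also $J=O_\delta(t\eta)=O_{\alpha,\delta}(\epsilon\log n)$.
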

As a corollary of \Cref{thm:polynomial}, we obtain \Cref{cor:polynomial}. 
\begin{proof}[Proof of  \Cref{cor:polynomial}]
Immediate from \Cref{thm:polynomial} by setting $\alpha = \frac{-1}{(k-1)}$ and $\epsilon = C \sqrt{\lambda/d}$. 
\end{proof}
For the rest of the section, we prove \Cref{thm:polynomial}. 
Our polynomial builds on the polynomial construction from \cite{KKW}, and we use the following two results. 
\begin{thm}[Theorem A.2 in \cite{KKW}]\label{thm:Chebyshevapprox}
Let $t\geq 2$, $\epsilon \in  (0,0.1]$,  $d \geq 10 \epsilon t$, and let $$ Q(x) = \sum_{v = 0}^{d}T_v(x)c_{v,\epsilon,t}$$ be the polynomial obtained by taking the first $d+1$ terms of the Chebyshev expansion of $(1-\epsilon x)^{-t}$ on $[-1,1]$, where $T_v$ denotes the polynomial given by $T_v(\cos \theta) = \cos(v \theta)$. Then 
\begin{itemize}
    \item $\max_{ v\leq d} |c_{v,\epsilon,t}| \leq 2^{O(\epsilon t)}$, and 
    \item $\sup_{x \in [-1,1]} |Q(x) - (1-\epsilon x)^{-t}| =  O(d e^{-d})$. 
    \end{itemize}
\end{thm}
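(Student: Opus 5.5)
We plan to prove this with the classical Bernstein-ellipse bound on Chebyshev coefficients of analytic functions. Write $f(x) = (1-\epsilon x)^{-t}$. As a function of complex $z$, $f$ is analytic everywhere except for a single pole at $z = 1/\epsilon \geq 10$. The Chebyshev coefficients are $c_{v,\epsilon,t} = \frac{2}{\pi}\int_0^\pi f(\cos\theta)\cos(v\theta)\,d\theta$ for $v \geq 1$, and half of that for $v=0$. For $\rho > 1$, let $E_\rho$ be the Bernstein ellipse with foci $\pm 1$ and semi-major axis $a_\rho = (\rho+\rho^{-1})/2$. If $f$ is analytic inside $E_\rho$ and $|f| \leq M_\rho$ there, the standard estimate (substitute $z = (w+w^{-1})/2$ and shift the contour to $|w| = \rho$) gives
\[
|c_{v,\epsilon,t}| \leq 2 M_\rho \rho^{-v}.
\]
Everything then reduces to choosing $\rho$ and bounding $M_\rho$.

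\emph{Coefficient bound.} Take $\rho = e$, so $a_\rho \approx 1.54$. Every $z \in E_\rho$ has $|z| \leq a_\rho$, hence
\[
|1 - \epsilon z| \geq 1 - 1.54\,\epsilon \geq 0.84 \quad\text{for } \epsilon \leq 0.1 .
\]
Using $-\ln(1-y) \leq y/(1-y)$ with $y = 1.54\,\epsilon$, we get $M_\rho \leq (1-1.54\,\epsilon)^{-t} \leq e^{2\epsilon t}$. Therefore $|c_{v,\epsilon,t}| \leq 2e^{2\epsilon t} e^{-v} = 2^{O(\epsilon t)}$ uniformly in $v$.

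\emph{Truncation error.} On $[-1,1]$ we have $|T_v| \leq 1$, so $\sup_{[-1,1]} |Q - f| \leq \sum_{v>d} |c_{v,\epsilon,t}|$. The difficulty is that the claimed error $O(de^{-d})$ carries no $2^{O(\epsilon t)}$ factor. This is exactly where the hypothesis $d \geq 10\epsilon t$ is used: we take a larger ellipse so that the extra geometric decay in $v$ absorbs $M_\rho$. Take $\rho = e^2$, so $a_\rho \approx 3.76$ and $|\epsilon z| \leq 0.376$ on $E_\rho$. Then
\[
M_\rho \leq (1-3.76\,\epsilon)^{-t} \leq e^{3.76\,\epsilon t/0.624} \leq e^{6.1\,\epsilon t} \leq e^{0.61 d},
\]
and consequently
\[
\sum_{v>d} |c_{v,\epsilon,t}| \leq \frac{2M_\rho \rho^{-(d+1)}}{1-\rho^{-1}} = O\!\left(e^{0.61d-2d}\right) = O(e^{-d}) \subseteq O(de^{-d}).
\]
Convergence of the Chebyshev series to $f$ on $[-1,1]$ holds because $f$ is analytic on a neighborhood of the interval.

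The main obstacle is this second step: balancing the ellipse size against the growth of $f$ near the pole. Choosing $\rho$ too large brings the ellipse close to $1/\epsilon$ and makes $M_\rho$ blow up like $e^{\Theta(\epsilon t a_\rho)}$. The constants above ($\epsilon \leq 0.1$, $d \geq 10\epsilon t$) leave enough slack, but we would double-check the numeric inequality $0.61 < 1$ together with the $\ln$ bound. The coefficient estimate itself is routine once the contour-shift bound is stated.
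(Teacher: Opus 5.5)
The paper gives no proof of this statement. It is cited from \cite{KKW} (Theorem A.2 there) and used as a black box in Case 2 of the proof of \Cref{thm:polynomial}, so there is no in-paper argument to compare with.

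Judged on its own, your proof is correct and complete. The estimate $|c_{v,\epsilon,t}|\le 2M_\rho\rho^{-v}$ is the standard Bernstein-ellipse bound, and your numerics check out:
\begin{itemize}
\item $\cosh 1\approx 1.543$ gives $M_e\le e^{1.83\epsilon t}$;
\item $\cosh 2\approx 3.762$ gives $M_{e^2}\le e^{6.04\epsilon t}\le e^{0.61d}$ under $d\ge 10\epsilon t$.
\end{itemize}
So the truncation error is $O(e^{-1.39d})$. This is stronger than the stated $O(de^{-d})$ and more than enough for the paper's use.

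The hypothesis $t\ge 2$ is never needed. Non-integer $t$ also causes no trouble if you read ``pole'' as ``singularity'': the principal branch of $(1-\epsilon z)^{-t}$ is analytic off $[1/\epsilon,\infty)$ and has modulus $|1-\epsilon z|^{-t}$.

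One bookkeeping remark: $2e^{2\epsilon t}$ is $2^{O(\epsilon t)}$ only when $\epsilon t$ is bounded away from $0$. If you want the literal bound for small $\epsilon t$ as well, argue as follows.
\begin{itemize}
\item For $v\ge 1$, the $c_{v,\epsilon,t}$ are also the Chebyshev coefficients of $f-1$.
\item Since the Taylor coefficients of $(1-w)^{-t}$ are nonnegative, $\sup_{E_e}|f-1|\le(1-\epsilon\cosh 1)^{-t}-1\le e^{2\epsilon t}-1$.
\item Hence $|c_{v,\epsilon,t}|\le 2e^{-1}(e^{2\epsilon t}-1)\le e^{2\epsilon t}$ for $v\ge 1$, while $|c_{0,\epsilon,t}|\le M_e\le e^{2\epsilon t}$.
\end{itemize}
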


\begin{claim}[Claim A.3 in \cite{KKW}] \label{claim:coeff}There exists an absolute constant $C$ such that for every $\epsilon,t >0$ and every $d \geq 10\epsilon t$, the maximum coefficient of the polynomial $Q\left( \frac{1-x}{\epsilon}\right)$ is bounded in absolute value by $(1/\epsilon)^{C \cdot d}$, where $Q(y)$ denotes the degree $d$ polynomial from \Cref{thm:Chebyshevapprox}. 
\end{claim}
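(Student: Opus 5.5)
The plan is to expand $Q\left(\frac{1-x}{\epsilon}\right)$ explicitly in the monomial basis and bound each layer of the expansion separately. There are three layers: the Chebyshev coefficients $c_{v,\epsilon,t}$, the monomial coefficients of each $T_v$, and the binomial expansion of $\left(\frac{1-x}{\epsilon}\right)^j$. Throughout I work in the regime of \Cref{thm:Chebyshevapprox}, namely $\epsilon \le 0.1$ and $d \ge 10\epsilon t$. In this regime every factor of the form $2^{O(d)}$ can be absorbed into $(1/\epsilon)^{O(d)}$, because $\log(1/\epsilon) \ge \log 10$ is bounded below by an absolute constant.

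\emph{Step 1 (Chebyshev coefficients).} By \Cref{thm:Chebyshevapprox}, $\max_{v \le d}|c_{v,\epsilon,t}| \le 2^{O(\epsilon t)}$. Since $d \ge 10\epsilon t$, this is at most $2^{O(d)}$.

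\emph{Step 2 (monomial coefficients of $T_v$).} Write $T_v(y) = \sum_{j \le v} a_{v,j} y^j$. I would show $\sum_j |a_{v,j}| \le (1+\sqrt{2})^v$ using the recurrence $T_{v+1} = 2yT_v - T_{v-1}$. Let $S_v := \sum_j |a_{v,j}|$. The recurrence gives $S_{v+1} \le 2S_v + S_{v-1}$ with $S_0 = S_1 = 1$, and induction then yields $S_v \le (1+\sqrt{2})^v$. So every $|a_{v,j}| \le 3^v$.

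\emph{Step 3 (substitution).} Substituting $y = (1-x)/\epsilon$ gives
\[
Q\left(\tfrac{1-x}{\epsilon}\right) = \sum_{v \le d} c_{v,\epsilon,t} \sum_{j \le v} a_{v,j}\, \epsilon^{-j} \sum_{i \le j} \binom{j}{i} (-x)^i .
\]
Fix a monomial $x^i$. Its coefficient is bounded in absolute value by
\[
\sum_{v \le d}\sum_{j \le v} |c_{v,\epsilon,t}|\,|a_{v,j}|\,\epsilon^{-j}\, 2^j \le (d+1)^2 \cdot 2^{O(d)} \cdot 3^d \cdot 2^d \cdot \epsilon^{-d},
\]
where I used $\epsilon^{-j} \le \epsilon^{-d}$, which holds because $\epsilon < 1$. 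Finally, $(d+1)^2\, 2^{O(d)} \le 2^{O(d)} \le (1/\epsilon)^{O(d)}$ since $1/\epsilon \ge 10$. This gives the claimed bound $(1/\epsilon)^{C d}$ for an absolute constant $C$.

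No step is a real obstacle; the argument is bookkeeping. The only point needing care is Step 1: the $2^{O(\epsilon t)}$ bound on the Chebyshev coefficients must be converted into a bound depending only on $d$. This is exactly where the hypothesis $d \ge 10\epsilon t$ is used. Without it, the coefficient bound would carry a separate $t$-dependence, and the final bound could not be stated purely in terms of $d$.
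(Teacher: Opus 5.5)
Your proof is correct. The paper gives no proof of this claim; it is imported from \cite{KKW}, so there is no in-paper argument to compare against. Your three-layer bookkeeping is a complete, self-contained verification: Chebyshev coefficients bounded via \Cref{thm:Chebyshevapprox} together with $d\ge 10\epsilon t$, the bound $\sum_j|a_{v,j}|\le(1+\sqrt2)^v$ from the three-term recurrence, and the binomial expansion of $((1-x)/\epsilon)^j$, with every $2^{O(d)}$ factor absorbed into $(1/\epsilon)^{O(d)}$ using $1/\epsilon\ge 10$.

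Two small remarks.

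First, your restriction to $\epsilon\le 0.1$ loses no generality; it is forced, because the literal quantifier ``for every $\epsilon>0$'' cannot hold. Expanding $(1-\epsilon y)^{-t}$ in powers of $y$ shows that all its Chebyshev coefficients are nonnegative. So for fixed $t\ge 2$ and $d\ge 10t$ one has $Q(1)\ge c_{0,\epsilon,t}\to\infty$ as $\epsilon\to 1$. On the other hand, $Q(1)$ is the value of $Q((1-x)/\epsilon)$ at $x=1-\epsilon\in[0,1]$, hence at most $(d+1)$ times its largest coefficient. Since $(1/\epsilon)^{Cd}\to 1$, the bound fails.

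Second, Step 1 as written inherits the hypothesis $t\ge 2$ from \Cref{thm:Chebyshevapprox}. If you want all $t>0$, the integral formula for Chebyshev coefficients gives directly $|c_{v,\epsilon,t}|\le 2(1-\epsilon)^{-t}\le 2e^{2\epsilon t}$ for $\epsilon\le 0.1$, and the rest of your argument goes through unchanged.
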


\begin{proof}[Proof of \Cref{thm:polynomial}]
The main idea is to split the proof into the two cases $\epsilon \lesssim |\alpha|/\log n$, in which case the polynomial $p(x) = (x/\alpha)^t$ is already enough, and the case $\epsilon \gtrsim |\alpha|/\log n$, in which case we will use the polynomial from \Cref{thm:Chebyshevapprox}. 
To keep the coefficients small enough in the first case, we choose
    $t \approx \frac{\log n}{\log(1/\epsilon)}$
rather than \(t \approx \log n\), as in \cite{KKW}. We then analyze the second
case more carefully for this choice of \(t\), instead of relying directly on the
analysis in \cite{KKW}.
    
First, we need to set up some parameters. 
Let $C$ denote the hidden universal constant in \Cref{thm:Chebyshevapprox} such that $\sup_{x \in [-1,1]} |Q(x) - (1-\epsilon x)^{-t}| \leq  C d e^{-d}$ for all $\epsilon$.
Suppose that $\epsilon$ satisfies
\begin{equation}\label{eq:eps}
\log(1/\epsilon) \geq \frac{\log(C/\delta)}{\delta}
\end{equation}
(this holds by the assumption that $\epsilon$ is smaller than a sufficiently small constant). Let $\tau\geq 10$ be the constant such that 
$$t = \frac{\tau \log n}{\log(1/\epsilon)}.$$
Additionally, define the constant
\begin{equation}\label{eq:kappa}
    \kappa \coloneqq \frac{\log(C/\delta)}{4\tau }\leq \frac{\log(C/\delta)}{40}. 
\end{equation}
\noindent
 We now split the proof into the two cases $\epsilon  \leq \frac{\kappa |\alpha|}{\log n}$ or $\epsilon > \frac{\kappa |\alpha|}{\log n}$. 
\paragraph{Case 1. } Suppose
$\epsilon \leq \frac{\kappa |\alpha|}{\log n}.$ Let 
$$p(x) = \frac{x^t}{\alpha^t}.$$
Then clearly $p$ is of the right form, and the only non-zero coefficient of $p$ is bounded in absolute value by

$$\left( \frac{1}{|\alpha|}\right)^{t} =  \left( \frac{1}{|\alpha|}\right)^{\frac{\tau \log n}{\log(1/\epsilon)}}  = n^{\frac{\tau \log(1/|\alpha|)}{\log(1/\epsilon)}} = n^{O_{\alpha}\left(\frac{1}{\log(1/\epsilon)}\right)},$$
by the assumption that $t =  \frac{\tau \log n}{\log(1/\epsilon)}$ for some constant $\tau$. 

We now verify Condition \ref{con:poly1general}. By the assumption $\epsilon \leq \frac{\kappa |\alpha|}{\log n}$,
\begin{equation}\label{eq:poly_case1}
\begin{aligned}
    \frac{4\epsilon t}{|\alpha|} &= \frac{4 \tau \epsilon\log n }{|\alpha |\log(1/\epsilon)}  && \text{by the definition of $\tau$}\\
   &  \leq \frac{4\tau \kappa}{\log(1/\epsilon)} && \text{by the assumption $\epsilon \leq \frac{\kappa |\alpha|}{\log n}$} \\
   & \leq \delta && \text{by \Cref{eq:kappa} and \Cref{eq:eps}}.
\end{aligned}
\end{equation}
 Therefore, 
\begin{align*}
   \max_{x \in [\alpha-\epsilon, \alpha + \epsilon]} \left|p(x) -1 \right| & =  \max_{x \in [\alpha-\epsilon, \alpha + \epsilon]} \left|\left( \frac{x}{\alpha} \right)^t-1 \right|  \\
   & = \max \left \{ \left( 1+\frac{\epsilon}{|\alpha|}\right)^t -1, 1-  \left( 1-\frac{\epsilon}{|\alpha|}\right)^t  \right\} \\
   & \leq \exp\left(\frac{2t\epsilon}{|\alpha|}\right)-1 \\
   & \leq \frac{4t\epsilon}{|\alpha|} && \text{since $e^x-1 \leq 2x$ for $x \in [0,1]$ } \\
   & \leq \delta && \text{by \Cref{eq:poly_case1}}. 
\end{align*}
Next, we verify Condition \ref{con:poly0}. We have 
\begin{align*}
   \max_{x \in [-\epsilon, \epsilon]} \left|p(x) \right|  &=  \max_{x \in[-\epsilon, \epsilon]} \left|\left( \frac{x}{\alpha} \right)^t \right|  = \left(\frac{\epsilon}{|\alpha|}\right)^t  \leq \exp\left( -t \log\left(\frac{|\alpha|}{\epsilon} \right)\right)  \\
   & \leq \exp\left( - \frac{-10\log(n)}{\log(1/\epsilon)} \log\left(\frac{|\alpha|}{\epsilon} \right)\right) \leq  n^{-2}, 
\end{align*}
where the fourth transition uses the assumption $t \geq \frac{10\log n}{\log(1/\epsilon)}$, and the last transition uses that $\epsilon$ is smaller than a sufficiently small constant depending on $\alpha$. 

\paragraph{Case 2.} Suppose $\epsilon \geq \frac{\kappa |\alpha|}{\log n}.$
 Let $Q(x)$ be the polynomial from \Cref{thm:Chebyshevapprox} applied with error parameter $\epsilon/|\alpha|$ and
 $$d =\frac{10 \log(1/\epsilon)\epsilon t}{|\alpha|} =   \frac{10\tau \epsilon \log n}{|\alpha|} = O_{\alpha}\left(\epsilon \log n \right).$$  Let
$$p(x) \coloneqq \left( \frac{x}{\alpha} \right)^t Q\left( \frac{1-\frac{x}{\alpha}}{\frac{\epsilon}{|\alpha|}}\right).$$
We start by bounding the coefficients of $p(x)$. By \Cref{claim:coeff} and the setting of $d$, the coefficients of the polynomial 
$$P(x) \coloneqq x^t Q\left( \frac{1-x}{\frac{\epsilon}{|\alpha|}}\right)$$
are bounded in absolute value by
$$ \left(\frac{|\alpha|}{\epsilon}\right)^{C_{\ref{claim:coeff}} \cdot d} =    n^{O_{\alpha}\left( \epsilon \log(1/\epsilon)\right)} = n^{O_{\alpha}(1/\log(1/\epsilon))},$$
where $C_{\ref{claim:coeff}}$ denotes the universal constant from \Cref{claim:coeff}. 
Since $p(x) = P(x/\alpha)$, the coefficients of $p$ are bounded in absolute value by 
\begin{align*}
    \max| \coeff(p)| & \leq \left(\frac{1}{|\alpha|}\right)^{\deg p} \cdot \max |\coeff(P)| \\
    & =  \left(\frac{1}{|\alpha|}\right)^{t + d}  n^{O_{\alpha}\left( \frac 1{\log(1/\epsilon)}\right)} \\
    &\leq \left(\frac{1}{|\alpha|}\right)^{\frac{\tau \log n}{\log(1/\epsilon)} +  \frac{10 \tau \epsilon\log n}{|\alpha|}}  n^{O_{\alpha}\left( \frac{1}{\log(1/\epsilon)}\right)} \\&= n^{O_{\alpha}\left(\frac{1}{\log(1/\epsilon)} \right)},
\end{align*}
as required. 
We now prove that Condition \ref{con:poly1general} holds. We have 
\begin{align*}
     \max_{x \in [\alpha-\epsilon, \alpha + \epsilon]} \left|p(x) -1 \right| & =   \max_{x \in [\alpha-\epsilon, \alpha + \epsilon]}  \left| \left( \frac{x}{\alpha}\right)^t   Q\left( \frac{1-x/\alpha}{\epsilon/|\alpha|}\right)-1\right|  \\
     & =    \max_{x \in [\alpha-\epsilon, \alpha + \epsilon]} \left|\left( \frac{x}{\alpha}\right)^t\right| \left|   Q\left( \frac{1-x/\alpha}{\epsilon/|\alpha|}\right)-\left( \frac{x}{\alpha}\right)^{-t}\right| \\
          & \leq  \left(1+\frac{\epsilon}{|\alpha|}\right)^t    \max_{x \in [\alpha-\epsilon, \alpha + \epsilon]}  \left|   Q\left( \frac{1-x/\alpha}{\epsilon/|\alpha|}\right)-\left( \frac{x}{\alpha}\right)^{-t}\right| \\
    & = \left(1+\frac{\epsilon}{|\alpha|}\right)^t \max_{z \in [-1,1]}\left|Q(z) - \left (1-\frac{\epsilon}{|\alpha|} z\right)^{-t} \right|&& \text{by setting $z = \frac{1-x/\alpha}{\epsilon/|\alpha|}$} \\
     & \leq C \cdot  \left(1+\frac{\epsilon}{|\alpha|}\right)^t de^{-d} && \text{by \Cref{thm:Chebyshevapprox} and setting of $C$} \\
    & \leq C \cdot  \exp \left(  \frac{\epsilon t}{|\alpha|}\right) \exp \left(-\frac{d}{2}\right)&& \text{since $x\leq e^{x/2}$ and $1+x \leq e^x$ for $x \in \R$.} \\
    \end{align*}
Therefore, 
\begin{align*}
\max_{x \in [\alpha-\epsilon, \alpha + \epsilon]} \left|p(x) -1 \right| 
    & =  C \cdot  \exp \left(  \frac{\epsilon t}{|\alpha|}-5\log(1/\epsilon)\frac{\epsilon}{|\alpha|} t\right) && \text{by the setting $d= 10 \log(1/\epsilon)\epsilon t/|\alpha|$}\\
    & \leq  C \cdot  \exp \left(  -4\log(1/\epsilon)\frac{\epsilon}{|\alpha|} t\right)\\
    & =   C \cdot  \exp \left(  -4\frac{\epsilon}{|\alpha|} \tau \log n \right) && \text{by the setting  $t =  \frac  {\tau \log n}{\log(1/\epsilon)}$}\\
     & \leq C\exp\left( -4 \kappa \tau \right) && \text{by the assumption $\epsilon \geq \frac{\kappa |\alpha|}{\log n}$} \\
     & = \delta && \text{by the setting of $\kappa$ \eqref{eq:kappa}. }
\end{align*}
\noindent
We now prove Condition \ref{con:poly0}. Since $p$ has at most $(d+1)$ non-zero coefficients, we have 
\begin{align*}
 \max_{x \in [-\epsilon, \epsilon]} \left|p(x) \right| & \leq \epsilon^{t} \cdot (d+1) \cdot \max|\text{coeff}(p)| \\ 
 & \leq \epsilon^{10\log n/\log(1/\epsilon)} \cdot O\left( \epsilon \log n/|\alpha|\right) \cdot n^{O_{\alpha}(1/\log(1/\epsilon))} \\
 & \leq n^{-10}\cdot n \cdot n^{O_{\alpha}(1/\log(1/\epsilon))} \\
& \leq n^{-2},
\end{align*}
where the last inequality uses that $\epsilon$ is smaller than a sufficiently small constant depending on $\alpha$, so that the value of the exponent $O_{\alpha}(1/\log(1/\epsilon))$ is at most $7$. 
\end{proof}

\subsection{Putting everything together (Proof of \Cref{thm:main-result})}\label{sec:proof_main}
We now combine  \Cref{cor:polynomial}, \Cref{lemma:approxdist}, \Cref{cor:approxdist_decision} and \Cref{lemma:B_bound} to obtain \Cref{thm:main-result}. 

\begin{proof}[Proof of \Cref{thm:main-result}]
The oracle runs the preprocessing algorithm \prepro{} (\Cref{alg:app_centers}) to obtain a set of $k$ color representatives $\{c_i\}_{i \in [k]}$. Then, for every query $u$, the oracle returns $\Color(u, \{c_i\}_{i \in [k]})$. 

 Let $B$ be the set of bad vertices from   \Cref{def:goodbad}. We now show that with probability at least $0.9$, the oracle is correct on all vertices in $V \setminus B$. 

  By \Cref{lemma:prepro_main}, with probability at least $0.95$, the output of \prepro{} is a valid set of color representatives (as per \Cref{def:rep}). \Cref{claim:main} below shows that conditioned on the success of \prepro{}, the oracle $\Color(v, \{c_i\}_{i\in [k]})$ returns the correct color (up to a permutation of colors) for every $v \in V \setminus B$: 
    \begin{claim}\label{claim:main}
        Suppose that $\{c_i\}_{i \in [k]}$ is a set of color representatives (as per \Cref{def:rep}). Let $\pi$ be the permutation such that $\pi(\chi(c_i)) = i$ for every $i \in [k]$.  Then with probability at least $1-n^{-100}$, for all $v \in V \setminus B$, 
 $$\Color(v, \{c_i\}_{i \in [k]}) = \pi(\chi(v)). $$
    \end{claim}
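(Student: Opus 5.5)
The argument applies \Cref{cor:approxdist_decision} to all pairs at once, then reads off the argmin in \Color. Condition on the event $\mathcal E$ that the two bullets of \Cref{cor:approxdist_decision} hold for all $u,v \in V \setminus B$. By that corollary, $\Pr[\mathcal E]\geq 1-n^{-100}$.

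One point needs care. \Color{} calls \approxdist{} afresh for each query, so the randomness differs across pairs. This is not a problem, because \Cref{lemma:approxdist} was proved by bounding each pair's failure probability by $n^{-102}$ and then taking a union bound over at most $n^2$ pairs. That per-pair bound holds whether the calls use shared or independent randomness. Hence the union bound also covers all pairs $(v,c_i)$ with $v\in V\setminus B$ and $i\in[k]$, and these are the only pairs \Color{} ever evaluates.

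Now fix $v\in V\setminus B$. Since $\{c_i\}$ is a set of color representatives (\Cref{def:rep}), every $c_i$ is good, so each pair $(v,c_i)$ lies in $(V\setminus B)^2$. The $c_i$ have $k$ distinct colors out of $k$, so exactly one index $i^*$ satisfies $\chi(c_{i^*})=\chi(v)$. By the definition of $\pi$, this index is $i^*=\pi(\chi(v))$.

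On $\mathcal E$, the first bullet gives $\approxdist(v,c_{i^*})\le \frac{k}{4n}$. For every $j\neq i^*$ we have $\chi(c_j)\ne\chi(v)$, so the second bullet gives $\approxdist(v,c_j)\ge\frac{k}{2n}>\frac{k}{4n}$. The argmin in \Cref{alg:color_main} is therefore uniquely $i^*$, and $\Color(v,\{c_i\})=\pi(\chi(v))$ for all good $v$ simultaneously on $\mathcal E$.

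Consistency needs the randomness of \approxdist{} calls to be fixed per pair, for example by seeding it with the pair. That is an implementation detail, and it does not affect the probability bound. The only genuine subtlety is the independence and union-bound point above; the rest follows directly from the corollary.
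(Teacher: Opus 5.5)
Your proof is correct and follows the paper's argument. You invoke \Cref{cor:approxdist_decision} on the pairs $(v,c_j)$, which all lie in $(V\setminus B)^2$ by \Cref{def:rep}, and conclude that the unique same-colored representative, with index $\pi(\chi(v))$, is the strict argmin. Your remark that the per-pair bound and union bound behind \Cref{lemma:approxdist} need no independence across calls is a valid clarification the paper leaves implicit.
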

    \begin{proof}
    Let $v \in V\setminus B$, and let $c_i$ be the color representative such that $\chi(c_i) = \chi(v)$. Since  $\{c_j\}_{j \in [k]}$ is a set of color representatives, by \Cref{def:rep}, every $c_j$ belongs to a different color class, and $c_j \in V \setminus B$ for all $j \in [k]$. 
    Therefore, by \Cref{cor:approxdist_decision}, 
    $$\approxdist(v, c_i) \leq \frac{k}{4n}$$ and for all $j \neq i$, 
   $$ \approxdist(v, c_j) \geq \frac{k}{2n}.$$
   In particular, $\argmin_j \approxdist(v,c_j) = i$. By definition of the permutation $\pi$, we have $i=  \pi(\chi(c_i)) = \pi( \chi(v))$, so \Color{} returns $\pi( \chi(v))$ as claimed.
\end{proof}

Taking a union bound over the success of \prepro{} and \Cref{claim:main} gives that with probability at least $0.95-n^{-100} \geq 0.9$, every vertex in $V \setminus B$ is colored correctly (up to a permutation of the colors). In that case,  by \Cref{lemma:B_bound}, 
    $$| \{v : \Color(v, \{c_i\}_{i \in [k]}) \neq \pi(\chi(v))\}|  \leq |B| \leq O\left( \sqrt{\frac{\lambda }{d}}\cdot n\right). $$
   
\paragraph{Running time and space complexity.}
The preprocessing stage runs \prepro{} (\Cref{alg:app_centers}). By \Cref{lemma:prepro_main}, this has running time $\widetilde O\!\left(n^{1/2+O(1/\log(d/\lambda))}\right)$. In order to be consistent, the algorithm should store the $O(r) = \widetilde O\!\left(n^{1/2+O(1/\log(d/\lambda))}\right)$  random walks for each of the $k$ representative vertices, and use the same $O(r)$ random bits for each query. This can be done using space $\widetilde O\!\left(n^{1/2+O(1/\log(d/\lambda))}\right)$. 

Each query is answered by \Color{}  (\Cref{alg:color_main}), which calls \approxdist{} (\Cref{alg:approxdist}) $k=O(1)$ times. By \Cref{lemma:approxdist}, this takes time $\widetilde O\!\left(n^{1/2+O(1/\log(d/\lambda))}\right)$.
\end{proof}

\bibliographystyle{alpha}
\bibliography{references}

@inproceedings{DF16,
author = {David, Roee and Feige, Uriel},
title = {On the effect of randomness on planted 3-coloring models},
year = {2016},
isbn = {9781450341325},
publisher = {Association for Computing Machinery},
address = {New York, NY, USA},
url = {https://doi.org/10.1145/2897518.2897561},
doi = {10.1145/2897518.2897561},
booktitle = {Proceedings of the Forty-Eighth Annual ACM Symposium on Theory of Computing},
pages = {77–90},
numpages = {14},
location = {Cambridge, MA, USA},
series = {STOC '16}
}

@article{AK97,
  author       = {Noga Alon and
                  Nabil Kahal{\'{e}}},
  title        = {A Spectral Technique for Coloring Random 3-Colorable Graphs},
  journal      = {{SIAM} J. Comput.},
  volume       = {26},
  number       = {6},
  pages        = {1733--1748},
  year         = {1997},
  url          = {https://doi.org/10.1137/S0097539794270248},
  doi          = {10.1137/S0097539794270248},
  bibsource    = {dblp computer science bibliography, https://dblp.org}
}

@inproceedings{peng2023sublinear,
  title={Sublinear-time algorithms for {M}ax {C}ut, {M}ax {E}2{L}in(q), and unique label cover on expanders},
  author={Peng, Pan and Yoshida, Yuichi},
  booktitle={Proceedings of the 2023 Annual ACM-SIAM Symposium on Discrete Algorithms (SODA)},
  pages={4936--4965},
  year={2023},
  organization={SIAM}
}

@inproceedings{KKW,
author = {Michael Kapralov and Ekaterina Kochetkova and Weronika Wrzos{-}Kaminska},
title = {Spectral clustering in birthday paradox time},
booktitle = {Proceedings of the 2026 Annual ACM-SIAM Symposium on Discrete Algorithms (SODA)},
pages = {3057-3071},
doi = {10.1137/1.9781611978971.113},
URL = {https://epubs.siam.org/doi/abs/10.1137/1.9781611978971.113},
eprint = {https://epubs.siam.org/doi/pdf/10.1137/1.9781611978971.113}, 
  year={2026}
}

@article{SP23,
  title={A sublinear-time spectral clustering oracle with improved preprocessing time},
  author={Shen, Ranran and Peng, Pan},
  journal={Advances in Neural Information Processing Systems},
  volume={36},
  pages={41809--41828},
  year={2023}
}

@inproceedings{FKKLMW26,
author = {Hendrik Fichtenberger and Michael Kapralov and Ekaterina Kochetkova and Silvio Lattanzi and Davide Mazzali and Weronika Wrzos{-}Kaminska},
title = {Spectral Clustering with Side Information},
booktitle = {Proceedings of the 2026 Annual ACM-SIAM Symposium on Discrete Algorithms (SODA)},
chapter = {},
pages = {4327-4341},
doi = {10.1137/1.9781611978971.159},
URL = {https://epubs.siam.org/doi/abs/10.1137/1.9781611978971.159},
eprint = {https://epubs.siam.org/doi/pdf/10.1137/1.9781611978971.159},
  year={2026}
}

@article{BlumSpencer,
title = {Coloring Random and Semi-Random k-Colorable Graphs},
journal = {Journal of Algorithms},
volume = {19},
number = {2},
pages = {204-234},
year = {1995},
issn = {0196-6774},
doi = {https://doi.org/10.1006/jagm.1995.1034},
url = {https://www.sciencedirect.com/science/article/pii/S0196677485710346},
author = {A. Blum and J. Spencer}
}

@inproceedings{Karp72,
  author       = {Richard M. Karp},
  editor       = {Raymond E. Miller and
                  James W. Thatcher},
  title        = {Reducibility Among Combinatorial Problems},
  booktitle    = {Proceedings of a symposium on the Complexity of Computer Computations,
                  held March 20-22, 1972, at the {IBM} Thomas J. Watson Research Center,
                  Yorktown Heights, New York, {USA}},
  series       = {The {IBM} Research Symposia Series},
  pages        = {85--103},
  publisher    = {Plenum Press, New York},
  year         = {1972},
  url          = {https://doi.org/10.1007/978-1-4684-2001-2\_9},
  doi          = {10.1007/978-1-4684-2001-2\_9},
  bibsource    = {dblp computer science bibliography, https://dblp.org}
}

@article{NPcomplete,
  author       = {M. R. Garey and
                  David S. Johnson and
                  Larry J. Stockmeyer},
  title        = {Some Simplified {NP}-Complete Graph Problems},
  journal      = {Theor. Comput. Sci.},
  volume       = {1},
  number       = {3},
  pages        = {237--267},
  year         = {1976},
  url          = {https://doi.org/10.1016/0304-3975(76)90059-1},
  doi          = {10.1016/0304-3975(76)90059-1},
  bibsource    = {dblp computer science bibliography, https://dblp.org}
}

@inproceedings{Kumar17,
  author       = {Akash Kumar and
                  Anand Louis and
                  Madhur Tulsiani},
  editor       = {Satya V. Lokam and
                  R. Ramanujam},
  title        = {Finding Pseudorandom Colorings of Pseudorandom Graphs},
  booktitle    = {37th {IARCS} Annual Conference on Foundations of Software Technology
                  and Theoretical Computer Science, {FSTTCS} 2017, Kanpur, India, December
                  11-15, 2017},
  series       = {LIPIcs},
  volume       = {93},
  pages        = {37:1--37:12},
  publisher    = {Schloss Dagstuhl - Leibniz-Zentrum f{\"{u}}r Informatik},
  year         = {2017},
  url          = {https://doi.org/10.4230/LIPIcs.FSTTCS.2017.37},
  doi          = {10.4230/LIPICS.FSTTCS.2017.37},
  bibsource    = {dblp computer science bibliography, https://dblp.org}
}

@article{Buhai25,
  author       = {Rares{-}Darius Buhai and
                  Yiding Hua and
                  David Steurer and
                  Andor V{\'{a}}ri{-}Kakas},
  title        = {Finding Colorings in One-Sided Expanders},
  journal      = {CoRR},
  volume       = {abs/2508.02825},
  year         = {2025},
  url          = {https://doi.org/10.48550/arXiv.2508.02825},
  doi          = {10.48550/ARXIV.2508.02825},
  eprinttype    = {arXiv},
  eprint       = {2508.02825},
  bibsource    = {dblp computer science bibliography, https://dblp.org}
}

@article{Nilli,
author = {Nilli, A.},
title = {On the second eigenvalue of a graph},
year = {1991},
issue_date = {Aug. 28, 1991},
publisher = {Elsevier Science Publishers B. V.},
address = {NLD},
volume = {91},
number = {2},
issn = {0012-365X},
url = {https://doi.org/10.1016/0012-365X(91)90112-F},
doi = {10.1016/0012-365X(91)90112-F},
journal = {Discrete Math.},
month = aug,
pages = {207–210},
numpages = {4}
}

@article{FO05,
author = {Feige, Uriel and Ofek, Eran},
title = {Spectral techniques applied to sparse random graphs},
year = {2005},
issue_date = {September 2005},
publisher = {John Wiley \& Sons, Inc.},
address = {USA},
volume = {27},
number = {2},
issn = {1042-9832},
journal = {Random Struct. Algorithms},
month = sep,
pages = {251–275},
numpages = {25}
}

@article{FK81,
	author = {F{\"u}redi, Z. and Koml{\'o}s, J.},
	date = {1981/09/01},
	doi = {10.1007/BF02579329},
	id = {F{\"u}redi1981},
	isbn = {1439-6912},
	journal = {Combinatorica},
	number = {3},
	pages = {233--241},
	title = {The eigenvalues of random symmetric matrices},
	url = {https://doi.org/10.1007/BF02579329},
	volume = {1},
	year = {1981}}

@inproceedings{FKS89,
author = {Friedman, J. and Kahn, J. and Szemer\'{e}di, E.},
title = {On the second eigenvalue of random regular graphs},
year = {1989},
isbn = {0897913078},
publisher = {Association for Computing Machinery},
address = {New York, NY, USA},
url = {https://doi.org/10.1145/73007.73063},
doi = {10.1145/73007.73063},
booktitle = {Proceedings of the Twenty-First Annual ACM Symposium on Theory of Computing},
pages = {587–598},
numpages = {12},
location = {Seattle, Washington, USA},
series = {STOC '89}
}

@inproceedings{CzumajPS15,
  author       = {Artur Czumaj and
                  Pan Peng and
                  Christian Sohler},
  editor       = {Rocco A. Servedio and
                  Ronitt Rubinfeld},
  title        = {Testing Cluster Structure of Graphs},
  booktitle    = {Proceedings of the Forty-Seventh Annual {ACM} on Symposium on Theory
                  of Computing, {STOC} 2015, Portland, OR, USA, June 14-17, 2015},
  pages        = {723--732},
  publisher    = {{ACM}},
  year         = {2015},
  url          = {https://doi.org/10.1145/2746539.2746618},
  doi          = {10.1145/2746539.2746618},
  bibsource    = {dblp computer science bibliography, https://dblp.org}
}

@inproceedings{AG11,
  author       = {Sanjeev Arora and
                  Rong Ge},
  editor       = {Leslie Ann Goldberg and
                  Klaus Jansen and
                  R. Ravi and
                  Jos{\'{e}} D. P. Rolim},
  title        = {New Tools for Graph Coloring},
  booktitle    = {Approximation, Randomization, and Combinatorial Optimization. Algorithms
                  and Techniques - 14th International Workshop, {APPROX} 2011, and 15th
                  International Workshop, {RANDOM} 2011, Princeton, NJ, USA, August
                  17-19, 2011. Proceedings},
  series       = {Lecture Notes in Computer Science},
  volume       = {6845},
  pages        = {1--12},
  publisher    = {Springer},
  year         = {2011},
  url          = {https://doi.org/10.1007/978-3-642-22935-0\_1},
  doi          = {10.1007/978-3-642-22935-0\_1},
  bibsource    = {dblp computer science bibliography, https://dblp.org}
}

@inproceedings{Maxcut24,
  author       = {Agastya Vibhuti Jha and
                  Akash Kumar},
  editor       = {Karl Bringmann and
                  Martin Grohe and
                  Gabriele Puppis and
                  Ola Svensson},
  title        = {A Sublinear Time Tester for Max-Cut on Clusterable Graphs},
  booktitle    = {51st International Colloquium on Automata, Languages, and Programming,
                  {ICALP} 2024, Tallinn, Estonia, July 8-12, 2024},
  series       = {LIPIcs},
  volume       = {297},
  pages        = {91:1--91:17},
  publisher    = {Schloss Dagstuhl - Leibniz-Zentrum f{\"{u}}r Informatik},
  year         = {2024},
  url          = {https://doi.org/10.4230/LIPIcs.ICALP.2024.91},
  doi          = {10.4230/LIPICS.ICALP.2024.91},
  bibsource    = {dblp computer science bibliography, https://dblp.org}
}

@inproceedings{BHK25,
author = {Bafna, Mitali and Hsieh, Jun-Ting and Kothari, Pravesh K.},
title = {Rounding Large Independent Sets on Expanders},
year = {2025},
isbn = {9798400715105},
publisher = {Association for Computing Machinery},
address = {New York, NY, USA},
url = {https://doi.org/10.1145/3717823.3718137},
doi = {10.1145/3717823.3718137},
booktitle = {Proceedings of the 57th Annual ACM Symposium on Theory of Computing},
pages = {631–642},
numpages = {12},
location = {Prague, Czechia},
series = {STOC '25}
}

@inproceedings{Hsieh26,
author = {Jun-Ting Hsieh},
title = {Coloring 3-Colorable Graphs with Low Threshold Rank},
booktitle = {Proceedings of the 2026 Annual ACM-SIAM Symposium on Discrete Algorithms (SODA)},
chapter = {},
year = {2026}, 
pages = {718-726},
doi = {10.1137/1.9781611978971.27},
URL = {https://epubs.siam.org/doi/abs/10.1137/1.9781611978971.27},
eprint = {https://epubs.siam.org/doi/pdf/10.1137/1.9781611978971.27}
}

@article{Wigderson83,
  author       = {Avi Wigderson},
  title        = {Improving the Performance Guarantee for Approximate Graph Coloring},
  journal      = {J. {ACM}},
  volume       = {30},
  number       = {4},
  pages        = {729--735},
  year         = {1983},
  url          = {https://doi.org/10.1145/2157.2158},
  doi          = {10.1145/2157.2158},
  bibsource    = {dblp computer science bibliography, https://dblp.org}
}

@article{BK97,
  author       = {Avrim Blum and
                  David R. Karger},
  title        = {An {$\widetilde{O}(n^{3/14})$}-Coloring Algorithm for 3-Colorable Graphs},
  journal      = {Inf. Process. Lett.},
  volume       = {61},
  number       = {1},
  pages        = {49--53},
  year         = {1997},
  url          = {https://doi.org/10.1016/S0020-0190(96)00190-1},
  doi          = {10.1016/S0020-0190(96)00190-1},
  bibsource    = {dblp computer science bibliography, https://dblp.org}
}

@article{Blum94,
  author       = {Avrim Blum},
  title        = {New Approximation Algorithms for Graph Coloring},
  journal      = {J. {ACM}},
  volume       = {41},
  number       = {3},
  pages        = {470--516},
  year         = {1994},
  url          = {https://doi.org/10.1145/176584.176586},
  doi          = {10.1145/176584.176586},
  bibsource    = {dblp computer science bibliography, https://dblp.org}
}

@article{BR90,
  author       = {Bonnie Berger and
                  John Rompel},
  title        = {A Better Performance Guarantee for Approximate Graph Coloring},
  journal      = {Algorithmica},
  volume       = {5},
  number       = {3},
  pages        = {459--466},
  year         = {1990},
  url          = {https://doi.org/10.1007/BF01840398},
  doi          = {10.1007/BF01840398},
  bibsource    = {dblp computer science bibliography, https://dblp.org}
}

@article{KMS98,
  author       = {David R. Karger and
                  Rajeev Motwani and
                  Madhu Sudan},
  title        = {Approximate Graph Coloring by Semidefinite Programming},
  journal      = {J. {ACM}},
  volume       = {45},
  number       = {2},
  pages        = {246--265},
  year         = {1998},
  url          = {https://doi.org/10.1145/274787.274791},
  doi          = {10.1145/274787.274791},
  bibsource    = {dblp computer science bibliography, https://dblp.org}
}

@article{KLS00,
  author       = {Sanjeev Khanna and
                  Nathan Linial and
                  Shmuel Safra},
  title        = {On the Hardness of Approximating the Chromatic Number},
  journal      = {Comb.},
  volume       = {20},
  number       = {3},
  pages        = {393--415},
  year         = {2000},
  url          = {https://doi.org/10.1007/s004930070013},
  doi          = {10.1007/S004930070013},
  bibsource    = {dblp computer science bibliography, https://dblp.org}
}

@inproceedings{Khot01,
author = {Khot, S.},
title = {Improved Inapproximability Results for {M}ax{C}lique, Chromatic Number and Approximate Graph Coloring},
year = {2001},
isbn = {0769513905},
publisher = {IEEE Computer Society},
address = {USA},
booktitle = {Proceedings of the 42nd IEEE Symposium on Foundations of Computer Science},
pages = {600},
series = {FOCS '01}
}

@inproceedings{AC06,
  author       = {Sanjeev Arora and
                  Eden Chlamtac},
  editor       = {Jon M. Kleinberg},
  title        = {New approximation guarantee for chromatic number},
  booktitle    = {Proceedings of the 38th Annual {ACM} Symposium on Theory of Computing,
                  Seattle, WA, USA, May 21-23, 2006},
  pages        = {215--224},
  publisher    = {{ACM}},
  year         = {2006},
  url          = {https://doi.org/10.1145/1132516.1132548},
  doi          = {10.1145/1132516.1132548},
  bibsource    = {dblp computer science bibliography, https://dblp.org}
}

@inproceedings{DMR06,
author = {Dinur, Irit and Mossel, Elchanan and Regev, Oded},
title = {Conditional hardness for approximate coloring},
year = {2006},
isbn = {1595931341},
publisher = {Association for Computing Machinery},
address = {New York, NY, USA},
url = {https://doi.org/10.1145/1132516.1132567},
doi = {10.1145/1132516.1132567},
booktitle = {Proceedings of the Thirty-Eighth Annual ACM Symposium on Theory of Computing},
pages = {344–353},
numpages = {10},
location = {Seattle, WA, USA},
series = {STOC '06}
}

@article{KT17,
  author       = {Ken{-}ichi Kawarabayashi and
                  Mikkel Thorup},
  title        = {Coloring 3-Colorable Graphs with Less than \emph{n}\({}^{\mbox{1/5}}\)
                  Colors},
  journal      = {J. {ACM}},
  volume       = {64},
  number       = {1},
  pages        = {4:1--4:23},
  year         = {2017},
  url          = {https://doi.org/10.1145/3001582},
  doi          = {10.1145/3001582},
  bibsource    = {dblp computer science bibliography, https://dblp.org}
}

@inproceedings{KTY24,
  author       = {Ken{-}ichi Kawarabayashi and
                  Mikkel Thorup and
                  Hirotaka Yoneda},
  editor       = {Bojan Mohar and
                  Igor Shinkar and
                  Ryan O'Donnell},
  title        = {Better Coloring of 3-Colorable Graphs},
  booktitle    = {Proceedings of the 56th Annual {ACM} Symposium on Theory of Computing,
                  {STOC} 2024, Vancouver, BC, Canada, June 24-28, 2024},
  pages        = {331--339},
  publisher    = {{ACM}},
  year         = {2024},
  url          = {https://doi.org/10.1145/3618260.3649768},
  doi          = {10.1145/3618260.3649768},
  bibsource    = {dblp computer science bibliography, https://dblp.org}
}

@inproceedings{GKLMS21,
  author       = {Grzegorz Gluch and
                  Michael Kapralov and
                  Silvio Lattanzi and
                  Aida Mousavifar and
                  Christian Sohler},
  title        = {Spectral Clustering Oracles in Sublinear Time},
  booktitle    = {Proceedings of the 2021 {ACM-SIAM} Symposium on Discrete Algorithms,
                  {SODA} 2021, Virtual Conference, January 10 - 13, 2021},
  pages        = {1598--1617},
  publisher    = {{SIAM}},
  year         = {2021}
}

@inproceedings{CKK18,
  title={Testing Graph Clusterability: Algorithms and Lower Bounds},
  author={Chiplunkar, Ashish and Kapralov, Michael and Khanna, Sanjeev and Mousavifar, Aida and Peres, Yuval},
  booktitle={2018 IEEE 59th Annual Symposium on Foundations of Computer Science (FOCS)},
  pages={497--508},
  year={2018},
  organization={IEEE}
}
\appendix 
\section{Obtaining Color Representatives (Proof of \Cref{lemma:prepro_main})}\label{sec:prepro_analysis}
In this section, we prove \Cref{lemma:prepro_main}, which is restated below for the convenience of the reader. 

\preprolemma*
The analysis follows the analysis of the preprocessing routine in \cite{FKKLMW26}. 
\begin{lemma}\label{lemma:prepro_properties}
Let $S$ be a multiset of $10k\log k$ vertices sampled uniformly at random with replacement from~$V$. Then, 
\begin{enumerate}[label=(\textbf{\arabic*})]
    \item With probability at least $0.99$, $S$ does not contain any vertices from $B$, i.e.,  $\Pr[ S \cap B = \emptyset ] \geq 0.99$.
    \label{item:approxmu_event1}
    \item With probability at least $0.99$, $S$ contains at least one vertex from every color class, i.e., $\Pr[ \forall i, \ S \cap C_i \neq \emptyset ] \geq 0.99$.\label{item:approxmu_event2}
\end{enumerate}
\end{lemma}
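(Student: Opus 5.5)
Both items follow from union bounds over the $|S| = 10k\log k$ independent uniform samples, with $k$ treated as a constant throughout.

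For \ref{item:approxmu_event1}, each sample lands in $B$ with probability $|B|/n$. By \Cref{lemma:B_bound}, this probability is $O(\sqrt{\lambda/d})$. A union bound over the samples gives
\[
\Pr[S\cap B\neq\emptyset]\le 10k\log k\cdot O\!\left(\sqrt{\lambda/d}\right).
\]
We need this to be at most $0.01$. Since $\lambda\le c_k d$ with $c_k$ a sufficiently small constant depending on $k$, we have $\sqrt{\lambda/d}\le\sqrt{c_k}$. We therefore choose $c_k$ small enough that $10k\log k\cdot O(\sqrt{c_k})\le 0.01$. The only delicate point is confirming that the hidden constant in \Cref{lemma:B_bound} depends only on $k$. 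This holds by the paper's convention of suppressing $k$-dependence in the $O$-notation, so $c_k$ may absorb it.

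For \ref{item:approxmu_event2}, fix a color class $C_i$. Each sample lies in $C_i$ with probability exactly $1/k$ by balance, so
\[
\Pr[S\cap C_i=\emptyset]=(1-1/k)^{10k\log k}\le e^{-10\log k}=k^{-10}.
\]
A union bound over the $k$ classes bounds the failure probability by $k^{-9}$. For $k\ge 3$ this is at most $3^{-9}<0.01$ if $\log$ is natural. If $\log$ means $\log_2$, the exponent is only larger, so the bound still holds.

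Neither step poses a real obstacle. The only point needing care is the constant bookkeeping in item \ref{item:approxmu_event1}, which is handled by the choice of $c_k$ described above.
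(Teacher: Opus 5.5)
Your proposal is correct and follows essentially the same route as the paper. For (1), the paper uses \Cref{lemma:B_bound} with $\lambda/d$ small enough (depending on $k$) to get $|B|\le n/(10^3k\log k)$, then applies Markov to $\E|S\cap B| = |S||B|/n\le 10^{-2}$, which is the same bound as your union bound. For (2), it uses exactly your computation $(1-1/k)^{10k\log k}\le k^{-10}$ followed by a union bound over the $k$ classes.
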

\begin{proof}
We first show the statement \ref{item:approxmu_event1}, and then we show the statement \ref{item:approxmu_event2}. 
\paragraph{Statement \ref{item:approxmu_event1}.}  By \Cref{lemma:B_bound}, $ |B| \leq C\cdot  \sqrt{\frac{\lambda }{d}}n$ for some constant $C$. Since we assume  $k =O(1)$ and $\lambda/d$ is smaller than a sufficiently small constant (depending on $k$), this gives 
\begin{equation}\label{eq:prepro_B_bound}
    |B| \leq \frac{n}{10^3k\log k}.  
\end{equation} So
$$ \E_S[ |S \cap B|] = |S|\cdot \frac{|B|}{n} = \frac{10k\log k}{ 10^3 k\log k} =  10^{-2}. $$
 Therefore, with probability at least $0.99$ by Markov's inequality,  $S \cap B$ is empty.
\paragraph{Statement \ref{item:approxmu_event2}.} Fix $i$. The probability that a vertex sampled from $V$ uniformly at random is not from $C_i$, is $1 - \frac{|C_i|}{n} =  1 - \frac{1}{k}.$ Therefore, since $S$ is a multiset of $10 k\log k$ vertices sampled uniformly at random with replacement,
    \[ \Pr[S \cap C_i = \emptyset] =  \left(1 - \frac{1}{k}\right)^{|S|}  =  \left(1 - \frac{1}{k}\right)^{10k\log k} \leq \frac{1}{k^{10}}. \] 
\noindent
By a union bound, 
 \[ \Pr[ \ \exists i \ \text{ such that }\  S \cap C_i = \emptyset \ ]\leq  k\cdot \frac{1}{k^{10}} = \frac{1}{k^9} \leq 0.01,\]
 where the last inequality follows from $k \geq 2$. 
\end{proof}

\begin{definition}[Same-color graph]\label{def:same-color-graph} Given a set of vertices $S$, say that $H = (S, E_H)$ is the same-color graph induced by $S$ if the set of edges $E_H$ is exactly the set of pairs of vertices in $S$ belonging to the same color class i.e.,  
$$E_H = \{ (u,v) \in S\times S : \chi(u) = \chi(v)\}. $$
\end{definition}

We now prove \Cref{lemma:prepro_main}.
\begin{proof}[Proof of  \Cref{lemma:prepro_main}]
We first show in \Cref{claim:H_cliques} that conditioned on the event $S \cap B= \emptyset$, the graph $H$ computed after line \ref{line:end_init_H} in \Cref{alg:app_centers} is the same-color graph. 
\begin{claim}[$H$ is the same-color graph]\label{claim:H_cliques}
If $S \cap B = \emptyset$, then $H$ is the same-color graph induced by $S$ (as per \Cref{def:same-color-graph}) with probability at least $1-n^{-100}$.  
\end{claim}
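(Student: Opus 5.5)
The plan is to derive the claim directly from \Cref{cor:approxdist_decision}, applied to the pairs of sampled vertices. Condition on $S \cap B = \emptyset$, so every vertex of $S$ is good. By \Cref{cor:approxdist_decision}, with probability at least $1-n^{-100}$ the following event $\mathcal E$ holds: for all $u,v \in V\setminus B$, if $\chi(u)=\chi(v)$ then $\approxdist(u,v)\le \frac{k}{4n}$, and if $\chi(u)\neq\chi(v)$ then $\approxdist(u,v)\ge\frac{k}{2n}$. The randomness of $\approxdist$ is independent of the choice of $S$, and the corollary is uniform over all pairs of good vertices. Therefore conditioning on $S$ (with $S\cap B=\emptyset$) does not affect this probability bound.

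On the event $\mathcal E$, consider any pair $u,v\in S$ examined in lines~\ref{line:init_H}--\ref{line:end_init_H} of \Cref{alg:app_centers}. If $\chi(u)=\chi(v)$, then $\approxdist(u,v)\le \frac{k}{4n}<\frac{k}{3n}$, so the edge $\{u,v\}$ is added. If $\chi(u)\ne\chi(v)$, then $\approxdist(u,v)\ge\frac{k}{2n}>\frac{k}{3n}$, so no edge is added. Hence the edge set of $H$ is exactly $\{(u,v)\in S\times S:\chi(u)=\chi(v)\}$, which is the same-color graph of \Cref{def:same-color-graph}. Since $S$ is a multiset, repeated copies of the same vertex have equal colors and are connected (or self-paired), which is consistent with the definition.

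The only subtlety is the quantifier order. Each call to $\approxdist$ uses fresh randomness, so I would either read \Cref{lemma:approxdist} as giving its guarantee per call with failure probability $n^{-102}$ per pair, or apply the per-pair Chernoff bound from its proof directly. A union bound over the at most $(10k\log k)^2=O(1)$ calls made on pairs of $S$ then gives total failure probability far below $n^{-100}$. I expect no real obstacle; the threshold $\frac{k}{3n}$ lies strictly between the two bounds of the corollary.
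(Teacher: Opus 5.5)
Your proposal is correct and is essentially the paper's proof: condition on $S \cap B = \emptyset$, apply \Cref{cor:approxdist_decision} to all pairs in $S$, and observe that the threshold $\frac{k}{3n}$ lies strictly between $\frac{k}{4n}$ and $\frac{k}{2n}$. Your extra handling of the quantifier order, via a per-pair failure bound and a union bound over the $O(1)$ sampled pairs, is a slightly more careful version of a step the paper takes for granted.
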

\begin{proof}
Suppose that $S \cap B = \emptyset$. Then, every $u,v \in S$ belong to $V \setminus B$, so by \Cref{cor:approxdist_decision}, with probability at least $1-n^{-100}$, for every $u,v \in S$ that belong to the same color class, 
$$  \approxdist{(u,v)}  \leq  \frac{k}{4n} < \frac{k}{3n}, $$
and for every pair $u,v \in S$ that belongs to different color classes, 
$$  \approxdist{(u,v)}  \geq  \frac{k}{2n} > \frac{k}{3n}  $$

So the set of edges added to $H$ in line \ref{line:add_to_H} is exactly the set  $\{ (u,v) \in S \times S: \chi(u) = \chi(v)\}$, and so $H$ is the same-color graph induced by $S$. 
\end{proof}

By \Cref{claim:H_cliques}, $H$ is the same-color graph, and by property \ref{item:approxmu_event2} in \Cref{lemma:prepro_properties}, with probability at least $0.99$, it has exactly $k$ cliques. Every time line~\ref{line:select_center} in \prepro{} is triggered, we pick a vertex $u$, declare it a representative vertex by setting $c_i \coloneqq u$, and remove the rest of the vertices from $C_{\chi(u)}$ from $H$.
Hence, the algorithm stops after exactly $k$ iterations. 
Since the output is exactly $k$ vertices belonging to different color classes, and since $S \cap B = \emptyset$, the returned set is a set of color class representatives (as per \Cref{def:rep}). 

\paragraph{Running time.} The running time is dominated by the $|S|^2$ calls to \approxdist{} (\Cref{alg:approxdist}). By \Cref{lemma:approxdist}, each of these calls takes time $\widetilde O\left(n^{1/2 + O\left(\frac{1}{\log(d/\lambda)}\right)}\right)$. Since $|S| = 10k \log k  =O(1)$, the number of such calls is $O(1)$. 
\end{proof}

\section{Deferred Proofs from \Cref{sec:properties}}\label{sec:properties_pfs}
In this section, we prove the properties of the input instance stated in \Cref{sec:properties}, i.e., \Cref{lemma:C4}, \Cref{lemma:B_bound} and \Cref{lemma:goodnorm}. 

We start by proving an auxiliary lemma (\Cref{lemma:degrees}) that will be used in both the proof of \Cref{lemma:C4} and \Cref{lemma:B_bound}. It shows that almost all vertices have degree $\frac{d(k-1)}{k}\left(1 \pm \sqrt{\lambda/d}\right)$, and that their degrees are almost uniformly spread across color classes. 
\begin{definition}\label{def:degrees}
     For every $i,j \in [k]$ with $i\neq j$, let 
    $$C_{i,j}^+ = \left \{ v \in C_i : |\nei_{G'}(v) \cap C_j| \geq \frac{d}{k} \left(1+ \sqrt{\frac{\lambda}{d}}\right)\right \} \quad \text{and} \quad C_{i,j}^- = \left \{ v \in C_i : |\nei_{G'}(v) \cap C_j| \leq \frac{d}{k} \left(1- \sqrt{\frac{\lambda}{d}}\right)\right \}.$$ 
\end{definition}
\begin{lemma}\label{lemma:degrees}
    For every $i\neq j$, $$|C_{i,j}^+|, |C_{i,j}^-| =  O\left( \frac{\lambda n}{d}\right).$$
\end{lemma}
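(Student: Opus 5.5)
We will prove the bound via the expander mixing lemma applied to the host graph $G$ (the $d$-regular $\lambda$-expander), noting that for $i\neq j$ and $v\in C_i$, every edge of $G$ from $v$ to $C_j$ is bichromatic and hence survives in $G'$. Thus $|\nei_{G'}(v)\cap C_j| = |\nei_G(v)\cap C_j|$, so it suffices to work in $G$. Fix $i\neq j$ and let $S = C_{i,j}^+$ and $T = C_j$, where $|T| = n/k$.

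Summing over $v\in S$, the number of edges $e_G(S,T)$ is at least $|S|\cdot\frac{d}{k}(1+\sqrt{\lambda/d})$. The expander mixing lemma for $G$ gives
\[
\left| e_G(S,T) - \frac{d|S||T|}{n}\right| \le \lambda\sqrt{|S||T|}\sqrt{\left(1-\tfrac{|S|}{n}\right)\left(1-\tfrac{|T|}{n}\right)} \le \lambda\sqrt{|S|\, n/k}.
\]
Since $\frac{d|S||T|}{n} = \frac{d|S|}{k}$, we get $|S|\cdot\frac{d}{k}\sqrt{\lambda/d} \le \lambda\sqrt{|S| n/k}$. Squaring and rearranging gives $|S|\le \frac{\lambda^2 k}{d^2(\lambda/d)}\, n = \frac{k\lambda}{d}\, n$, which is $O(\lambda n/d)$ since $k$ is constant. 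The argument for $C_{i,j}^-$ is symmetric, using the lower side of the mixing inequality: there $e_G(S,T)\le |S|\frac dk(1-\sqrt{\lambda/d})$, so the deficit $\frac{d|S|}{k}\sqrt{\lambda/d}$ is again at most $\lambda\sqrt{|S|n/k}$.

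The only step needing care is the form of the mixing lemma. For a $d$-regular graph with all nontrivial eigenvalues at most $\lambda$ in absolute value, the bound $|e(S,T) - d|S||T|/n|\le \lambda\sqrt{|S||T|}$ holds for arbitrary (possibly overlapping) sets, counting ordered pairs. Here $S\subseteq C_i$ and $T = C_j$ are disjoint, so there is no double counting issue. I would derive it from $\1_S^\top A_G \1_T$ by decomposing $\1_S,\1_T$ into their components along $\1/\sqrt n$ and its orthogonal complement, then applying Cauchy--Schwarz. Once it is in place, the lemma follows from the one-line computation above, so I do not expect a real obstacle.
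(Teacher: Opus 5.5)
Your proposal is correct and follows essentially the same route as the paper: the expander mixing lemma on the host graph $G$ with $S=C_{i,j}^{+}$ (resp.\ $C_{i,j}^{-}$) and $T=C_j$, rearranged to give $|S|\le k\lambda n/d$. Your observation that edges between $C_i$ and $C_j$ are never deleted, so $|\nei_{G'}(v)\cap C_j|=|\nei_G(v)\cap C_j|$, is exactly what the $C_{i,j}^{-}$ case needs. The paper covers that case only by ``a similar argument'', and the inclusion $G'\subseteq G$ it uses for $C_{i,j}^{+}$ gives the inequality in the wrong direction there.
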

 \Cref{{lemma:degrees}} is proved in \cite{DF16} for $k = 3$ (see the proof of Lemma C.4 in \cite{DF16}). We include the proof of the general version for completeness. 
The proof uses the expander mixing lemma, stated below. 
    \begin{lemma}[Expander Mixing Lemma]\label{lemma:expandermixing}
Let $G=(V,E)$ be a $d$-regular $\lambda$-expander on  $n$ vertices. 
Then for all $S,T \subseteq V$, 
\[
\left| |E_G(S,T) |- \frac{d}{n}\,|S|\,|T| \right|
\le \lambda \,\sqrt{|S|\,|T|}.
\]
\end{lemma}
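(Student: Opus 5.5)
We prove the Expander Mixing Lemma by the standard spectral argument. Write $A_G = \sum_{i=1}^n \lambda_i(A_G)\, u_i u_i^\top$ for the eigendecomposition of the adjacency matrix of $G$, with orthonormal eigenvectors $u_1,\dots,u_n$. Since $G$ is $d$-regular, we have $\lambda_1(A_G) = d$ and $u_1 = \1/\sqrt{n}$. The quantity $|E_G(S,T)|$ is exactly $\1_S^\top A_G \1_T$, where $\1_S, \1_T \in \{0,1\}^n$ are the indicator vectors of $S$ and $T$. This counts ordered pairs $(s,t) \in S\times T$ with $st \in E$, which is the convention under which the lemma is stated. Our plan is to separate the contribution of the top eigenvector from the rest.

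\textbf{Step 1 (top component).} Decompose $\1_S = \frac{|S|}{n}\1 + x$ and $\1_T = \frac{|T|}{n}\1 + y$, where $x,y \perp \1$. Since $A_G\1 = d\1$ and $A_G$ is symmetric, the cross terms vanish, because $\1^\top A_G y = d\,\1^\top y = 0$. Hence
\[
\1_S^\top A_G \1_T = \frac{|S||T|}{n^2}\, \1^\top A_G \1 + x^\top A_G y = \frac{d}{n}|S||T| + x^\top A_G y .
\]

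\textbf{Step 2 (remaining spectrum).} The vectors $x$ and $y$ lie in $\spn(u_2,\dots,u_n)$, on which $A_G$ has operator norm at most $\max_{i\ge 2}|\lambda_i(A_G)| \le \lambda$ by the $\lambda$-expander assumption. By Cauchy--Schwarz, $|x^\top A_G y| \le \lambda \|x\|_2\|y\|_2$.

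\textbf{Step 3 (norms).} By Pythagoras, $\|x\|_2^2 = |S| - |S|^2/n \le |S|$, and likewise $\|y\|_2^2 \le |T|$. Combining this with Steps 1 and 2 gives
\[
\Bigl| |E_G(S,T)| - \frac{d}{n}|S||T| \Bigr| \le \lambda\sqrt{|S||T|} .
\]

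None of these steps is a real obstacle. The only point requiring care is the convention for $E_G(S,T)$ when $S$ and $T$ overlap: edges inside $S\cap T$ are counted twice. That is exactly what $\1_S^\top A_G \1_T$ computes, so we adopt this convention throughout.
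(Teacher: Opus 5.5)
Your proof is correct. It is the standard spectral argument: split off the $\1/\sqrt{n}$ component, use that $A_G$ preserves $\1^\perp$ with operator norm $\max\{|\lambda_2(A_G)|,|\lambda_n(A_G)|\}\le\lambda$ there, and bound $\|x\|_2^2 = |S|-|S|^2/n \le |S|$. The only implicit step is choosing the orthonormal eigenbasis with $u_1=\1/\sqrt{n}$ when $d$ is a repeated eigenvalue. This is always possible because $\1$ is an eigenvector for $d=\lambda_1(A_G)$, and it leaves exactly $\lambda_2(A_G),\dots,\lambda_n(A_G)$ on $\1^\perp$.

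The paper gives no proof of this lemma; it quotes it as a standard tool. It applies the lemma only to the disjoint sets $S=C_{i,j}^+$ (or $C_{i,j}^-$) $\subseteq C_i$ and $T=C_j$ in the proof of \Cref{lemma:degrees}. There, your ordered-pair convention for $E_G(S,T)$ coincides with the unordered-edge count. Your convention is nonetheless the right one for the general statement: with unordered edges and overlapping sets the inequality can fail, e.g.\ for $S=T=V$ whenever $\lambda<d/2$.
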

\begin{proof}[Proof of \Cref{lemma:degrees} ]
   For every $i, j$, by \Cref{lemma:expandermixing} applied with $S = C_{i,j}^+$ and  $T = C_j$, 
   \begin{align*}
       \left(1 + \sqrt{\frac{\lambda}{d}}\right)\frac{d}{k} |C_{i,j}^+| & \leq \left|E_{G'}\left(C_{i,j}^+, C_j \right) \right| && \text{by \Cref{def:degrees}} \\
       & \leq  \left|E_{G}\left(C_{i,j}^+, C_j \right) \right|  && \text{since $G' \subseteq G$}\\
       &\leq  \frac{d}{n}|C_{i,j}^+||C_j| + \lambda \sqrt{|C_{i,j}^+||C_j|} && \text{by \Cref{lemma:expandermixing}}. 
   \end{align*}
   Rearranging and using that $|C_j| = n/k$ gives 
 $$ |C_{i,j}^+| \leq \frac{\lambda n k  }{d} = O\left( \frac{\lambda n }{d} \right) $$
A similar argument gives 
  $$ |C_{i,j}^-| \leq \frac{\lambda n k  }{d} = O\left( \frac{\lambda n }{d} \right) . $$
\end{proof}

\subsection{Proof of \Cref{lemma:C4}}\label{sec:C4}
In this section, we prove \Cref{lemma:C4}, which is restated below. 
\spectralprops*

\cite{DF16} proved an analogous guarantee for the \emph{unnormalized} adjacency matrix $A$. We will use the following two results from \cite{DF16}. 

\begin{lemma}[Lemma C.4, 1.(c) in \cite{DF16}]\label{lemma:small_eval} For all $2 \leq i\leq n-k+1,$
    $$|\lambda_i(A)| \leq 2\lambda + O\left(\sqrt{d\lambda}\right).$$ 
\end{lemma}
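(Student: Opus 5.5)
The plan is a direct perturbation argument: write the adjacency matrix $A$ of $G'$ as a rank-$k$ ``ideal'' matrix plus an error of spectral norm at most $2\lambda$, and then apply Weyl's inequality. This should even give the stronger bound $|\lambda_i(A)|\le 2\lambda$, so the additive $O(\sqrt{d\lambda})$ term in the statement is just slack.

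\emph{Step 1: decompose the host graph.} Let $A_G$ be the adjacency matrix of $G$ and $J$ the all-ones $n\times n$ matrix. Since $G$ is $d$-regular, $\1/\sqrt n$ is its top eigenvector with eigenvalue $d$. Hence
\[
A_G=\tfrac{d}{n}J+E, \qquad \|E\|_2=\max\{|\lambda_2(A_G)|,|\lambda_n(A_G)|\}\le\lambda .
\]
For each color class $C_j$, let $J_{C_j}=\1_{C_j}\1_{C_j}^\top$, and let $E_{\mathrm{bd}}$ be the block-diagonal part of $E$ with respect to the partition $\mathcal C$. That is, $E_{\mathrm{bd}}$ has entries $E_{uv}$ when $\chi(u)=\chi(v)$ and $0$ otherwise.

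\emph{Step 2: decompose the planted graph.} Deleting all monochromatic edges means subtracting the diagonal blocks of $A_G$. This gives
\[
A=\underbrace{\tfrac{d}{n}\Bigl(J-\textstyle\sum_{j}J_{C_j}\Bigr)}_{=:L}+\underbrace{\bigl(E-E_{\mathrm{bd}}\bigr)}_{=:F}.
\]
Each diagonal block of $E_{\mathrm{bd}}$ is a principal submatrix of $E$, so its norm is at most $\|E\|_2$. Because $E_{\mathrm{bd}}$ is block diagonal, its norm is the maximum of the block norms, so $\|E_{\mathrm{bd}}\|_2\le\lambda$. Therefore $\|F\|_2\le 2\lambda$.

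\emph{Step 3: spectrum of $L$.} The matrix $L$ vanishes on the orthogonal complement of $\spn\{\1_{C_j}\}_{j\in[k]}$, which has dimension $n-k$. On $\spn\{\1_{C_j}\}$, using $|C_j|=n/k$:
\begin{itemize}
\item $L\1=\frac{d}{n}\bigl(n-\frac nk\bigr)\1=\frac{d(k-1)}{k}\1$;
\item for $x=\sum_j a_j\1_{C_j}$ with $\sum_j a_j=0$, we have $Jx=0$ and $\sum_jJ_{C_j}x=\frac nk x$, so $Lx=-\frac dk x$.
\end{itemize}
So the sorted eigenvalues of $L$ are $\frac{d(k-1)}{k}$ (once), then $0$ (exactly $n-k$ times, at positions $2,\dots,n-k+1$), then $-\frac dk$ (exactly $k-1$ times).

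\emph{Step 4: Weyl's inequality.} Weyl gives $|\lambda_i(A)-\lambda_i(L)|\le\|F\|_2\le 2\lambda$ for every $i$. For $2\le i\le n-k+1$ we have $\lambda_i(L)=0$, hence $|\lambda_i(A)|\le 2\lambda\le 2\lambda+O(\sqrt{d\lambda})$.

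The only step needing care is the index bookkeeping in Step 3: the zero eigenvalues of $L$ must occupy exactly positions $2$ through $n-k+1$ in the sorted order, so that Weyl bounds both sides. This is guaranteed by the exact balance $|C_j|=n/k$, which makes the $k-1$ negative eigenvalues all equal to $-d/k$.

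If one wanted to allow approximately balanced classes, I would instead use Courant--Fischer. The top eigenvalue of $L$ stays positive and the remaining $k-1$ nonzero eigenvalues stay negative, because $J-\sum_jJ_{C_j}$ restricted to $\spn\{\1_{C_j}\}$ has signature $(1,k-1)$. So the zero eigenvalues still sit at positions $2,\dots,n-k+1$, and the same bound follows.
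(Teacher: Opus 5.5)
Your proof is correct, and it takes a genuinely different, self-contained route. The paper does not prove this bound itself. It imports it from Lemma~C.4 of \cite{DF16}, which is stated there for $k=3$, together with the remark that the argument extends to constant $k$.

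Your argument checks out at each step:
\begin{itemize}
\item The identity $A=\frac{d}{n}\bigl(J-\sum_j J_{C_j}\bigr)+(E-E_{\mathrm{bd}})$ holds entrywise, including on the diagonal, where both sides vanish.
\item The compression bound $\|E_{\mathrm{bd}}\|_2\le\|E\|_2\le\lambda$ is valid.
\item The spectrum of $L$ is exactly as you state, because $|C_j|=n/k$.
\item Weyl's inequality then gives $|\lambda_i(A)|\le 2\lambda$ for $2\le i\le n-k+1$, for every $k$ at once.
\end{itemize}

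Compared with the citation, your route buys two things. First, it removes any reliance on the unproved extension from $k=3$ to general $k$. Second, it drops the additive $O(\sqrt{d\lambda})$ term. That term is the kind of error that appears when one treats the class indicators as approximate eigenvectors of the non-regular matrix $A$ and pays for degree deviations, as the paper does with the sets $C_{i,j}^{\pm}$ of \Cref{lemma:degrees}. Since your $L$ is an exact rank-$k$ matrix with known spectrum, degree irregularity never enters.

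The same Weyl step also gives $\bigl|\lambda_1(A)-\frac{d(k-1)}{k}\bigr|\le 2\lambda$ and $\bigl|\lambda_i(A)+\frac dk\bigr|\le 2\lambda$ for $i\ge n-k+2$. Fed into the Courant--Fischer argument in the proof of \Cref{lemma:C4}, your bound would give $O(\lambda/d)$ rather than $O(\sqrt{\lambda/d})$ for the middle eigenvalues of $\bar A$.

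The citation buys the paper only brevity. Your closing inertia remark for approximately balanced classes is also correct.
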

\begin{remark}
Lemma C.4 in \cite{DF16} is stated for $k=3$. However, as noted in  \cite{DF16}, their proof gives the same result for general $k = O(1)$. 
\end{remark}
\begin{lemma}[Lemma B.3 in \cite{DF16}]\label{lemma:B3}
    Let $X$ be a symmetric matrix of order $n$ with eigenvalues $\lambda_1, \dots \lambda_n$ and associated eigenvectors $e_1(X), \dots e_n(X)$. For a unit vector $\bar u$ and a scalar $\alpha$, suppose that $X \bar u = \alpha \bar u + \epsilon$ for some vector $\epsilon \in \R^n$. For $\delta > 0$, let $S_{\delta} \subseteq \{1,\dots,  n\}$ be the set of those indices $i$ for which $|\lambda_i - \alpha| \leq \delta$. Then there exists a vector $\epsilon_u \in \R^n$ such that $\bar u + \epsilon_u \in \spn\left( \{ e_i(X)\}_ {i \in S_{\delta}}\right)$ and $\|\epsilon_u\|_2 \leq \|\epsilon\|_2/\delta$.  
\end{lemma}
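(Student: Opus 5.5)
The statement to prove is Lemma B.3 from \cite{DF16}: a standard perturbation fact for symmetric matrices, proved by expanding $\bar u$ in the eigenbasis of $X$. Since $X$ is symmetric, $e_1(X),\dots,e_n(X)$ form an orthonormal basis. Write
\[
\bar u=\sum_{i=1}^n a_i e_i(X), \qquad a_i=\langle \bar u, e_i(X)\rangle .
\]
The natural choice is to take $\bar u+\epsilon_u$ to be the orthogonal projection of $\bar u$ onto $\spn(\{e_i(X)\}_{i\in S_\delta})$. That is, set
\[
\epsilon_u \coloneqq -\sum_{i\notin S_\delta} a_i e_i(X),
\]
so that $\bar u+\epsilon_u=\sum_{i\in S_\delta} a_i e_i(X)$ lies in the required span by construction. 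It remains only to bound $\|\epsilon_u\|_2$.

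The bound comes from expanding $\epsilon = X\bar u-\alpha\bar u$ in the same basis:
\[
\epsilon=\sum_{i=1}^n (\lambda_i-\alpha)a_i e_i(X).
\]
By orthonormality (Parseval),
\[
\|\epsilon\|_2^2=\sum_i(\lambda_i-\alpha)^2a_i^2\ \ge\ \sum_{i\notin S_\delta}(\lambda_i-\alpha)^2a_i^2 .
\]
For $i\notin S_\delta$ we have $|\lambda_i-\alpha|>\delta$, so the right-hand side is at least $\delta^2\sum_{i\notin S_\delta}a_i^2=\delta^2\|\epsilon_u\|_2^2$. Taking square roots gives $\|\epsilon_u\|_2\le \|\epsilon\|_2/\delta$, as required.

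There is no real obstacle. The only points to check are that symmetry of $X$ is what provides the orthonormal eigenbasis, which is needed both for Parseval and for identifying $\epsilon_u$'s norm with $\sqrt{\sum_{i\notin S_\delta}a_i^2}$, and that the case $S_\delta=\emptyset$ is covered. In that case $\epsilon_u=-\bar u$, the span is $\{0\}$, and the same inequality yields $1\le\|\epsilon\|_2/\delta$, so the statement still holds.
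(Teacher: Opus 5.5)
Your proof is correct. Taking $\bar u+\epsilon_u$ to be the orthogonal projection of $\bar u$ onto $\spn(\{e_i(X)\}_{i\in S_\delta})$ and bounding the discarded part via $\|\epsilon\|_2^2=\sum_i(\lambda_i-\alpha)^2a_i^2\ge\delta^2\sum_{i\notin S_\delta}a_i^2$ is the standard eigenbasis-expansion argument. The paper gives no proof of its own (it imports the lemma from \cite{DF16}), and your argument needs only the orthonormal eigenbasis that the paper's notation already fixes.
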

\begin{proof}[Proof of \Cref{lemma:C4}]
We show each of the properties separately.
\paragraph{Property \ref{item:lambda1}.} Holds by \Cref{obs:e1}. 
\paragraph{Property \ref{item:lambdamid}.} Let $v_1, \dots v_n$ denote the unit norm eigenvectors of the unnormalized adjacency matrix $A$, corresponding to eigenvalues $\lambda_1(A), \dots \lambda_{n}(A)$. By the Courant–Fischer theorem, 
\begin{align*}
   \lambda_2(\bar A) & \leq \max_{x \in\spn \{ D^{1/2}v_2, \dots,  D^{1/2}v_n\}} \frac{\langle x, \bar Ax\rangle }{\|x\|_2^2}  \\
   & = \max_{y \in\spn \{ v_2, \dots , v_n\}} \frac{\langle D^{1/2}y , \bar A D^{1/2}y \rangle }{\|D^{1/2}y\|_2^2} \\
   & = \max_{y \in\spn \{ v_2, \dots,  v_n\}} \frac{\langle y , A y \rangle }{\|D^{1/2}y\|_2^2}  && \text{since $A = D^{1/2} \bar A D^{1/2}$}\\
   & = O \left(\frac{1}{d}\right)  \max_{y \in\spn \{ v_2, \dots,  v_n\}} \frac{\langle y , A y \rangle }{\|y\|_2^2}  && \text{by the model assumption} \\
   & = O \left(\frac{1}{d}\right) \lambda_2(A) \\
   & \leq  O\left(\sqrt{\frac{\lambda}{d}}\right) && \text{by \Cref{lemma:small_eval}}. 
\end{align*}
Similarly, by the Courant–Fischer theorem,   
\begin{align*}
    \lambda_{n-k+1}(\bar A) &\geq  \min_{x \in\spn \{ D^{1/2}v_1, \dots,  D^{1/2}v_{n-k+1}\}} \frac{\langle x, \bar Ax\rangle }{\|x\|_2^2}  \\
     & = \min_{y \in\spn \{ v_1, \dots , v_{n-k+1}\}} \frac{\langle D^{1/2}y , \bar A D^{1/2}y \rangle }{\|D^{1/2}y\|_2^2} \\
   & = \min_{y \in\spn \{ v_1, \dots,  v_{n-k+1}\}} \frac{\langle y , A y \rangle }{\|D^{1/2}y\|_2^2} && \text{since $A = D^{1/2} \bar A D^{1/2}$} \\
   & = O \left(\frac{1}{d}\right)  \min_{y \in\spn \{ v_1, \dots,  v_{n-k+1}\}} \frac{\langle y , A y \rangle }{\|y\|_2^2}  && \text{by the model assumption} \\
   & = O \left(\frac{1}{d}\right) \lambda_{n-k+1}(A) \\
   & \geq -  O\left(\sqrt{\frac{\lambda}{d}}\right) && \text{by \Cref{lemma:small_eval}}. 
\end{align*}
\paragraph{Property \ref{item:lambdatop}.} We will apply \Cref{lemma:B3} to show that $\bar A$ has at least $k-1$ eigenvectors with eigenvalues $-1/(k-1) \pm O(\sqrt{\lambda/d)}$.

\begin{claim}\label{claim:D12x_close_to_evect}
    Let $x \in \spn(\{\1_{C_i}\}_{i \in [k]})$ be a vector such that $\langle x, \1\rangle = 0$.  Then 
    $$\left \| \left(\bar A +\frac{1}{k-1 }I\right)D^{1/2}x \right \|^2_2 \leq O\left(\frac{\lambda}{d}\right) \| D^{1/2}x\|^2_2.$$ 
\end{claim}
\begin{proof}
By the claim assumptions, we can write $x = \sum_{i}\beta_i \1_{C_i}$ for some $\beta_i \in \R$ such that $\sum_{i} \beta_i= 0$. Without loss of generality assume $\sum_{i}\beta_i^2 =1$. Let $i \in [k]$ and let $v \in C_i$. We have 
\begin{align*}
    \left(\bar AD^{1/2}x + \frac{1}{k-1}D^{1/2}x\right)_v & =     \left(D^{-1/2}Ax + \frac{1}{k-1}D^{1/2}x\right)_v \\
    & = \frac{1}{\sqrt{\deg(v)}}\sum_{u \in \nei_{G'}(v)}x_u + \frac{\sqrt{\deg(v)}}{k-1}x_v \\
    & = \frac{1}{\sqrt{\deg(v)}}\left(\sum_{j \in [k]}|\nei_{G'}(v)\cap C_j| \beta_j  + \frac{\deg(v)}{k-1}\beta_i \right). 
\end{align*}
 Recall the sets $C_{i,j}^+$ and $C_{i,j}^-$ from \Cref{def:degrees}. 
If $v \notin C_{i,j}^+ \cup C_{i,j}^-$ for all $j \in [k]$, then we can upper-bound the absolute value as 
\begin{equation}
    \begin{aligned}\label{eq:casenormal}
    \left| \left(\bar AD^{1/2}x + \frac{1}{k-1}D^{1/2}x\right)_v \right| & \leq \frac{1}{\sqrt{\deg(v)}}\left(\sum_{j \neq i} \frac{d}{k}\left(\beta_j + \sqrt{\frac{\lambda}{d}}|\beta_j|  \right)+ \frac{d}{k}\left( \beta_i  + \sqrt{\frac{\lambda}{d}}|\beta_i|\right)\right) \\
    & = \frac{1}{\sqrt{\deg(v)}} \frac{d}{k}\sqrt{\frac{\lambda}{d}}\sum_{j \in [k]} |\beta_j| \\
    & = O(\sqrt{\lambda}), 
\end{aligned}
\end{equation}
where the second transition used that $\sum_{j \in [k]}\beta_j = 0$ and the last transition used the assumptions that $\deg(v) = \Omega(d)$, $\sum_{j \in [k]} \beta_j^2 =1$ and $k = O(1)$. 
If instead $v \in C_{i,j}^+$ or $v \in C_{i,j}^-$ for some $j \neq i$, then we can upper-bound the absolute value as 
\begin{equation}\label{eq:caseabnormal}
    \begin{aligned}
     \left| \left(\bar AD^{1/2}x + \frac{1}{k-1}D^{1/2}x\right)_v \right| & \leq \frac{1}{\sqrt{\deg(v)}}\left(\deg(v) + \frac{\deg(v)}{k-1} \right) \leq 2\sqrt{d}. 
\end{aligned}
\end{equation}
Hence, 
\begin{align*}
\left\|\bar AD^{1/2}x + \frac{1}{k-1}D^{1/2}x \right\|^2_2 & =  \sum_{v \in V}\left(\left(\bar AD^{1/2}x + \frac{1}{k-1}D^{1/2}x\right)_v\right)^2 \\
    & \leq 4d \left|\bigcup_{i,j} (C_{i,j}^+ \cup C_{i,j}^-)\right| +  O(n \lambda) && \text{by \eqref{eq:casenormal} and \eqref{eq:caseabnormal}} \\
    &= O(n \lambda) && \text{by \Cref{lemma:degrees}.} 
\end{align*}
The claim now follows, since  $\| D^{1/2}x\|^2_2 = \Omega(dn)$ by the model assumption (\Cref{def:model}). 
\end{proof}

We now obtain the following claim. 
\begin{claim}\label{claim:C4} For every $\delta > 0$, let $S_{\delta}\subseteq [n]$ denote the set of indices for which $|\lambda_i(\bar A) + \frac{1}{k-1}| \leq \delta$. Then for every $x \in \spn(\{\1_{C_i}\}_{i \in [k]})$ with  $\langle x, \1\rangle = 0$, there exists a vector $x^* \in \spn\left( \{e_i(\bar A)\}_{i \in S_{\delta}}\right)$ such that 
$$ \|{D^{1/2}x}- x^*\|_2\leq  \frac{O\left ( \sqrt{\lambda/d }\right)}{\delta}\|D^{1/2}x\|_2. $$
\end{claim}
\begin{proof}
  Let $\bar u = D^{1/2}x / \|D^{1/2}x\|_2$. By \Cref{claim:D12x_close_to_evect}, 
$$ \left\| \bar A\bar u + \frac{1}{k-1}\bar u \right\|^2_2 = O\left( \frac{\lambda}{d}\right).$$
The claim now follows from \Cref{lemma:B3}. 
\end{proof}

Let $c$ be a sufficiently large constant, and apply \Cref{claim:C4} with $\delta = c \sqrt{\lambda/d}$. Note that the space  $\{D^{1/2}x : x \in \spn\{\1_{C_i}\} \text{ and } \langle x, \1\rangle = 0\}$ has dimension $k-1$. 
It therefore follows from  \Cref{claim:C4} that $|S_{\delta}| \geq k-1$. In particular, $\bar A$ has at least $k-1$ eigenvalues in the interval $[-1/(k-1) - O(\sqrt{\lambda/d}), -1/(k-1) +  O(\sqrt{\lambda/d})]$. From Property \ref{item:lambda1} and Property \ref{item:lambdamid},  these must be the bottom $k-1$ eigenvalues. 

\paragraph{Property \ref{item:projproperty}.} Follows by setting $\delta = (\lambda/d)^{1/4}$ in \Cref{claim:C4}. 
\end{proof}

\subsection{Proof of \Cref{lemma:B_bound}}\label{sec:good-vertices} 
In this section, we prove \Cref{lemma:B_bound}, which is restated below. 
\Bbound*
For the purpose of this section, it will be more convenient to work with $k$-dimensional embeddings where we embed each vertex into the bottom $k-1$ eigenvectors of $\bar A$ \emph{as well as} the top eigenvector.
\begin{definition}\label{def:Ustar}
Let $\Ustar \in \mathbb{R}^{n\times k}$ denote the matrix whose first $(k-1)$ columns are identical to the columns of $U_{[k-1]}$, and the $k^{th}$ column is the top eigenvector of $\bar A$. 
For every vertex $v \in V$, let $f^*_v \in \R^k$ be the $v$-th row of $\Ustar$, i.e., $f^*_v \coloneqq \Ustar^\top \1_v $. 
Furthermore, for every $i \in [k]$, define
$$ \mu_i^* \coloneqq \frac{1}{|C_i|}\sum_{v \in C_i}f_v^* = \frac{1}{|C_i|}\Ustar^\top \1_{C_i}$$
\end{definition}

With this setup, we define an auxiliary set of bad vertices, $B^*$, 
consisting of vertices that have ``bad degree" (denoted $B_{\deg}$), and vertices whose embedding $f_v^*$ has ``bad spectral properties" (denoted $B_{\mathrm{spec}}$). 

\begin{definition}[$B^*$] \label{def:Bstar} 
Let
$$ B_{\deg} \coloneqq \left\{ v \in V : \deg(v) \notin \left[ \frac{d(k-1)}{k}\left(1 - \sqrt{\frac{\lambda}{d}}\right) ,\frac{d(k-1)}{k}\left(1 + \sqrt{\frac{\lambda}{d}}\right)\right] \right\}$$ and let
$$B_{\mathrm{spec}} \coloneqq \left\{ v \in V : \| f^*_v - \mu^*_{\chi(v)}\|^2_2 >  \frac{0.001k}{n} \right\}.$$ Let
$$ B^* \coloneqq B_{\deg} \cup B_{\mathrm{spec}} . $$
\end{definition}

The proof of \Cref{lemma:B_bound} consists of two parts: In \Cref{lemma:B_in_Bstar} we show that $B\subseteq B^*$. In \Cref{lemma:Bstar_bd}, we show that $|B^*| \leq O\left( \sqrt{\frac{\lambda }{d}}n\right)$.
\begin{lemma}\label{lemma:B_in_Bstar}
    Let $B$ be as in \Cref{def:goodbad}, and let $B^*$ be as in \Cref{def:Bstar}. Then 
    $$ B \subseteq B^*.$$
\end{lemma}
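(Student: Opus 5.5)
The plan is to prove the contrapositive: every vertex $v\notin B^*$ is good in the sense of \Cref{def:goodbad}. Fix $v\in C_i$ with $v\notin B_{\deg}\cup B_{\mathrm{spec}}$. Property \ref{prop_goodbad_degree} is exactly the statement $v\notin B_{\deg}$, so it holds immediately.

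For property \ref{prop_goodbad_own}, I will use that $f_v$ and $\mu_i$ are the first $k-1$ coordinates of $f^*_v$ and $\mu^*_i$. Dropping a coordinate can only decrease a Euclidean norm, so
\[
\|f_v-\mu_i\|_2^2\le\|f^*_v-\mu^*_i\|_2^2\le \frac{0.001k}{n}\le\frac{0.05k}{n}.
\]

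For property \ref{prop_goodbad_other}, fix $j\neq i$. By the triangle inequality,
\[
\|f_v-\mu_j\|_2\ge\|\mu_i-\mu_j\|_2-\|f_v-\mu_i\|_2\ge\|\mu_i-\mu_j\|_2-\sqrt{0.001k/n}.
\]
It therefore suffices to show $\|\mu_i-\mu_j\|_2^2\ge (2-o(1))\,k/n$. Then $\|f_v-\mu_j\|_2\ge(\sqrt{2}-o(1)-\sqrt{0.001})\sqrt{k/n}\ge\sqrt{k/n}$, using that $\lambda/d\le c_k$ is sufficiently small.

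Set $x=\1_{C_i}-\1_{C_j}$. Then $\langle x,\1\rangle=0$, $\|x\|_2^2=2n/k$, and $\mu_i-\mu_j=\frac{k}{n}\U^\top x$. The lower bound on $\|\U^\top x\|_2$ proceeds in two steps.
\begin{enumerate}
\item Property \ref{item:projproperty} of \Cref{lemma:C4} gives $x^*$ in the bottom eigenspace with $\|D^{1/2}x-x^*\|_2\le O((\lambda/d)^{1/4})\|D^{1/2}x\|_2$. Since $\U\U^\top$ is the orthogonal projection onto that space,
\[
\|\U^\top D^{1/2}x\|_2\ge\|x^*\|_2-\|D^{1/2}x-x^*\|_2\ge(1-O((\lambda/d)^{1/4}))\|D^{1/2}x\|_2.
\]
\item We must pass from $D^{1/2}x$ back to $x$, since the embedding uses $x$ directly. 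This is the main obstacle. Let $s=\sqrt{d(k-1)/k}$ and write $D^{1/2}x=s\,x+(D^{1/2}-sI)x$. I will show $\|(D^{1/2}-sI)x\|_2^2\le O(\sqrt{\lambda/d})\,s^2\|x\|_2^2$.
\begin{itemize}
\item On vertices outside $B_{\deg}$, each coordinate of $(D^{1/2}-sI)x$ is at most $s\sqrt{\lambda/d}$ in absolute value.
\item Vertices in $B_{\deg}$ have $\deg=O(d)$ by \Cref{cor:m}, so each contributes $O(d)$ to the squared norm. Their number is $O(\lambda n/d)$: any vertex whose degree is off by more than the allowed window must lie in some $C_{i',j'}^{\pm}$, so \Cref{lemma:degrees} applies. (Constants may need slight adjustment for this containment, but any $O(\sqrt{\lambda/d}\,n)$ bound suffices.)
\end{itemize}
Combined, this gives $O(s^2\lambda n/d)=O(\sqrt{\lambda/d})\,s^2\|x\|_2^2$. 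The same estimate also gives $\|D^{1/2}x\|_2\ge(1-o(1))\,s\|x\|_2$.
\end{enumerate}

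Combining the two steps,
\[
s\|\U^\top x\|_2\ge\|\U^\top D^{1/2}x\|_2-\|(D^{1/2}-sI)x\|_2\ge(1-o(1))\,s\|x\|_2,
\]
where $o(1)$ denotes a function of $\lambda/d$ that tends to $0$. Hence $\|\mu_i-\mu_j\|_2^2\ge(k/n)^2(1-o(1))\,2n/k=(2-o(1))\,k/n$, which completes the argument.
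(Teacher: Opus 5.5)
Your proof is correct, but for property (3) it takes a different route from the paper; properties (1) and (2) are handled exactly as the paper does.

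For property (3), the paper stays in the $k$-dimensional embedding $f^*_v$, which includes the top eigenvector:
\begin{itemize}
\item it lower-bounds $\|\mu^*_i-\mu^*_j\|_2^2\ge 2k/n-O(\sqrt{\lambda/d}/n)$ using the near-orthogonality of the centers (\Cref{lemma:mu_inner_prods});
\item it applies the triangle inequality to $f^*_v$;
\item it then needs \Cref{claim:remove_e1} to show that discarding the $e_1$-coordinate of $f^*_v-\mu^*_j$ costs only $O(\sqrt{\lambda/d}/n)$, and that step uses $v\notin B_{\deg}$ together with \Cref{lemma:mu_to_Dmu}.
\end{itemize}
You instead work directly in the $(k-1)$-dimensional space with the difference vector $\1_{C_i}-\1_{C_j}$. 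This vector is already orthogonal to $\1$, so property (2) of \Cref{lemma:C4} applies to it directly and no top-eigenvector correction is needed. Your degree-concentration estimate converting $D^{1/2}x$ to $s\,x$ plays the role of \Cref{lemma:mu_to_Dmu}.

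Your route is shorter for this lemma. It also shows that property (3) follows from the spectral condition on $v$ alone: you never use the degree of $v$ itself, only the global count of degree-atypical vertices. The paper's detour is cheap in context, because \Cref{lemma:mu_inner_prods} and the $k$-dimensional embedding are needed anyway. The counting bound on $|B_{\mathrm{spec}}|$ (\Cref{claim:f_v-mu}) works with $\1_{C_i}$ itself, which is not orthogonal to $\1$. \Cref{lemma:goodnorm} uses the bound on $\|\mu^*_i\|_2$.

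Two minor remarks. First, the containment $B_{\deg}\subseteq\bigcup_{i,j}(C^+_{i,j}\cup C^-_{i,j})$ holds exactly by pigeonhole (this is \Cref{lemma:b_deg}), so no adjustment of constants is needed. Second, your coordinatewise estimate $|\sqrt{\deg(u)}-s|\le s\sqrt{\lambda/d}$ actually yields $\|(D^{1/2}-sI)x\|_2^2=O(\lambda/d)\,s^2\|x\|_2^2$, which is stronger than the $O(\sqrt{\lambda/d})$ you state.
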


\begin{lemma}\label{lemma:Bstar_bd}
Let $B^*$ be as in \Cref{def:Bstar}. Then 
$$|B^*| \leq  O\left( \sqrt{\frac{\lambda }{d}}n\right). $$
\end{lemma}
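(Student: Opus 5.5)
The plan is to bound $B_{\deg}$ and $B_{\mathrm{spec}}$ separately. $B_{\deg}$ is handled by a combinatorial argument via \Cref{lemma:degrees}. $B_{\mathrm{spec}}$ is handled by a Frobenius-norm (subspace-angle) argument followed by Markov's inequality.

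\textbf{Step 1: degree-bad vertices.} Let $v\in C_i$ with $\deg(v) > \frac{d(k-1)}{k}(1+\sqrt{\lambda/d})$. Since $\deg(v)=\sum_{j\neq i}|\nei_{G'}(v)\cap C_j|$ is a sum of $k-1$ terms, some $j\ne i$ has $|\nei_{G'}(v)\cap C_j|\ge \frac dk(1+\sqrt{\lambda/d})$, so $v\in C^+_{i,j}$. Symmetrically, a too-small degree puts $v$ in some $C^-_{i,j}$. Hence $B_{\deg}\subseteq\bigcup_{i\neq j}(C^+_{i,j}\cup C^-_{i,j})$. By \Cref{lemma:degrees} this set has size $O(k^2\lambda n/d)=O(\sqrt{\lambda/d}\,n)$.

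\textbf{Step 2: reduce $B_{\mathrm{spec}}$ to a subspace distance.} Let $Z=\sum_i \frac{1}{|C_i|}\1_{C_i}\1_{C_i}^\top$ be the orthogonal projection onto $W:=\spn\{\1_{C_i}\}$. The $v$-th row of $Z\Ustar$ is $\mu^*_{\chi(v)}$, so
\[
\sum_{v}\|f^*_v-\mu^*_{\chi(v)}\|_2^2=\|(I-Z)\Ustar\|_F^2=\|(I-Z)\Pi\|_F^2 ,
\]
where $\Pi=\Ustar\Ustar^\top$. Both $Z$ and $\Pi$ are rank-$k$ orthogonal projections, so $\|(I-Z)\Pi\|_F^2=k-\mathrm{tr}(Z\Pi)=\|(I-\Pi)Z\|_F^2$. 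It therefore suffices to show that every unit vector $y\in W$ satisfies $\|(I-\Pi)y\|_2^2=O(\sqrt{\lambda/d})$. Summing over an orthonormal basis of $W$ then gives $\|(I-\Pi)Z\|_F^2=O(\sqrt{\lambda/d})$, since $k=O(1)$. Markov's inequality then bounds the number of vertices with $\|f^*_v-\mu^*_{\chi(v)}\|_2^2>0.001k/n$ by $O(\sqrt{\lambda/d})\,n/(0.001k)=O(\sqrt{\lambda/d}\,n)$.

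\textbf{Step 3: unit vectors of $W$ are close to $\mathrm{range}(\Pi)$.} Decompose $y=a\1/\sqrt n+x$ with $x\in W$ and $\langle x,\1\rangle=0$. I would first show that $D^{1/2}y$ is close to $\mathrm{range}(\Pi)$. The component $D^{1/2}\1$ is exactly $\sqrt{2m}\,e_1\in\mathrm{range}(\Pi)$ (\Cref{obs:e1}). For $D^{1/2}x$, \Cref{lemma:C4}\ref{item:C_4_uv} gives squared distance $O(\sqrt{\lambda/d})\|D^{1/2}x\|_2^2$ to the bottom-$(k-1)$ eigenspace. Hence $\|(I-\Pi)D^{1/2}y\|_2^2=O(\sqrt{\lambda/d})\,d$, using $\deg=\Theta(d)$.

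\textbf{Step 4 (main obstacle): transfer from the $D^{1/2}$-weighted vector to $y$ itself.} With $\bar d=\frac{d(k-1)}{k}$, I would show $\|D^{1/2}y-\sqrt{\bar d}\,y\|_2^2=O(\sqrt{\lambda/d})\,\bar d$, splitting the vertices into two groups.
\begin{itemize}
\item On vertices outside $B_{\deg}$, $|\sqrt{\deg(v)}-\sqrt{\bar d}|\le O(\sqrt{\lambda/d})\sqrt{\bar d}$, so these contribute $O(\lambda/d)\,\bar d$.
\item Each vertex of $B_{\deg}$ contributes at most $O(d)\,y_v^2$. Since $y$ is constant on classes of size $n/k$, we have $y_v^2\le k/n$, and $|B_{\deg}|=O(\lambda n/d)$ by Step 1. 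These therefore contribute $O(\lambda)=O(\lambda/d)\,d$.
\end{itemize}
Combining with Step 3, $\sqrt{\bar d}\,\|(I-\Pi)y\|_2\le\|(I-\Pi)D^{1/2}y\|_2+\|D^{1/2}y-\sqrt{\bar d}\,y\|_2=O((\lambda/d)^{1/4})\sqrt d$. This gives $\|(I-\Pi)y\|_2^2=O(\sqrt{\lambda/d})$, which closes Step 2. Together with Step 1, $|B^*|\le|B_{\deg}|+|B_{\mathrm{spec}}|=O(\sqrt{\lambda/d}\,n)$.
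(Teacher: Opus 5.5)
Your proposal is correct and follows essentially the same route as the paper. Step 1 is \Cref{lemma:b_deg}, the trace identity in Step 2 is \Cref{claim:f_v-mu} in Frobenius-norm form, and Steps 3--4 reproduce \Cref{lemma:projectnorm} together with the degree-transfer estimate of \Cref{lemma:mu_to_Dmu}, followed by the same Markov argument as in \Cref{lemma:Bspect}; the only difference, which does not matter, is that your treatment of the good-degree vertices in Step 4 gives a transfer error of order $(\lambda/d)\,d$ rather than the paper's $\sqrt{\lambda/d}\,d$.
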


\begin{proof}[Proof of \Cref{lemma:B_bound}]
    Immediate from \Cref{lemma:B_in_Bstar} and \Cref{lemma:Bstar_bd}. 
\end{proof}
The rest of this subsection is structured as follows: In \Cref{sec:mu-s}, we prove two useful properties of $\Ustar$ and of the $\mu_i^*$'s, which we will use for both \Cref{lemma:B_in_Bstar} and \Cref{lemma:Bstar_bd}. We prove \Cref{lemma:B_in_Bstar} in  \Cref{sec:B_in_Bstar} and \Cref{lemma:Bstar_bd} in \Cref{sec:Bstar_bd}

\subsubsection{Useful spectral properties}\label{sec:mu-s}
\begin{lemma}\label{lemma:mu_to_Dmu}
    For every $i \in [k]$, it holds that
    $$ \left|\left( D^{1/2} - \sqrt{\frac{d(k-1)}{k}}I_n\right)\frac{\1_{C_i}}{|C_i|} \right\|_2^2 \leq  O\left( \frac{\sqrt{d\lambda}}{n}\right).$$
\end{lemma}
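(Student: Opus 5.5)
The quantity on the left is a sum over the vertices of $C_i$ only. Writing $d_0 \coloneqq \frac{d(k-1)}{k}$, we have
\[
\left\| \left( D^{1/2} - \sqrt{d_0}\, I_n\right)\frac{\1_{C_i}}{|C_i|} \right\|_2^2 = \frac{1}{|C_i|^2}\sum_{v \in C_i}\left(\sqrt{\deg(v)} - \sqrt{d_0}\right)^2 = \frac{k^2}{n^2}\sum_{v \in C_i}\left(\sqrt{\deg(v)} - \sqrt{d_0}\right)^2 .
\]
It therefore suffices to show $\sum_{v \in C_i}\bigl(\sqrt{\deg(v)}-\sqrt{d_0}\bigr)^2 = O(n\sqrt{d\lambda})$. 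I would split $C_i$ into typical vertices and atypical vertices, using the sets $C_{i,j}^+$ and $C_{i,j}^-$ of \Cref{def:degrees}.

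\textbf{Typical vertices.} Take $v \in C_i \setminus \bigcup_{j \neq i}(C_{i,j}^+ \cup C_{i,j}^-)$. For each $j \neq i$ we have $|\nei_{G'}(v)\cap C_j| = \frac{d}{k}\bigl(1 \pm \sqrt{\lambda/d}\bigr)$. Summing over the $k-1$ classes gives $\deg(v) = d_0\bigl(1 \pm \sqrt{\lambda/d}\bigr)$, so $|\deg(v) - d_0| \le d_0\sqrt{\lambda/d} = O(\sqrt{d\lambda})$. Using
\[
\left(\sqrt{a}-\sqrt{b}\right)^2 = \frac{(a-b)^2}{\left(\sqrt a+\sqrt b\right)^2} \le \frac{(a-b)^2}{b},
\]
each such vertex contributes at most $O(d\lambda)/d_0 = O(\lambda)$. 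Summed over at most $n/k$ vertices, this is $O(n\lambda)$, which is $O(n\sqrt{d\lambda})$ because $\lambda \le d$.

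\textbf{Atypical vertices.} By \Cref{lemma:degrees}, the union of the sets $C_{i,j}^\pm$ over $j \neq i$ has size $O(k\lambda n/d) = O(\lambda n/d)$. Every vertex has $\deg(v) \le d$ and $\deg(v) = \Theta(d)$ by \Cref{cor:m}, so each contributes at most $\bigl(\sqrt{\deg(v)}-\sqrt{d_0}\bigr)^2 \le \max\{\deg(v), d_0\} \le d$. Their total contribution is $O(\lambda n)$, again $O(n\sqrt{d\lambda})$.

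Adding the two parts gives $\sum_{v\in C_i}\bigl(\sqrt{\deg(v)}-\sqrt{d_0}\bigr)^2 = O(n\lambda)$. Multiplying by $k^2/n^2$ gives $O(\lambda/n)$, which is at most $O(\sqrt{d\lambda}/n)$. This proves the lemma, and in fact gives the stronger bound $O(\lambda/n)$.

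I do not expect a serious obstacle. The only points that need care are two bookkeeping steps. First, the deviation bound for a typical vertex must sum to $d_0\sqrt{\lambda/d}$ over the $k-1$ other classes. Second, the crude per-vertex bound of $d$ must be applied only to the $O(\lambda n/d)$ atypical vertices, so that their total stays $O(\lambda n)$.
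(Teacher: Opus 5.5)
Your proof is correct and follows essentially the same route as the paper. Both arguments split $C_i$ into vertices whose degree lies in the band $\frac{d(k-1)}{k}\bigl(1\pm\sqrt{\lambda/d}\bigr)$ and the $O(\lambda n/d)$ exceptional vertices controlled by \Cref{lemma:degrees}, each exceptional vertex contributing at most $d$. The only difference is on the typical vertices: the paper uses $(\sqrt{a}-\sqrt{b})^2\le |a-b|$, giving $O(\sqrt{d\lambda})$ per vertex and exactly the stated $O(\sqrt{d\lambda}/n)$, while your $(\sqrt{a}-\sqrt{b})^2\le (a-b)^2/b$ gives $O(\lambda)$ per vertex and the sharper $O(\lambda/n)$.
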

\begin{proof}
Fix $i \in [k]$. Recall the sets $C_{i,j}^-$ and $C_{i,j}^+$ from \Cref{def:degrees}. Let
$C_i^+ \coloneqq \{v \in C_i: \deg(v) \geq \frac{d(k-1)}{k}(1+\sqrt{\lambda/d})\}$ and let $C_i^- \coloneqq \{v \in C_i : \deg(v) \leq \frac{d(k-1)}{k}(1-\sqrt{\lambda/d})\}$. Then $C_i^+ \subseteq \cup_j C_{i,j}^+$ and   $C_i^- \subseteq \cup_j C_{i,j}^-$, so by \Cref{lemma:degrees}, $|C_i^+|, |C_i^-| = O\left( n \lambda/d \right)$. Therefore,  
    \begin{align*}
    \left|\left( D^{1/2} - \sqrt{\frac{d(k-1)}{k}}I_n\right)\frac{\1_{C_i}}{|C_i|} \right\|_2^2  & = \frac{1}{|C_i|^2}\sum_{v \in C_i} \left(\sqrt{\deg(v)} - \sqrt{{\frac{d(k-1)}{k}}} \right)^2 \\
    & \leq \frac{1}{|C_i|^2}\sum_{v \in C_i}\left|\deg(v) - \frac{d(k-1)}{k} \right| \\
    & \leq  \frac{d}{|C_i|^2}\left( |C_i^+| +   |C_i^-| \right) + \frac{1}{|C_i|}\frac{d(k-1)}{k}\sqrt{\frac{\lambda}{d}} \\
    & =  O\left( \frac{\sqrt{d\lambda}}{n}\right). 
    \end{align*}
\end{proof}

A useful property of the $k$-dimensional color class centers $\mu_i^*$ is that they are nearly orthogonal: 
\begin{lemma}\label{lemma:mu_inner_prods}
    For every $i \in [k]$, 
    $$ \left| \| \mu_i^*\|^2_2 - \frac{1}{|C_i|}\right| \leq O\left(\sqrt{\frac{\lambda}{d}}\frac{1}{n} \right).$$
    For every $i ,j \in [k]$ with $i \neq j$, 
 $$\left| \langle \mu_i^*, \mu_j^* \rangle\right| \leq  O\left(\sqrt{\frac{\lambda}{d}}\frac{1}{n}  \right). $$
\end{lemma}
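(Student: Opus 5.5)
The plan is to read $\langle \mu_i^*, \mu_j^*\rangle$ as a bilinear form in the projector onto the $k$-dimensional space spanned by the columns of $\Ustar$. Then we compare that projector's action on $\1_{C_i}/|C_i|$ with its action on the rescaled vector $\sqrt{d(k-1)/k}\,\1_{C_i}/|C_i|$, which the projector nearly fixes.

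Write $P \coloneqq \Ustar\Ustar^\top$, the orthogonal projection onto $\spn(e_1, e_{n-k+2},\dots,e_n)$. By definition,
\[
\langle \mu_i^*,\mu_j^*\rangle = \frac{\1_{C_i}^\top P \1_{C_j}}{|C_i||C_j|}.
\]
Set $a \coloneqq \sqrt{d(k-1)/k}$ and $y_i \coloneqq D^{1/2}\1_{C_i}/(a|C_i|)$. By \Cref{lemma:mu_to_Dmu},
\[
\Bigl\|y_i - \frac{\1_{C_i}}{|C_i|}\Bigr\|_2^2 \le O\!\left(\frac{\sqrt{d\lambda}}{a^2 n}\right) = O\!\left(\sqrt{\frac{\lambda}{d}}\,\frac{1}{n}\right).
\]
Call this error $\eta$, so $\eta^2 = O(\sqrt{\lambda/d}/n)$.

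The next step is to show that $y_i$ lies almost entirely in the range of $P$. Decompose $\1_{C_i} = \frac{|C_i|}{n}\1 + x_i$ with $x_i \coloneqq \1_{C_i} - \frac1k \1$, so $\langle x_i,\1\rangle = 0$ and $x_i$ lies in the span of the class indicators. The component $D^{1/2}\1$ is exactly a multiple of $e_1$ by \Cref{obs:e1}, so $P$ fixes it. For $D^{1/2}x_i$, property \ref{item:projproperty} of \Cref{lemma:C4} gives a vector $x_i^*$ in the span of the bottom $k-1$ eigenvectors with
\[
\|D^{1/2}x_i - x_i^*\|_2^2 = O\bigl(\sqrt{\lambda/d}\bigr)\,\|D^{1/2}x_i\|_2^2 .
\]
Since $\|D^{1/2}x_i\|_2^2 = O(dn)$ and we divide by $a^2|C_i|^2 = \Theta(dn^2)$, this yields
\[
\|(I-P)y_i\|_2^2 = O\bigl(\sqrt{\lambda/d}/n\bigr).
\]

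Combining the two estimates with $\|I-P\| \le 1$ gives
\[
\Bigl\|(I-P)\frac{\1_{C_i}}{|C_i|}\Bigr\|_2 \le \|(I-P)y_i\|_2 + \eta = O\!\left(\Bigl(\sqrt{\lambda/d}/n\Bigr)^{1/2}\right).
\]
Therefore
\[
\frac{\1_{C_i}^\top P\1_{C_j}}{|C_i||C_j|} = \frac{\langle \1_{C_i},\1_{C_j}\rangle}{|C_i||C_j|} - \frac{\1_{C_i}^\top (I-P)\1_{C_j}}{|C_i||C_j|},
\]
and by Cauchy--Schwarz the last term has absolute value at most the product of the two $(I-P)$ norms, which is $O(\sqrt{\lambda/d}/n)$. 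The first term is exactly $1/|C_i|$ when $i=j$ and $0$ when $i\ne j$, which gives both claims.

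The main obstacle is the bookkeeping in the second step. Property \ref{item:projproperty} controls $\|D^{1/2}x - x^*\|^2$ only relative to $\|D^{1/2}x\|^2$, so we must check that after normalizing by $a|C_i|$ this really becomes $O(\sqrt{\lambda/d}/n)$ rather than something larger. That requires $\|D^{1/2}x_i\|^2 = \Theta(dn)$, which follows from $\deg = \Theta(d)$ (\Cref{cor:m}). It also requires that the $e_1$ component is handled exactly, and it is, since $D^{1/2}\1 \propto e_1$. A slicker route is to bypass the split and note that $D^{1/2}\1_{C_i}$ is a combination of $D^{1/2}\1$ and $D^{1/2}x_i$, both of which are nearly in the range of $P$ with controlled relative error.
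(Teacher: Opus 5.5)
Your proof is correct and follows essentially the same route as the paper. You bound $\|(I-\Ustar\Ustar^\top)\1_{C_i}/|C_i|\|_2^2$ by combining \Cref{lemma:mu_to_Dmu}, the exact handling of the $e_1$-component via \Cref{obs:e1}, and property \ref{item:projproperty} of \Cref{lemma:C4}, then conclude by Cauchy--Schwarz; the only cosmetic difference is that you apply \Cref{lemma:C4} once to $x_i = \1_{C_i} - \frac{1}{k}\1$ rather than to an orthonormal basis of the mean-zero part of $\spn\{\1_{C_j}\}$, which slightly streamlines the bookkeeping.
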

\begin{proof}
     We will show that for every $i,j \in [k]$, 
$$ \left| \langle \mu_i^*, \mu_j^*\rangle - \left\langle  \frac{1}{|C_i|} \1_{C_i},\frac{1}{|C_j} \1_{C_j} \right \rangle \right| \leq  O\left( \sqrt{\frac{\lambda}{d}}\frac{1}{n} \right). $$
The lemma then follows immediately. 

Let $i,j \in [k]$. By definition of $\mu_i^*, \mu_j^*$ (\Cref{def:Ustar}),   
\begin{align*}
    \langle \mu_i^*, \mu_j^*\rangle - \left\langle  \frac{1}{|C_i|} \1_{C_i},\frac{1}{|C_j|} \1_{C_j} \right \rangle & =  \frac{1}{|C_i||C_j|} \1_{C_i}^\top \Ustar \Ustar^\top \1_{C_j} - \frac{1}{|C_i||C_j|}   \1_{C_i}^\top \1_{C_j} \\
    & = -  \frac{1}{|C_i||C_j|}  \1_{C_i}^\top\left( I - \Ustar \Ustar^\top \right)^\top \left( I - \Ustar \Ustar^\top \right) \1_{C_j}. 
\end{align*}
where the last equality uses that $\left( I - \Ustar \Ustar^\top \right) =   \left( I - \Ustar \Ustar^\top \right)^\top \left( I - \Ustar \Ustar^\top \right) $. 
Therefore, by the Cauchy-Schwarz inequality, 
$$  \left| \langle \mu_i^*, \mu_j^*\rangle - \left\langle  \frac{1}{|C_i|} \1_{C_i},\frac{1}{|C_j|} \1_{C_j} \right \rangle \right|  \leq \left\|\left( I - \Ustar \Ustar^\top \right)\frac{1}{|C_i|}\1_{C_i} \right\|_2 \left\|\left( I - \Ustar \Ustar^\top \right)\frac{1}{|C_j|}\1_{C_j} \right\|_2  $$
The proof is completed by showing the following claim: 
\begin{claim}\label{lemma:projectnorm}
    For every $i \in [k]$, 
    $$\left\|\left( I_n - \Ustar \Ustar^\top \right)\frac{1}{|C_i|}\1_{C_i} \right\|^2_2\leq O\left( \sqrt{\frac{\lambda}{d}}\frac{1}{n}\right)$$
\end{claim}
\begin{proof}
Fix $i \in [k]$. Let $x_1 \coloneqq \1/\sqrt{n}$, and pick vectors $x_2, \dots x_k \in \R^n$ such that $x_1, x_2,\dots x_k$ is an orthonormal basis for $\spn\{\1_{C_j}\}_{j \in [k]}$. 
Expressing  $\frac{1}{|C_i|}\1_{C_i} $ in terms of this basis gives
$$  \frac{1}{|C_i|}\1_{C_i} = \sum_{j = 1}^k \beta_j x_j  $$ for some  $\beta_j \in \R$ such that $\sum_j \beta_j^2 = \|\1_{C_i}/ |C_i|\|^2_2 = k/n $. 
Since $\langle x_j, \1\rangle = 0$ for $j=2, \dots k$, we can apply \Cref{lemma:C4} to obtain vectors $x_2^*, \dots x_k^*\in \spn\{e_n, \dots e_{n-k+2}\}$ such that $\|D^{1/2}x_j - x^*_j\|^2_2 = O(\sqrt{\lambda/d})\|D^{1/2}x_j\|_2^2 = O(\sqrt{\lambda d})$ for all $j\in [2,k]$. 
Additionally, let $x_1^* =D^{1/2}x_1$.  Then $x_1^* \in \spn(e_1)$ and $\|D^{1/2}x_1 - x_1^*\|_2 = 0$. Let 
$$\delta \coloneqq \sum_{j=1}^k\beta_j (D^{1/2}x_j - x_j^*). $$
Then, using the assumption that $k = O(1)$, we get 
$$\|\delta\|^2_2 = O\left(\sum_{j}\beta_j^2\|D^{1/2}x_j - x^*_j\|^2_2  \right) =  O\left(\frac{\sqrt{ \lambda d}}{n}\right)$$
and
$$  \frac{1}{|C_i|}D^{1/2}\1_{C_i} = \delta + \sum_{j = 1}^k \beta_j x_j^*. $$

\noindent 
In particular, $ \frac{1}{|C_i|}D^{1/2}\1_{C_i} - \delta \in \spn\{e_1, e_n, e_{n-1}, \dots e_{n-k+2}\}$. 
This can equivalently be written as
\begin{equation*}\label{eq:C_i_expression2}
\frac{1}{|C_i|}D^{1/2}\1_{C_i} - \delta =\Ustar \Ustar^\top \left(\frac{1}{|C_i|}D^{1/2}\1_{C_i} - \delta\right).
\end{equation*}
Rearranging gives 
$$\left( I_n - \Ustar \Ustar^\top\right)\frac{1}{|C_i|}D^{1/2}\1_{C_i} = \left( I_n - \Ustar \Ustar^\top\right)\delta, $$ so 
\begin{equation}\label{eq:norm_term1}
     \left\|\left( I_n - \Ustar \Ustar^\top\right)\frac{1}{|C_i|}D^{1/2}\1_{C_i} \right\|^2_2=   \left\|\left( I_n - \Ustar \Ustar^\top\right)\delta \right\|^2_2 \leq \| \delta\|^2_2 \leq O\left( \frac{\sqrt{ \lambda d}}{n}\right). 
\end{equation}
 Finally, by \Cref{lemma:mu_to_Dmu}, 
\begin{align*}
  \sqrt{\frac{d(k-1)}{k}} \left \| \left( I_n - \Ustar \Ustar^\top\right)\frac{1}{|C_i|}\1_{C_i} \right\|_2 & \leq   \left\| \left( D^{1/2} - \sqrt{\frac{d(k-1)}{k}}I_n\right)\frac{\1_{C_i}}{|C_i|} \right\|_2  +  \left\|\left( I_n - \Ustar \Ustar^\top\right)\frac{1}{|C_i|}D^{1/2}\1_{C_i} \right\|_2 \\
  & \leq O\left( \frac{\sqrt[4]{\lambda d}}{{\sqrt n}}\right) 
\end{align*}
Here the first transition uses the triangle inequality and $ I_n - \Ustar \Ustar^\top \preceq I_n $, and the second transition used \Cref{lemma:mu_to_Dmu} and \Cref{eq:norm_term1}. Rearranging gives the claim. 
\end{proof}
\end{proof}

\subsubsection{Proof of \Cref{lemma:B_in_Bstar}}\label{sec:B_in_Bstar}
 \begin{proof}[Proof of \Cref{lemma:B_in_Bstar}]
     We will show that $V \setminus B^* \subseteq V \setminus B$. Suppose that $v \in V \setminus B^*$. Then $v \notin B_{\deg}$, so Property \ref{prop_goodbad_degree} in \Cref{def:goodbad} holds. We now verify that the last two properties in \Cref{def:goodbad} hold.
     \paragraph{Property \ref{prop_goodbad_own}.} We have 
     \begin{align*}
         \|f_v - \mu_{\chi(v)}\|^2_2 & \leq    \|f^*_v - \mu^*_{\chi(v)}\|^2_2 && \text{since $\U \U^{\top} \preceq \Ustar \Ustar^\top$} \\
         & \leq  \frac{0.001k}{n} && \text{by \Cref{def:Bstar}} \\
         & < \frac{0.05k}{n}. 
     \end{align*}
    \paragraph{Property \ref{prop_goodbad_other}.} Let $i$ be the index such that $v \in C_i$, and let  $j \neq i$. By the triangle inequality, 
    \begin{equation}\label{eq:fstar_bound}
    \begin{aligned}
         \left \| f^*_v - \mu^*_j\right \|_2 & \geq \left \|\mu^*_j - \mu^*_i \right \|_2 - \left \| f^*_v - \mu^*_i \right \|_2 \\
         & = \sqrt{\|\mu^*_j\|^2_2 + \|\mu^*_i\|^2_2 - 2 \langle \mu^*_i , \mu^*_j\rangle } - \left  \| f^*_v - \mu^*_i \right \|_2 \\
         & \geq \sqrt{\frac{1} {|C_i|} + \frac{1}{|C_j|} - O\left(\sqrt{ \frac{\lambda}{d}}\frac{1}{n}\right)}- \sqrt{\frac{0.001k}{n}}  && \text{by \Cref{lemma:mu_inner_prods} and \Cref{def:Bstar}} \\
         & \geq \sqrt{\frac{2k}{n}} - O\left(\left(\frac{\lambda}{d}\right)^{1/4}\frac{1}{\sqrt n} \right)-  \sqrt{\frac{0.001k}{n}}  \\
         & \geq \sqrt{\frac{k}{n}}\left( 1.38 - O\left(\left(\frac{\lambda}{d}\right)^{1/4}\right)\right). 
    \end{aligned}
    \end{equation}
To continue, we bound the difference $ \| f^*_v - \mu^*_j\|^2_2 -  \|f_v  - \mu_j\|^2_2  $
\begin{claim}\label{claim:remove_e1}

$$\| f^*_v - \mu^*_j\|^2_2 - \|f_v  - \mu_j\|^2_2  \leq O\left( \frac{1}{n}\sqrt{\frac{\lambda}{d}}\right).$$
\end{claim}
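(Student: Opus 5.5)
Since the first $k-1$ coordinates of $f^*_v$ coincide with $f_v$ and the $k$-th coordinate is $\langle e_1, \1_v\rangle$, the difference of the two squared distances is exactly the square of the difference in the top-eigenvector coordinate. Concretely, $e_1 = D^{1/2}\1/\sqrt{2m}$ by \Cref{obs:e1}, so $(f^*_v)_k = \sqrt{\deg(v)}/\sqrt{2m}$. Averaging over $C_j$ gives $(\mu^*_j)_k = \frac{1}{|C_j|}\sum_{w\in C_j}\sqrt{\deg(w)}/\sqrt{2m}$. Hence
\[
\|f^*_v-\mu^*_j\|_2^2-\|f_v-\mu_j\|_2^2=\frac{1}{2m}\Big(\sqrt{\deg(v)}-\frac{1}{|C_j|}\sum_{w\in C_j}\sqrt{\deg(w)}\Big)^2 ,
\]
and it suffices to show that the bracket is $O\big((\lambda d)^{1/4}\big)$, because $m=\Omega(nd)$ by \Cref{cor:m}.

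I bound the bracket by comparing both terms to $\sqrt{d(k-1)/k}$. First, since $v\notin B_{\deg}$ (we are inside the proof of \Cref{lemma:B_in_Bstar}, where $v\notin B^*$), we have $|\deg(v)-d(k-1)/k|\le d\sqrt{\lambda/d}=\sqrt{\lambda d}$. Using $|\sqrt a-\sqrt b|\le \sqrt{|a-b|}$, this gives $|\sqrt{\deg(v)}-\sqrt{d(k-1)/k}| \le (\lambda d)^{1/4}$.

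For the average, I write $\frac{1}{|C_j|}\sum_{w\in C_j}\sqrt{\deg(w)}-\sqrt{d(k-1)/k} = \langle \1_{C_j}, (D^{1/2}-\sqrt{d(k-1)/k}\,I)\1_{C_j}/|C_j|\rangle$. By Cauchy--Schwarz this is at most $\|\1_{C_j}\|_2$ times the norm controlled in \Cref{lemma:mu_to_Dmu}, i.e.
\[
\sqrt{n/k}\cdot O\big((d\lambda)^{1/4}/\sqrt n\big)=O\big((\lambda d)^{1/4}\big).
\]
Combining the two estimates, the bracket is $O((\lambda d)^{1/4})$, so its square is $O(\sqrt{\lambda d})$. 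Dividing by $2m=\Omega(nd)$ yields $O\big(\frac1n\sqrt{\lambda/d}\big)$, which is the claim.

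The main obstacle is the averaged term: a naive pointwise bound fails because vertices in $C_j$ with atypical degree can have degree up to $d$. This is handled by averaging, and the work is already done in \Cref{lemma:mu_to_Dmu}, which accounts for the $O(n\lambda/d)$ atypical vertices. I would double-check that Cauchy--Schwarz does not lose too much there; if it did, I would instead bound the average directly by splitting $C_j$ into typical vertices and the sets from \Cref{lemma:degrees}. That split gives contribution $\sqrt d\cdot O(\lambda/d)+ (\lambda d)^{1/4}$, which is also fine.
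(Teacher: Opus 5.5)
Your proposal is correct and matches the paper's proof essentially step for step. The difference is reduced to $\frac{1}{2m}\langle \1_v - \frac{1}{|C_j|}\1_{C_j}, D^{1/2}\1\rangle^2$, $\sqrt{\deg(v)}$ is compared to $\sqrt{d(k-1)/k}$ using $v\notin B_{\deg}$, and the class average is bounded by Cauchy--Schwarz together with \Cref{lemma:mu_to_Dmu}. The Cauchy--Schwarz step loses nothing, so your fallback splitting argument is not needed, though it is also valid since $\lambda/\sqrt{d}\le(\lambda d)^{1/4}$ whenever $\lambda\le d$.
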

\begin{proof}
From the definition of $f_v$ and $f_v^*$ (\Cref{def:spec-emb} and \Cref{def:Ustar}), we have 
\begin{equation}\label{eq:diff}
    \| f^*_v - \mu^*_j\|^2_2 - \|f_v  - \mu_j\|^2_2 = \left \langle \1_v - \frac{1}{|C_j|}\1_{C_j}, e_1 \right \rangle^2.
\end{equation}
By \Cref{cor:m},  
\begin{equation}\label{eq:v_and_D}
    \left \langle \1_v - \frac{1}{|C_j|}\1_{C_j}, e_1 \right \rangle^2  = \frac{1}{2m} \left \langle \1_v - \frac{1}{|C_j|}\1_{C_j}, D^{1/2}\1\right \rangle^2 = O\left( \frac 1{dn} \right)\left \langle \1_v - \frac{1}{|C_j|}\1_{C_j}, D^{1/2}\1\right \rangle^2. 
\end{equation}
By the assumption that $v \notin B^*$, 
\begin{equation}\label{eq:v_term}
\begin{aligned}
   \left| \langle \1_v, D^{1/2}\1\rangle - \sqrt{\frac{d(k-1)}{k}} \right| = \left| \sqrt{\deg(v)} -  \sqrt{\frac{d(k-1)}{k}} \right| \leq O \left( (\lambda d)^{1/4}\right).
\end{aligned}
\end{equation}
where the last transition used that $v \in V \setminus B_{\deg}$. Furthermore, 
\begin{equation}
\begin{aligned}\label{D_term}
    \left| \left \langle \frac{1}{|C_j|}\1_{C_j}, D^{1/2}\1\right \rangle - \sqrt{\frac{d(k-1)}{k}} \right| & = \left| \left \langle \frac{1}{|C_j|}\1_{C_j}, D^{1/2}\1_{C_j}\right  \rangle -  \sqrt{\frac{d(k-1)}{k}} \right|\\
    &= \left| \left \langle \frac{1}{|C_j|}\1_{C_j},\left( D^{1/2} - \sqrt{\frac{d(k-1)}{k}}I \right)\1_{C_j}\right  \rangle  \right|\\
    & \leq \left \| \1_{C_j}\right\|_2\left\| \left( D^{1/2} - \sqrt{\frac{d(k-1)}{k}}I \right)\frac{\1_{C_j}}{|C_j|} \right\|_2  \\
    & \leq O\left((\lambda d)^{1/4}\right) && \text{by \Cref{lemma:mu_to_Dmu}.}
\end{aligned}
\end{equation}
Combining \Cref{eq:diff}, \Cref{eq:v_and_D}, \Cref{eq:v_term} and  \Cref{D_term} gives the claim. 
\end{proof}
We now have 
\begin{align*}
    \|f_v  - \mu_j\|^2_2 & =  \| f^*_v - \mu^*_j\|^2_2 -\left(   \| f^*_v - \mu^*_j\|^2_2- \|f_v  - \mu_j\|^2_2  \right) \\
    & \geq\frac{k}{n}\left(  1.38^2 - O\left( \left(\frac{\lambda}{d}\right)^{1/4}\right) \right)  && \text{by \Cref{eq:fstar_bound} and \Cref{claim:remove_e1}}\\
    &  \geq \frac{k}{n} && \text{for $\lambda/d$ sufficiently small.}    \\
\end{align*}

\end{proof}
 
\subsubsection{Proof of \Cref{lemma:Bstar_bd}}\label{sec:Bstar_bd}
   In this section, we bound the size of $|B^*|$.  We start by bounding the size of $B_{\deg}$. 

\begin{lemma}\label{lemma:b_deg}
    $$\left|B_{\deg}\right| \leq O\left( \frac{\lambda n}{d}\right).$$
\end{lemma}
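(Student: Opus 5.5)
The plan is to reduce the degree condition to the per-class neighbor counts controlled by \Cref{lemma:degrees}. For a vertex $v \in C_i$, its degree in $G'$ decomposes as
$$\deg(v) = \sum_{j \neq i} |\nei_{G'}(v) \cap C_j|,$$
since all monochromatic edges have been deleted. If $v \notin C_{i,j}^+ \cup C_{i,j}^-$ for every $j \neq i$, then each of the $k-1$ summands lies strictly between $\frac{d}{k}\bigl(1 - \sqrt{\lambda/d}\bigr)$ and $\frac{d}{k}\bigl(1+\sqrt{\lambda/d}\bigr)$. Summing over the $k-1$ other color classes gives
$$\frac{d(k-1)}{k}\Bigl(1-\sqrt{\tfrac{\lambda}{d}}\Bigr) < \deg(v) < \frac{d(k-1)}{k}\Bigl(1+\sqrt{\tfrac{\lambda}{d}}\Bigr),$$
so $v \notin B_{\deg}$.

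Taking the contrapositive yields the containment
$$B_{\deg} \subseteq \bigcup_{i \neq j} \bigl(C_{i,j}^+ \cup C_{i,j}^-\bigr).$$
This is the same observation already used in the proof of \Cref{lemma:mu_to_Dmu} for the sets $C_i^\pm$.

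We then apply a union bound over the $k(k-1)$ ordered pairs $(i,j)$. By \Cref{lemma:degrees}, each $C_{i,j}^\pm$ has size $O(\lambda n/d)$. Since $k = O(1)$, this gives
$$|B_{\deg}| \le 2k(k-1)\cdot O\Bigl(\frac{\lambda n}{d}\Bigr) = O\Bigl(\frac{\lambda n}{d}\Bigr).$$

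There is no real obstacle in this argument. The only points to check are that the boundary cases (equalities) in \Cref{def:degrees} and \Cref{def:Bstar} line up, which they do since both use non-strict inequalities in the bad direction, and that every neighbor of $v$ in $G'$ lies outside $C_i$.
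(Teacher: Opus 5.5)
Your proof is correct and follows the paper's argument. The paper shows by pigeonhole that a vertex of too small (resp.\ too large) degree must lie in some $C_{i,j}^-$ (resp.\ $C_{i,j}^+$), which is the contrapositive of your summation step, and then concludes with \Cref{lemma:degrees} and $k=O(1)$; your boundary remark is slightly off, since $B_{\deg}$ is defined by $\deg(v)$ lying outside a closed interval (strict inequalities in the bad direction), but this is harmless because your summation already gives strict inequalities.
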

\begin{proof}
    Let $v \in C_i$, and suppose $\deg(v) < \frac{d(k-1)}{k} \left( 1 - \sqrt{\lambda/d}\right)$. Then there exists $j \neq i $ such that $|\nei_{G'}(v)\cap C_j| <d/k\left( 1 - \sqrt{\lambda/d}\right),$ and hence $v$ belongs to the set $C_{i,j}^-$ (see \Cref{def:degrees}). 
    Similarly, if  $\deg(v) > \frac{d(k-1)}{k} \left( 1 + \sqrt{\lambda/d}\right)$, then $v \in C_{i,j}^+$  for some $j$. 
    
    Hence, 
    $ B_{\deg} \subseteq \bigcup_{i,j}\left(C_{i,j}^+  \cup C_{i,j}^-\right)$. By \Cref{lemma:degrees}, this implies 
    $$ \left| B_{\deg} \right| \leq O\left( \frac{\lambda n}{d}\right).$$
\end{proof}
\noindent 
Next, we bound the size of $B_{\mathrm{spec}}$. 
\begin{lemma}\label{lemma:Bspect}
     $$\left|B_{\mathrm{spec}}\right| \leq O\left( \sqrt{\frac{\lambda}{d}}n\right).$$
\end{lemma}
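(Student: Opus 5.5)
The plan is to bound the total squared deviation $\sum_{v\in V}\|f^*_v-\mu^*_{\chi(v)}\|_2^2$ and then apply Markov's inequality. Since every $v\in B_{\mathrm{spec}}$ contributes more than $0.001k/n$, it suffices to show
\[
\sum_{v\in V}\|f^*_v-\mu^*_{\chi(v)}\|_2^2 \le O\left(\sqrt{\frac{\lambda}{d}}\right).
\]
Writing $P\coloneqq\Ustar\Ustar^\top$ and letting $\Pi$ be the orthogonal projection onto $W\coloneqq\spn\{\1_{C_i}\}_{i\in[k]}$, we have $\mu^*_{\chi(v)}=\Ustar^\top\Pi\1_v$. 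Hence the sum equals $\|\Ustar^\top(I-\Pi)\|_F^2=\mathrm{tr}(P(I-\Pi))=k-\mathrm{tr}(P\Pi)$. Note that $\mathrm{tr}(P)=k$ and $\mathrm{rank}(\Pi)=k$.

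It therefore remains to show $\mathrm{tr}(P\Pi)\ge k-O(\sqrt{\lambda/d})$. Take the orthonormal basis $x_1=\1/\sqrt n,x_2,\dots,x_k$ of $W$ used in the proof of \Cref{lemma:projectnorm}. Then
\[
\mathrm{tr}(P\Pi)=\sum_{j=1}^k \|Px_j\|_2^2 = k-\sum_{j=1}^k\|(I-P)x_j\|_2^2 .
\]
So the goal reduces to showing $\|(I-P)x_j\|_2^2=O(\sqrt{\lambda/d})$ for each $j$. This follows by exactly the argument of \Cref{lemma:projectnorm}, applied to the unit vectors $x_j$ rather than to $\1_{C_i}/|C_i|$.

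Concretely, \Cref{lemma:C4} \ref{item:C_4_uv} gives $x_j^*$ in the bottom eigenspace with $\|D^{1/2}x_j-x_j^*\|_2^2=O(\sqrt{\lambda/d})\|D^{1/2}x_j\|_2^2=O(\sqrt{\lambda d})$. For $j=1$ we can take $x_1^*=D^{1/2}x_1\propto e_1$ exactly. Hence $\|(I-P)D^{1/2}x_j\|_2^2=O(\sqrt{\lambda d})$. To pass from $D^{1/2}x_j$ back to $x_j$, I would use that $x_j$ is constant on color classes with $|x_j(v)|=O(1/\sqrt n)$. An argument like \Cref{lemma:mu_to_Dmu} then gives
\[
\left\|\left(D^{1/2}-\sqrt{\tfrac{d(k-1)}{k}}I\right)x_j\right\|_2^2\le \frac{O(1)}{n}\sum_{v}\left|\deg(v)-\tfrac{d(k-1)}{k}\right|=O(\sqrt{\lambda d}),
\]
by \Cref{lemma:degrees}. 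Combining the two bounds via the triangle inequality and dividing by $\sqrt{d(k-1)/k}$ yields $\|(I-P)x_j\|_2^2=O(\sqrt{\lambda/d})$.

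Summing over the $k=O(1)$ indices gives $\sum_v\|f_v^*-\mu^*_{\chi(v)}\|_2^2=O(\sqrt{\lambda/d})$. Markov's inequality then gives $|B_{\mathrm{spec}}|\le \frac{n}{0.001k}\cdot O(\sqrt{\lambda/d})=O(\sqrt{\lambda/d}\,n)$.

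The main obstacle is the identity $\sum_v\|f^*_v-\mu^*_{\chi(v)}\|^2=\mathrm{tr}(P(I-\Pi))$. One must check that $\mu^*_{\chi(v)}=\Ustar^\top\Pi\1_v$, which holds because $\Pi\1_v=\1_{C_{\chi(v)}}/|C_{\chi(v)}|$. One must also keep the rank-$k$ bookkeeping straight, including the top eigenvector. The rest is a reuse of \Cref{lemma:projectnorm}.
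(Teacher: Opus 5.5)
Your proof is correct and is essentially the paper's argument. The paper's \Cref{claim:f_v-mu} computes the same quantity $k-\mathrm{tr}(\Ustar\Ustar^\top\Pi)$, written in the orthonormal basis $\1_{C_i}/\sqrt{|C_i|}$ as $k-\sum_i|C_i|\|\mu_i^*\|_2^2$, bounds the deficit through \Cref{lemma:mu_inner_prods} and \Cref{lemma:projectnorm} (which use the same \Cref{lemma:C4} and degree estimates you invoke), and then applies Markov, so your trace formulation and choice of basis $x_1,\dots,x_k$ only streamline the bookkeeping and skip the Cauchy--Schwarz step.
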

\begin{proof}
We start by bounding the sum $\sum_{v \in V}\| f^*_v - \mu^*_{\chi(v)}\|^2_2$.  
\begin{claim}\label{claim:f_v-mu}
$$ \sum_{v \in V} \|f^*_v - \mu^*_{\chi(v)}\|^2_2 = \sum_{i =1}^k |C_i| \left(\frac{1}{|C_i|} - \|\mu^*_i\|^2_2 \right)$$
\end{claim}
\begin{proof}
Let $J \coloneqq \{1\} \cup \{n-k+2, \dots n\}$ denote the indices of the eigenvectors forming the columns of  $\Ustar$. Then for every $z \in \R^n$, 
$$ \| \Ustar^\top z\|^2_2 = \sum_{j \in J}e_j^\top z z^\top e_j.$$
For every vertex $v \in V$, let $z_v \coloneqq \1_v - \frac{1}{|C_{\chi(v)}|}\1_{C_\chi(v)} \in \R^n$. Then, 
\begin{equation}\label{eq:f-mu}
     \begin{aligned}
         \sum_{v \in V} \|f^*_v - \mu^*_{\chi(v)}\|^2_2 &= \sum_{v \in  V} \left\| \Ustar^\top \left(\1_v - \frac{1}{|C_{\chi(v)}|}\1_{C_{\chi(v)}}\right)\right\|^2_2 \\
&= \sum_{v \in  V} \left\| \Ustar^\top z_v\right\|^2_2         \\  
         & = \sum_{v \in V}         \sum_{j \in J} e_j^\top z_vz_v ^\top e_j   \\
         & = \sum_{j \in J}e_j^\top\left(\sum_{v \in V}z_v z_v^\top\right)e_j. 
    \end{aligned}
\end{equation}
\noindent
For every $v \in V$, it holds that
\begin{equation}\label{eq:zvzv}
\begin{aligned}
 z_vz_v^\top & = \left( \1_v - \frac{1}{|C_{\chi(v)}|}\1_{C_{\chi(v)}}\right) \left( \1_v - \frac{1}{|C_{\chi(v)}|}\1_{C_{\chi(v)}}\right)^\top \\
 & = \1_v \1_v ^\top -  \frac{1}{|C_{\chi(v)}|}\1_{C_{\chi(v)}}\1_v^\top - \frac{1}{|C_{\chi(v)}|}\1_v \1_{C_{\chi(v)}}^\top +  \frac{1}{|C_{\chi(v)}|^2}\1_{C_{\chi(v)}} \1_{C_{\chi(v)}}^\top .
\end{aligned}
\end{equation}
Note that for every color class $C_i$, 
$$ \sum_{v \in C_i} \1_{C_i}\1_v^\top= \1_{C_i}\1_{C_i}^\top=  \sum_{v \in C_i}\1_v \1_{C_i}^\top , $$
so fixing a color class $i \in [k]$ and summing \Cref{eq:zvzv} over all $v \in C_i$ gives  
\begin{equation}\label{eq:sumzvzv}
    \begin{aligned}
    \sum_{v \in C_i}z_v z_v^\top&  = \sum_{v \in C_i}\left(\1_v \1_v ^\top -  \frac{1}{|C_i|}\1_{C_i}\1_v^\top - \frac{1}{|C_i|}\1_v \1_{C_i}^\top +  \frac{1}{|C_i|^2}\1_{C_i}  \1_{C_i}^\top\right) \\
    & = \sum_{v \in C_i}\1_v \1_v ^\top  -\frac{2}{|C_i|}\1_{C_i}  \1_{C_i}^\top + |C_i|\frac{1}{|C_i|^2}\1_{C_i}  \1_{C_i}^\top \\
    & =  \sum_{v \in C_i}\1_v \1_v ^\top - \frac{1}{|C_i|}\1_{C_i}  \1_{C_i}^\top. 
\end{aligned}
\end{equation}
Summing \Cref{eq:sumzvzv} over all $i \in [k]$ gives 
\begin{align*}
     \sum_{v \in V}z_v z_v^\top & =  \sum_{v \in V}\1_v \1_v ^\top - \sum_{i \in [k]}\frac{1}{|C_i|}\1_{C_i}  \1_{C_i}^\top = I_n - \sum_{i \in [k]}\frac{1}{|C_i|}\1_{C_i}  \1_{C_i}^\top .
\end{align*}
Substituting back into \Cref{eq:f-mu} gives 
\begin{align*}
     \sum_{v \in V} \|f^*_v - \mu^*_{\chi(v)}\|^2_2 & =  \sum_{j \in J}e_j^\top\left(\sum_{v \in V}z_v z_v^\top\right)e_j \\
     & =  \sum_{j \in J}e_j^\top\left(I_n - \sum_{i \in [k]}\frac{1}{|C_i|}\1_{C_i}  \1_{C_i}^\top\right)e_j \\
    & = \sum_{j \in J}\|e_j\|^2_2 - \sum_{j \in J}e_j^\top \left(\sum_{i \in [k]}\frac{1}{|C_i|}\1_{C_i}  \1_{C_i}^\top\right)e_j \\
    & = k - \sum_{i \in [k]}\frac{1}{|C_i|} \sum_{j \in J}e_j^\top \1_{C_i}  \1_{C_i}^\top e_j \\
    & = k -\sum_{i \in [k]}|C_i| \left\| \Ustar^\top \frac{1}{|C_i|}\1_{C_i}  \right\|^2_2 \\
    & = k  - \sum_{i \in [k]}|C_i| \|\mu_i^*\|^2_2 \\
    & = \sum_{i \in [k]}|C_i|\left( \frac{1}{|C_i|}- \|\mu_i^*\|^2_2  \right).  
\end{align*}
\end{proof}
\noindent 
By \Cref{claim:f_v-mu} and \Cref{lemma:mu_inner_prods} we now have 
\begin{equation}\label{eq:sum_bd}
   \sum_{v \in B_{\mathrm{spec}}}\|f^*_v - \mu^*_{\chi(v)}\|^2_2 \leq  \sum_{v \in V} \|f^*_v - \mu^*_{\chi(v)}\|^2_2 = \sum_{i =1}^k |C_i| \left(\frac{1}{|C_i|} - \|\mu^*_i\|^2_2 \right) \leq \sum_{i=1}^k |C_i|\cdot O\left( \sqrt{\frac{\lambda}{d}}\frac{1}{n} \right) = O\left( \sqrt{\frac{\lambda}{d}}\right).
\end{equation}
On the other hand, by \Cref{def:Bstar}, 
$$ \sum_{v \in B_{\mathrm{spec}}}\|f^*_v - \mu^*_{\chi(v)}\|^2_2  \geq |B_{\mathrm{spec}}|\cdot \frac{0.001k}{n} = |B_{\mathrm{spec}}|\cdot  \Omega\left(\frac{1}{n} \right).$$
Combining the above two equations and rearranging gives 
$ |B_{\mathrm{spec}}| = O\left( \sqrt{\frac{\lambda}{d}}n\right).$
\end{proof}
We now obtain \Cref{lemma:Bstar_bd}. 
\begin{proof}[Proof of \Cref{lemma:Bstar_bd}]
Immediate from \Cref{def:Bstar}, \Cref{lemma:b_deg} and \Cref{lemma:Bspect}.
\end{proof}

\subsection{Proof of \Cref{lemma:goodnorm}}\label{sec:goodnorm}
In this section, we prove \Cref{lemma:goodnorm}, which is restated below. 
\goodnorm*
\begin{proof}
From \Cref{def:mu} and \Cref{def:Ustar}, it holds that $\| \mu_{\chi(v)}\|_2 \leq \| \mu_{\chi(v)}^*\|_2$. Therefore, 
\begin{align*}
    \|f_v\|_2 & \leq  \| f_v-\mu_{\chi(v)} \|_2+\|\mu_{\chi(v)}\|_2  && \text{by the triangle inequality} \\
    & \leq \sqrt{\frac{0.05k}{n}}  + \sqrt{\frac{1}{|C_{\chi(v)}|}\left(1 + O\left( \sqrt{\frac{\lambda}{d}}\right)\right)} && \text{by \Cref{def:goodbad} and \Cref{lemma:mu_inner_prods}} \\
    & =  O\left(\frac{1}{\sqrt n} \right) && \text{by the assumption that $\lambda/d = O(1)$}.
\end{align*}
\end{proof}
\section{Proof of \Cref{lemma:col-count}}\label{sec:col-count}
In this section, we prove \Cref{lemma:col-count}, which is restated below for the convenience of the reader. 
\colcount*
\begin{proof}
    Let $r \geq \frac{5}{\xi^2 \cdot \zeta}(\|p\|_2^2  +\|q\|_2^2 )^{-1/2}$  be the number of samples. Let $X_1, X_2, \dots X_r \sim p $ and $Y_1, Y_2, \dots Y_r \sim q$ be the samples, and for every $i \in [r]$, $v \in [n]$, let 
    \begin{align*}
    X_{i,v} &= \mathbbm{1} \{X_i = v\} \\
    Y_{i,v} &= \mathbbm{1} \{Y_i = v\} 
    \end{align*}
  be the indicators that the $i$th samples from $p$ and $q$, respectively, are equal to $v$. Then $\mathbb{E}[X_{i,v}] = p(v)$ and $\mathbb{E}[Y_{i,v}] = q(v)$ for all $i \in [r],v \in [n]$. 
    Let 
    $$ Z = \frac{1}{r^2} \sum_{v \in [n]}w_v\sum_{i,j \in [r]} X_{i,v}Y_{j,v}.$$
    Then 
    $$ \mathbb{E}[Z] = \frac{1}{r^2} \sum_{v \in [n]} \sum_{i,j}w_vp(v)q(v) = \sum_{v \in [n]} w_vp(v)q(v) = p^\top Wq.$$
    We now compute the variance of $Z$. We have
     \begin{align*}
    Z^2 & = \left( \frac{1}{r^2}\sum_{v \in [n]}w_v\sum_{i,j \in [r]} X_{i,v}Y_{j,v} \right) \left(\frac{1}{r^2}  \sum_{u \in [n]}w_u\sum_{i',j' \in [r]} X_{i',u}Y_{j',u}\right)  \\
    & = \frac{1}{r^4}\sum_{v \in [n]} w_v^2\sum_{i,j}X_{i,v}^2Y_{j,v}^2 
    + \frac{1}{r^4}\sum_{v \in [n]}w_v^2 \sum_{i,j} \sum_{i' \neq i}X_{i,v}X_{i',v}Y_{j,v}^2  
    + \frac{1}{r^4}\sum_{v \in [n]}w_v^2 \sum_{i,j} \sum_{j' \neq j}X_{i,v}^2Y_{j,v}Y_{j',v}  \\
    & \qquad +  \frac{1}{r^4}\sum_{v \in [n]} w_v^2\sum_{i,j} \sum_{i' \neq i, j' \neq j}X_{i,v}X_{i',v}Y_{j,v} Y_{j',v}  + \frac{1}{r^4} \sum_{v} \sum_{u \neq v} w_vw_u\sum_{i,j} \sum_{i' \neq i, j' \neq j}X_{i,u}X_{i',v}Y_{j,u}Y_{j',v},
    \end{align*}
    where we are using the fact that $X_{i,u}X_{i,v} = 0$ and $Y_{i,u}Y_{i,v} = 0$ whenever $u \neq v$. Taking the expectation and using the independence between different trials, we obtain 
     \begin{align*}
     \mathbb{E}[Z^2] &= \frac{1}{r^4}r^2\sum_{v \in [n]}w_v^2 p(v)q(v)+ \frac{1}{r^4}r^2(r-1)\sum_{v \in [n]}w_v^2 p(v)^2q(v)+ \frac{1}{r^4}r^2(r-1)\sum_{v \in [n]}w_v^2p(v)q(v)^2 \\
     &  \qquad + \frac{1}{r^4}r^2(r-1)^2 \sum_{v \in [n]}w_v^2 p(v)^2q(v)^2 + \frac{1}{r^4}r^2(r-1)^2 \sum_{v \in [n]} \sum_{u \neq v}w_u w_v p(v)p(u)q(v)q(u) \\
    &  \leq \frac{1}{r^2}\sum_{v \in [n]}w_v^2  p(v) q(v) + \frac{1}{r}\sum_{v \in [n]}w_v^2 p(v)^2q(v) +  \frac{1}{r}\sum_{v \in [n]}w_v^2p(v)q(v)^2  \\
    & \qquad +\sum_{v \in [n]}w_v^2p(v)^2q(v)^2 +   \sum_{v \in [n]} \sum_{u \neq v} w_uw_vp(v)p(u)q(v)q(u) \\
    & = \frac{1}{r^2} p^\top W^2 q  + \frac{1}{r}\sum_{v \in [n]}w_v^2 p(v)^2q(v) +  \frac{1}{r}\sum_{v \in [n]}w_v^2 p(v)q(v)^2 + (p^\top W q)^2 \\
   &  \leq \frac{w_{\max}^2}{r^2} p^\top  q + \frac{w_{\max}^2}{r}\|p+q\|^3_3 + (p^\top W q)^2 \\
   & \leq  \frac{w_{\max}^2}{r^2} p^\top  q + \frac{w_{\max}^2}{r} \|p + q\|_2^{3} + (p^\top W q)^2 \\
   & \leq  \frac{w_{\max}^2}{r^2} (\|p\|_2^2  +\|q\|_2^2) + \frac{4w_{\max}^2}{r}(\|p\|_2^2 + \|q\|_2^2)^{3/2} + (p^\top W q)^2.
     \end{align*}
Thus, 
\begin{align*}
    \Var[Z] & = \mathbb{E}[Z^2] - \mathbb{E}[Z]^2  \leq  \frac{w_{\max}^2}{r^2} (\| p\|_2^2  +\| q\|_2^2) + \frac{4w_{\max}^2}{r}(\|p\|_2^2 + \|q\|_2^2)^{3/2} + (p^\top Wq)^2 - (p^\top W q)^2 \\
   &  =   \frac{w_{\max}^2}{r^2} (\|p\|_2^2  +\|q\|_2^2) + \frac{4w_{\max}^2}{r}(\|p\|_2^2 + \|q\|_2^2)^{3/2}. 
\end{align*}
By Chebyshev's inequality, 
\begin{align*}
&\Pr[|p^\top W  q - Z| \geq \xi \cdot w_{\max}(\|p\|_2^2  +\|q\|_2^2 ) ]  \\
& \leq \frac{\Var[Z]}{ \xi^2 w_{\max}^2(\|p\|_2^2  +\|q\|_2^2 )^2}  \\
& \leq \frac{ r^{-2} w_{\max}^2 (\|p\|_2^2  +\|q\|_2^2) +4r^{-1}w_{\max}^2 (\|p\|_2^2 + \|q\|_2^2)^{3/2} }{\xi^2w_{\max}^2  (\|p\|_2^2  +\|q\|_2^2 )^2} \\
& = \frac{1}{\xi^2}\left(r^{-2} (\|p\|_2^2  +\|q\|_2^2 )^{-1} + 4r^{-1}  (\|p\|_2^2  +\| q\|_2^2 )^{-1/2}\right) \\
& \leq \zeta \qquad \text{ for $r \geq \frac{5}{\xi^2 \cdot \zeta}(\|p\|_2^2  +\|q\|_2^2 )^{-1/2}$}. 
\end{align*}
\end{proof}

\end{document}